\documentclass[journal,draftcls,onecolumn,12pt,twoside]{IEEEtran}

\usepackage{times}
 \usepackage{amsmath}
 \usepackage{amssymb}
 \usepackage{amsthm}
 \usepackage{color}
  \usepackage{algorithm}
 \usepackage[noend]{algorithmic}
      \usepackage{graphicx}
\usepackage{subfig}
\usepackage{multirow}
 \usepackage{url}
\usepackage{cite}
\usepackage{bm}

\newtheorem{theorem}{Theorem}

\newtheorem{lemma}[theorem]{Lemma}

\newtheorem{defn}{Definition}
\newtheorem{example}{Example}

%

\normalsize

%
\ifCLASSINFOpdf
\else
\fi

\hyphenation{op-tical net-works semi-conduc-tor}

\begin{document}
%
\title{A New Design of  Binary MDS Array Codes with Asymptotically Weak-Optimal Repair}

\author{Hanxu Hou,~\IEEEmembership{Member,~IEEE}, Yunghsiang S. Han,~\IEEEmembership{Fellow,~IEEE}, Patrick P. C. Lee,~\IEEEmembership{Senior Member,~IEEE}, Yuchong Hu,~\IEEEmembership{Member,~IEEE}, and Hui Li,~\IEEEmembership{Member,~IEEE}
\thanks{This paper was presented in part in \cite{hanxu2017triple} at the IEEE International Symposium on Information Theory, Aachen, Germany, June 2017.
H. Hou is with the School of Electrical Engineering \& Intelligentization, Dongguan University of Technology and the Shenzhen Graduate School, Peking University~(E-mail: houhanxu@163.com).
Y. S. Han is with the School of Electrical Engineering \& Intelligentization, Dongguan University of Technology~(E-mail: yunghsiangh@gmail.com).
P. P. C. Lee is with Department of Computer Science and Engineering, The Chinese University of Hong Kong~(E-mail: pclee@cse.cuhk.edu.hk).
Y. Hu is with the School of Computer Science and Technology, Huazhong University of
Science and Technology (E-mail: yuchonghu@hust.edu.cn).
H. Li is with the Shenzhen Graduate School, Peking University~(E-mail: lih64@pkusz.edu.cn).
This work was partially supported by the National Natural Science Foundation of China (No. 61701115, 61671007, 61671001) and National Keystone R\&D Program of China (No. 2017YFB0803204, 2016YFB0800101). }
}

\markboth{IEEE Transactions on Information Theory}%
{Submitted paper}

\maketitle
\vspace{-1.8cm}
\begin{abstract}Binary maximum distance separable (MDS) array codes are a special class of erasure codes for distributed storage that not only provide fault tolerance with minimum storage redundancy but also achieve low computational complexity. They are constructed by encoding $k$ information columns into $r$ parity columns, in which each element in a column is a bit, such that any $k$ out of the $k+r$ columns suffice to recover all information bits. In addition to providing fault tolerance, it is critical to improve repair performance in practical applications. Specifically, if a single column fails, our goal is to minimize the repair bandwidth by downloading the least amount of bits from $d$ healthy columns, where $k\leq d\leq k+r-1$. If one column of an MDS code is failed, it is known that we need to download at least $1/(d-k+1)$ fraction of the data stored in each of $d$ healthy columns. If this lower bound is achieved for the repair of the failure column from accessing arbitrary $d$ healthy columns, we say that the MDS code has optimal repair. However, if such lower bound is only achieved by $d$ specific healthy columns, then we say the MDS code has weak-optimal repair.
In this paper, we propose two explicit constructions of binary MDS array codes with more parity columns (i.e., $r\geq 3$) that achieve asymptotically weak-optimal repair, where $k+1\leq d\leq k+\lfloor(r-1)/2\rfloor$, and ``asymptotic" means that the repair bandwidth achieves the minimum value asymptotically in $d$. Codes in the first construction have odd number of parity columns and asymptotically weak-optimal repair for any one information failure, while codes in the second construction have even number of parity columns and asymptotically weak-optimal repair for any one column failure. 
\end{abstract}
\begin{IEEEkeywords}
MDS code, binary MDS array code, optimal repair bandwidth.
\end{IEEEkeywords}

\IEEEpeerreviewmaketitle

\section{Introduction}
Modern distributed storage systems deploy erasure codes to maintain data
availability against failures of storage nodes.  Binary maximum distance
separable (MDS) array codes are a special class of erasure codes that achieve
fault tolerance with 
minimum storage redundancy and low computational complexity.  Specifically, a
binary array code is composed of $k+r$ columns with $L$ bits in each column.
Among the $k+r$ columns, $k$ \emph{information columns} store information bits
and $r$ \emph{parity columns} store parity bits. The $L$ bits in each
column are stored in the same storage node.
We refer to a disk as a column or a storage node interchangeably, and an
entry in the array as a bit.  When a node fails, the corresponding column of
the array code is considered as an \emph{erasure}.
A code is said to be MDS if any $k$ out of the $k+r$ columns suffice to
reconstruct all $k$ information columns. Hence, an MDS code can tolerate any $r$ failed
columns.  
Examples of binary MDS array codes include double-fault tolerance codes (i.e., $r=2$) such as X-code~\cite{xu1999x}, RDP codes~\cite{corbett2004row} and EVENODD codes~\cite{blaum1995evenodd}, and  triple-fault tolerance codes (i.e., $r=3$) such as STAR codes~\cite{huang2008star}, generalized RDP codes~\cite{blaum2006family}, and TIP codes \cite{zhang2015tip}.


When a node fails in a distributed storage system, one should repair the failed
node by downloading bits from $d$ healthy (helper) nodes, 
where $k\le d\le k+r-1$.  Minimizing the {\em repair bandwidth}, defined as
the amount of bits downloaded in the repair process, is critical to speed up
the repair operation and minimize the window of vulnerability, especially in
distributed storage in which network transfer is the bottleneck.
The repair problem was first formulated by Dimakis {\em et al.}
\cite{dimakis2010network} based on the concept of information flow graph.
It is shown in \cite{dimakis2010network} that the minimum repair bandwidth
subject to the minimum storage redundancy, also known as the
{\em minimum storage regenerating (MSR)} point, is given by:
\begin{equation}
\frac{dL}{d-k+1}.
\label{optimal_repair}
\end{equation}
If the lower bound in \eqref{optimal_repair} is achieved for the repair of the failure node from accessing arbitrary $d$ healthy nodes, we say that the MDS code has {\em optimal repair}. When such lower bound is only achieved by accessing $d$ {\em specific} healthy nodes instead of arbitrary $d$ healthy nodes, we then say the MDS code has  {\em weak-optimal repair}. Note that the above repair is exact repair, i.e., the content stored in the failed node is exactly reproduced in the new node. Another version of repair is functional repair, i.e., the recovered content and the failed content may be different. In functional repair, it is interesting to note that there may exist a helper node selection that can improve the lower bound in \eqref{optimal_repair} for some specific parameters $n,k,d$ \cite{Ahmad2017When} when the $d$ helper nodes are specified.
Although there are many constructions of MSR codes 
\cite{dimakis2010network,tamo2013zigzag,Ye2016Explicit} over a
sufficiently large finite field, the constructions of binary MDS array codes that
achieve the minimum repair bandwidth is not many \cite{Sasidharan2017An}.

A conventional approach for repairing a failed node is to download all the bits from any $k$ healthy columns to regenerate the bits in the failure column. As a result, the total amount of bits downloaded to repair a failure column is $k$ times of the failure bits. There have been many studies on reducing the repair bandwidth for a single failed
column in binary MDS array codes.  Some approaches minimize disk reads for RDP
codes \cite{xiang2010optimal}, EVENODD codes~\cite{wang2010rebuilding} and X-code~\cite{xu2014single} with $d=k+1$; however, their repair
bandwidth is sub-optimal and 50\% larger than the lower bpund in
\eqref{optimal_repair} when $d=k+1$.  MDR codes \cite{Wang2013MDR,Wang2016Two} and
ButterFly codes \cite{gad2013repair,pamies2016opening} are binary MDS array codes that achieve
optimal repair; however, they only provide double-fault tolerance (i.e.,
$r\!=\!2$).  

In this paper, we propose two explicit constructions of binary MDS array codes with more parity columns (i.e., $r\geq 3$) that achieve asymptotically weak-optimal repair,  where``asymptotic"  means that the repair bandwidth achieves the minimum value asymptotically in $d$.  The main contributions of this paper are as follows.
\begin{enumerate}
\item We exploit a quotient ring with cyclic structure and propose a new approach of designing binary MDS array codes with $r\geq 3$ parity columns based on the quotient ring. 
\item Two explicit constructions of binary MDS array codes with $r\geq 3$ parity columns are presented based on the proposed  approach. Codes in the first construction have odd number of parity columns that is formed by designing an encoding matrix and codes in the second construction have even number of parity columns that is formed by designing a check matrix.
\item Our constructions minimize the repair bandwidth by exploiting the proposed quotient ring and choosing the well-designed encoding matrix (parity matrix), such that the bits accessed in a repair operation intersect as much as possible.
\item  We show that the first construction of the proposed binary MDS array codes has asymptotically weak-optimal repair with $d=k+(r-1)/2$ for any single information column failure  when $d$ is sufficiently large. We also show that the second construction of the proposed binary MDS array codes has asymptotically weak-optimal repair with $d=k+r/2-1$ for any column failure when $d$ is large enough. 
\end{enumerate}
 
  
The repair bandwidth of most existing binary MDS array codes \cite{xu1999x,corbett2004row,blaum1995evenodd,blaum2006family,huang2008star} is sub-optimal. Some constructions \cite{Wang2013MDR,Wang2016Two,gad2013repair,pamies2016opening} of binary MDS array codes with optimal repair bandwidth only focus on double-fault tolerance (i.e., $r\!=\!2$). 
The key differences between the proposed codes and the existing binary MDS array codes are as follows. First, in contrast to existing constructions such as \cite{xu1999x,corbett2004row,blaum2006family,huang2008star}, the parity bits in parity columns (except the first parity column) of the proposed codes are generated by summing the bits that correspond to a specific polygonal lines in the array. Second, the row number of the array in the proposed codes is exponential in $k$. The two properties are essential for reducing the repair bandwidth. The difference between the double-fault tolerance optimal repair constructions \cite{Wang2013MDR,Wang2016Two,gad2013repair,pamies2016opening} and the proposed constructions is that a quotient ring with cyclic structure is employed in the proposed construction, while is not in \cite{Wang2013MDR,Wang2016Two,gad2013repair,pamies2016opening}. By exploiting the quotient ring, we can choose the well-designed encoding matrix (parity matrix) and achieve the weak-optimal repair bandwidth asymptotically with larger fault tolerance.

Unlike the other binary MDS array codes \cite{Wang2013MDR,Wang2016Two,gad2013repair,pamies2016opening} with optimal repair bandwidth, the proposed  codes with the first construction  have asymptotically weak-optimal repair bandwidth for recovering one information column failure. Note that the information failure column is recovered by accessing $d$ specific columns instead of any arbitrary set of $d$ healthy columns. Similarly, the proposed codes with the second construction have asymptotically weak-optimal repair bandwidth for recovering any one column failure (including both information column and parity column), and the failure column is also recovered by accessing $d$ specific healthy columns. 
Note that there exist binary MDS codes \cite{Sasidharan2017An} that are weak-optimal for any $(n,k,d)$; however, the techniques of the constructions of the MDS codes in \cite{Sasidharan2017An} and this paper are different. The construction in \cite{Sasidharan2017An} employed the coupling transformation over binary-vector MDS codes such that they have weak-optimal repair property. However, the proposed binary codes are constructed directly over a binary quotient ring. In this quotient ring, we select  the values of parameter $p$ to make the code to be MDS and determine  the values of parameter $\tau$ to make the code having asymptotic weak-optimal repair property.

Previous studies~\cite{shumregenerating,Hou2016BASIC} also exploit similar techniques (applying quotient rings) to reduce computational complexity of regenerating codes. In this work, we show that when $\tau$ (a parameter that will be found later) is large enough and satisfies some conditions, we can find some constructions of binary MDS array codes that can achieve the weak-optimal repair asymptotically. The ring in \cite{shumregenerating,Hou2016BASIC} can be viewed as a special case of the proposed ring with $\tau=1$.  Moreover, the main results between \cite{shumregenerating,Hou2016BASIC}  and this paper are different. It was shown in \cite{shumregenerating,Hou2016BASIC}  that the fundamental tradeoff curve between the storage and repair bandwidth of functional-repair regenerating codes is also achievable under a quotient ring, and the existing product-matrix construction of regenerating codes still works  under the quotient ring with less computational complexity. While in this paper, we use a more general ring to construct a new class of binary MDS array codes with asymptotically weak-optimal repair bandwidth by choosing the well-designed generator matrix or parity matrix. Even though the proposed binary MDS array codes and constructions of high data rate MSR codes~\cite {Li2015A,tamo2013zigzag,Ye2016Explicit,ye2017explicit,ye2017explicit1,goparaju2017minimum,tian2017generic,rashmi2017piggybacking} are all based on constructing generator matrices or parity matrices, the proposed codes are constructed over binary field and the  encoding matrices or parity matrices are designed  on the ring with a cyclic structure.


This paper is organized as follows. An approach of designing binary array codes with general number of parity columns is given in Section~\ref{sec:framework}. In Section~\ref{sec:construction},  two constructions of binary MDS array codes with more than three parity columns are presented. In Section~\ref{sec:MDS},  a sufficient condition of the MDS property condition for two constructions is presented. We also illustrate the  MDS property  of $r=3$ for the first construction and $r=4$ for the second construction in detail.
Two efficient repair algorithms are proposed in Section~\ref{sec:repair} for the proposed two constructions to recover single information column erasure and single column erasure respectively.  It is also shown that both constructions can achieve the weak-optimal repair bandwidth in \eqref{optimal_repair} asymptotically. 
We make conclusion and remark in Section~\ref{sec:discussions}.

\section{New Design of Binary MDS Array Codes}
\label{sec:framework}
Consider the  binary MDS array code with $k\geq 2$ information columns and $r\geq 3$ parity columns. Each column of this code stores $(p-1)\tau$ bits, where $p$ is a prime number such that 2 is a primitive element in the field $\mathbb{Z}_p$ and $\tau$ will be specified later.
Assume that a file of size $k(p-1)\tau$ denoted by information bits $s_{0,i},s_{1,i},\ldots, s_{(p-1)\tau-1,i}\in \mathbb{F}_2^{(p-1)\tau}$ for $i=1,2,\ldots,k$,  which are used to generate $r(p-1)\tau$ parity bits $s_{0,j},s_{1,j},\ldots, s_{(p-1)\tau-1,j}\in \mathbb{F}_2^{(p-1)\tau}$  for $j=k+1,k+2,\ldots,k+r$. The information bits $s_{0,i},s_{1,i},\ldots, s_{(p-1)\tau-1,i}$ are stored in information column $i$ (column $i$) for $i=1,2,\ldots,k$, and the $(p-1)\tau$ parity bits $s_{0,j},s_{1,j},\ldots, s_{(p-1)\tau-1,j}$ are stored in parity column $j-k$ (column $j$) for $j=k+1,k+2,\ldots,k+r$.

For $i=1,2,\ldots,k$ and $\mu=0,1,\ldots,\tau-1$, we define the \emph{extra bit} $s_{(p-1)\tau+\mu,i}$ associated with bits $s_{\mu,i},s_{\tau+\mu,i},\ldots,s_{(p-2)\tau+\mu,i}$ as
\begin{equation}
s_{(p-1)\tau+\mu,i} \triangleq  \sum_{\ell=0}^{p-2} s_{\ell\tau+\mu,i}.
\label{eq:check1}
\end{equation}
For example,
when $p=3$, $k=4$ and $\tau=4$, the extra bit of
$s_{\mu,i},s_{4+\mu,i}$ is
$$s_{8+\mu,i}=s_{\mu,i}+s_{4+\mu,i}.$$ 
For $j=k+1,k+2,\ldots,k+r$,  $\tau$ extra bits $s_{(p-1)\tau,j},s_{(p-1)\tau+1,j},\ldots,s_{p\tau-1,j}$ for parity column $j-k$ are added during the encoding procedure. It will be clear later that the extra bit $s_{(p-1)\tau+\mu,j}$ of parity column $j-k$ also satisfies~\eqref{eq:check1} for $j=k+1,k+2,\ldots,k+r$ and $\mu=0,1,\ldots,\tau-1$, if we replace $i$ with  $j$ in~\eqref{eq:check1}.

For $\ell=1,2,\ldots,k+r$, we represent $(p-1)\tau$ bits $s_{0,\ell},s_{1,\ell},\ldots,s_{(p-1)\tau-1,\ell}$ in column $\ell$, together with the $\tau$ extra bits $s_{(p-1)\tau,\ell},s_{(p-1)\tau+1,\ell},\ldots,s_{p\tau-1,\ell}$, by a polynomial $s_{\ell}(x)$ of degree at most $p\tau-1$  over the ring $\mathbb{F}_2[x]$, i.e.,
\begin{equation}
s_\ell(x)\triangleq s_{0,\ell}+s_{1,\ell}x+s_{2,\ell}x^2+\cdots+s_{p\tau-1,\ell}x^{p\tau-1}.
\label{eq:poly}
\end{equation}
The polynomial $s_{i}(x)$, which corresponds to information column $i$ for $i=1,2,\ldots,k$, is called a \emph{data polynomial}; the polynomial $s_{j}(x)$, which corresponds to parity column $j-k$ for $j=k+1,k+2,\ldots,k+r$, is called a \emph{coded polynomial}.
The $k$ data polynomials and $r$ coded polynomials can be arranged as a row vector
\begin{equation}
[s_1(x),s_2(x),\cdots,s_{k+r}(x)].
\label{eq:data_clm}
\end{equation}
\eqref{eq:data_clm} can be obtained by taking the product
\begin{equation}
[s_1(x),\cdots,s_{k+r}(x)]= [s_1(x),\cdots,s_{k}(x)] \cdot \mathbf{G}_{k\times (k+r)}\label{eq:encoding}
\end{equation}
with operations performed in $\mathbb{F}_2[x]/(1+x^{p\tau})$. The $k\times (k+r)$ \emph{generator matrix} $\mathbf{G}_{k\times (k+r)}$ is composed of the $k\times k$ identity matrix $\mathbf{I}_{k\times k}$ and a $k\times r$ \emph{encoding matrix} $\mathbf{P}_{k\times r}$ as
\begin{equation}
\label{generator-matrix}
\mathbf{G}_{k\times (k+r)}=
\begin{bmatrix}
\mathbf{I}_{k\times k} & \mathbf{P}_{k\times r}
\end{bmatrix}.
\end{equation}
The proposed code can also be described equivalently by a $r\times (k+r)$ \emph{check matrix} $\mathbf{H}_{r\times (k+r)}$. Given the row vector in \eqref{eq:data_clm}, we have
\begin{equation}
[s_1(x),s_2(x),\cdots,s_{k+r}(x)] \cdot \mathbf{H}_{r\times (k+r)}^T={\bm 0}.
\label{eq:check}
\end{equation}

Let $R_{p\tau}$ denote the quotient ring $\mathbb{F}_2[x]/(1+x^{p\tau})$. An element $a(x)$ in $R_{p\tau}$ can be represented by a polynomial of the form $a(x)=a_0 + a_1x + \cdots +a_{p\tau-1}x^{p\tau-1}$ with coefficients from the binary field $\mathbb{F}_2$. Addition is the usual term-wise addition, and multiplication is performed with modulo $1+x^{p\tau}$. In $R_{p\tau}$, multiplication by $x$ can be interpreted as a \emph{cyclic shift}. This is crucial to reduce repair bandwidth for one column failure. Note that  we do not need to store the extra bits on disk, they are presented only for notational convenience.

Consider the sub-ring $C_{p\tau}$ of $R_{p\tau}$ which consists of polynomials in $R_{p\tau}$ with $1+x^{\tau}$ being a factor,
\begin{equation}
C_{p\tau} \triangleq \{ a(x) (1+x^{\tau}) \bmod (1+x^{p\tau})|\, a(x) \in R_{p\tau} \}.
\label{eq:C}
\end{equation}

In fact,  $C_{p\tau}$ is an ideal, because
\[
 \forall c(x)\in R_{p\tau}, \forall s(x)\in C_{p\tau},\ c(x)s(x)\in C_{p\tau}.
\]
One can verify that the product of 
$$h(x) \triangleq 1+x^{\tau}+\cdots + x^{(p-1)\tau}$$
and any polynomial in $C_{p\tau}$ is zero. The polynomial $h(x)$ is called the {\em check polynomial} of $C_{p\tau}$. 
The multiplication identity of $C_{p\tau}$ is
\begin{equation}
\begin{array}{ll}
 e(x)&\triangleq 1+h(x) = x^{\tau} + x^{2\tau} + \cdots + x^{(p-1)\tau}\\
 &=(1+x^{\tau})(x^{\tau}+x^{3\tau}+\cdots+x^{(p-2)\tau}),
 \end{array} 
 \label{eq:identity-1}
\end{equation}
as $\forall b(x)\in C_{p\tau}$,
\begin{equation}
e(x)b(x)=(1+h(x))b(x)=b(x)\bmod (1+x^{p\tau}).
\label{eq:identity}
\end{equation}

\begin{theorem}
\label{even-parity}
The coefficients of polynomial $s_{i}(x)$ satisfy \eqref{eq:check1} if and only if $s_{i}(x)$ is in $C_{p\tau}$. 
\end{theorem}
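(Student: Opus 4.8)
The plan is to introduce the check polynomial $h(x)=1+x^{\tau}+\cdots+x^{(p-1)\tau}$ as a bridge and to prove the chain of equivalences: the coefficients of $s_i(x)$ satisfy \eqref{eq:check1}, if and only if $h(x)s_i(x)=0$ in $R_{p\tau}$, if and only if $s_i(x)\in C_{p\tau}$.

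First I would compute $h(x)s_i(x)\bmod(1+x^{p\tau})$ explicitly in terms of the bits $s_{t,i}$. Writing $s_i(x)=\sum_{\mu=0}^{\tau-1}\sum_{m=0}^{p-1}s_{m\tau+\mu,i}x^{m\tau+\mu}$ and $h(x)=\sum_{\ell=0}^{p-1}x^{\ell\tau}$, every product term has the form $s_{m\tau+\mu,i}x^{(\ell+m)\tau+\mu}$; since $0\le\ell+m\le 2(p-1)$, reducing modulo $1+x^{p\tau}$ replaces its exponent by $((\ell+m)\bmod p)\tau+\mu$. Fixing a target exponent $n\tau+\mu$ with $0\le n\le p-1$ and $0\le\mu\le\tau-1$, for each $m\in\{0,\dots,p-1\}$ there is exactly one $\ell\in\{0,\dots,p-1\}$ with $\ell+m\equiv n\pmod p$, so the coefficient of $x^{n\tau+\mu}$ equals $\sum_{m=0}^{p-1}s_{m\tau+\mu,i}$, which is independent of the block index $n$. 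Hence $h(x)s_i(x)\bmod(1+x^{p\tau})$ vanishes if and only if $\sum_{m=0}^{p-1}s_{m\tau+\mu,i}=0$ over $\mathbb{F}_2$ for every $\mu=0,1,\dots,\tau-1$; peeling off the top term $s_{(p-1)\tau+\mu,i}$ from this sum turns the condition into exactly \eqref{eq:check1}, which gives the first equivalence.

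For the second equivalence, one direction is already recorded in the text: $h(x)$ annihilates every element of $C_{p\tau}$ (indeed $a(x)(1+x^{\tau})h(x)\equiv a(x)(1+x^{p\tau})\equiv 0$), so $s_i(x)\in C_{p\tau}$ forces $h(x)s_i(x)=0$ in $R_{p\tau}$. For the converse I would use the factorization $1+x^{p\tau}=(1+x^{\tau})h(x)$, which is the geometric identity $(1+y)(1+y+\cdots+y^{p-1})=1+y^{p}$ over $\mathbb{F}_2$ applied with $y=x^{\tau}$. If $h(x)s_i(x)\equiv 0\pmod{1+x^{p\tau}}$, then $h(x)s_i(x)=q(x)(1+x^{p\tau})=q(x)(1+x^{\tau})h(x)$ in $\mathbb{F}_2[x]$ for some $q(x)$; cancelling the nonzero factor $h(x)$ in the integral domain $\mathbb{F}_2[x]$ yields $s_i(x)=q(x)(1+x^{\tau})$, and reducing modulo $1+x^{p\tau}$ exhibits $s_i(x)$ as $a(x)(1+x^{\tau})\bmod(1+x^{p\tau})$ with $a(x)=q(x)\bmod(1+x^{p\tau})$, i.e.\ $s_i(x)\in C_{p\tau}$. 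Combining the two equivalences completes the proof.

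I expect the only genuine work to be the coefficient bookkeeping in the first step, in particular the observation that after wrapping exponents modulo $p\tau$ the coefficient of $x^{n\tau+\mu}$ no longer depends on $n$, because each bit position $m$ meets exactly one monomial of $h(x)$ in each residue class mod $p$. The rest --- the factorization of $1+x^{p\tau}$, the cancellation in $\mathbb{F}_2[x]$, and re-reducing modulo $1+x^{p\tau}$ --- is routine. One small point worth stating is that $s_i(x)$ has degree at most $p\tau-1$, so the $\tau$ extra bits $s_{(p-1)\tau+\mu,i}$ ($\mu=0,\dots,\tau-1$) are precisely its top $\tau$ coefficients and $s_i(x)$ is already the reduced representative of its class in $R_{p\tau}$.
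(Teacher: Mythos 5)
Your proof is correct, but it takes a genuinely different route from the paper's. You characterize $C_{p\tau}$ as the annihilator of the check polynomial $h(x)$ in $R_{p\tau}$ --- via the factorization $1+x^{p\tau}=(1+x^{\tau})h(x)$ and cancellation of the nonzero factor $h(x)$ in the integral domain $\mathbb{F}_2[x]$ --- and then show by a single convolution count that $h(x)s_i(x)=0$ in $R_{p\tau}$ is exactly the system of parity checks \eqref{eq:check1}: the coefficient of $x^{n\tau+\mu}$ in $h(x)s_i(x)$ equals $\sum_{m=0}^{p-1}s_{m\tau+\mu,i}$ independently of $n$, and over $\mathbb{F}_2$ its vanishing for all $\mu$ is precisely \eqref{eq:check1}. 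The paper instead argues directly on coefficients in both directions: for the forward direction it substitutes the definition of the extra bits and regroups $s_i(x)$ into terms of the form $s_{\ell\tau+j,i}\bigl(x^{\ell\tau+j}+x^{(p-1)\tau+j}\bigr)$, each visibly divisible by $1+x^{\tau}$; for the converse it expands $a(x)(1+x^{\tau})\bmod(1+x^{p\tau})$ coefficientwise and verifies \eqref{eq:check1} by a telescoping sum. Your argument is more structural: it makes explicit the generator/check-polynomial duality of the cyclic code $C_{p\tau}$ (so that \eqref{eq:check1} is seen literally as the parity check encoded by $h(x)$), reuses the annihilation property the paper already records just before the theorem, and replaces the paper's two separate coefficient manipulations by one clean counting argument plus the cancellation step, which you justify correctly, including the relevant detail that $\deg s_i(x)\le p\tau-1$ so $s_i(x)$ is its own reduced representative in $R_{p\tau}$.
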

\begin{proof}
Suppose that the coefficients of $s_{i}(x)$ satisfy \eqref{eq:check1}. By reformulating $s_{i}(x)$, we have 
\begin{align*}
\tiny
&s_i(x)=s_{0,i}+s_{1,i}x+\cdots+s_{(p-1)\tau-1,i}x^{(p-1)\tau-1}\\
&+x^{(p-1)\tau}\sum_{\ell=0}^{p-2}s_{\ell\tau,i}+\cdots+x^{p\tau-1}\sum_{\ell=0}^{p-2}s_{\ell\tau+\tau-1,i} \\
=&s_{0,i}+s_{\tau,i}x^{\tau}+\cdots+s_{(p-2)\tau,i}x^{(p-2)\tau}\\
&+x^{(p-1)\tau}\sum_{\ell=0}^{p-2}s_{\ell\tau,i}+\\
&s_{1,i}x+s_{\tau+1,i}x^{\tau+1}+\cdots+s_{(p-2)\tau+1,i}x^{(p-2)\tau+1}\\
&+x^{(p-1)\tau+1}\sum_{\ell=0}^{p-2}s_{\ell\tau+1,i}+ \cdots +\\
&s_{\tau-1,i}x^{\tau-1}+s_{2\tau-1,i}x^{2\tau-1}+\cdots+\\
&s_{(p-1)\tau-1,i}x^{(p-1)\tau-1}+x^{p\tau-1}\sum_{\ell=0}^{p-2}s_{(\ell+1)\tau-1,i}\\
=&s_{0,i}(1+x^{(p-1)\tau})+\cdots+s_{(p-2)\tau,i}x^{(p-2)\tau}(1+x^{\tau})+\\
&s_{1,i}x(1+x^{(p-1)\tau})+\cdots+s_{(p-2)\tau+1,i}x^{(p-2)\tau+1}(1+x^{\tau})\\
&+\cdots +s_{\tau-1,i}(x^{\tau-1}+x^{p\tau-1})\\
&+\cdots+s_{(p-1)\tau-1,i}(x^{(p-1)\tau-1}+x^{p\tau-1}).
\end{align*}
This is reduced to show that $x^{i\tau+j}+x^{(p-1)\tau+j}$ is a multiple of $1+x^{\tau}$ for $i=0,1,\ldots,p-2$ and $j=0,1,\ldots,\tau-1$. It is true from the fact
\begin{align*}
&x^{i\tau+j}+x^{(p-1)\tau+j}=x^{i\tau+j}(1+x^{(p-i-1)\tau})\\
=&x^{i\tau+j}(1+x^{\tau})(1+x^{\tau}+\cdots+x^{(p-i-2)\tau}).
\end{align*}
This verifies that the polynomial $s_{i}(x)$ is in $C_{p\tau}$.

Conversely, suppose that $s_{i}(x)=\sum_{\ell=0}^{p\tau-1}s_{\ell,i}x^\ell$ is in $C_{p\tau}$. By \eqref{eq:C}, $s_{i}(x)$ can be written as 
\begin{align*}
s_{i}(x)\triangleq &a(x)(1+x^\tau) \bmod (1+x^{p\tau})\\
=&(a_0+a_{(p-1)\tau})+(a_1+a_{(p-1)\tau+1})x+\cdots+\\
&(a_{\tau}+a_{0})x^{\tau}+\cdots+(a_{p\tau-1}+a_{(p-1)\tau-1})x^{p\tau-1}.
\end{align*}
Therefore, we obtain
\begin{align*}
s_{\mu,i}=&a_{\mu}+a_{(p-1)\tau+\mu},\\
s_{\tau+\mu,i}=&a_{\tau+\mu}+a_{\mu},\\
&\cdots,\\
s_{(p-1)\tau+\mu,i}=&a_{(p-1)\tau+\mu}+a_{(p-2)\tau+\mu},
\end{align*}
for $\mu=0,1,\ldots,\tau-1$. We can check that 
\begin{align*}
&s_{\mu,i}+s_{\tau+\mu,i}+\cdots+s_{(p-2)\tau+\mu,i}\\
=&(a_{\mu}+a_{(p-1)\tau+\mu})+(a_{\tau+\mu}+a_{\mu})+\cdots+\\
&(a_{(p-2)\tau+\mu}+a_{(p-3)\tau+\mu})\\
=&a_{(p-1)\tau+\mu}+a_{(p-2)\tau+\mu}\\
=&s_{(p-1)\tau+\mu,i}.
\end{align*}
Therefore, the coefficients of $s_{i}(x)$ satisfy \eqref{eq:check1}.
\end{proof}

Since  the equation 
\[
(1+x^{\tau})(x^{\tau}+x^{3\tau}+\cdots+x^{(p-2)\tau})+1\cdot h(x)=1
\]
holds over $\mathbb{F}_2[x]$, $1+x^{p\tau}$ can be factorized as a product of two co-prime factors $1+x^{\tau}$ and $h(x)$. We show in the next lemma that $R_{p\tau}$ is isomorphic to $\mathbb{F}_2[x]/(1+x^{\tau})\times \mathbb{F}_2[x]/(h(x))$. 
\begin{lemma}
The ring $R_{p\tau}$ is isomorphic to the product ring $\mathbb{F}_2[x]/(1+x^{\tau})\times \mathbb{F}_2[x]/(h(x))$.
\label{lm:ismp1}
\end{lemma}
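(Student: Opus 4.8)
The plan is to invoke the Chinese Remainder Theorem for polynomial rings over $\mathbb{F}_2$. The key structural fact, already established in the excerpt, is that $1+x^{p\tau}$ factors as the product of the two polynomials $1+x^{\tau}$ and $h(x) = 1+x^{\tau}+\cdots+x^{(p-1)\tau}$, and moreover that these two factors are co-prime in $\mathbb{F}_2[x]$, since the displayed B\'ezout identity
\[
(1+x^{\tau})(x^{\tau}+x^{3\tau}+\cdots+x^{(p-2)\tau})+1\cdot h(x)=1
\]
exhibits an explicit combination equal to $1$. With co-primality in hand, CRT for the principal ideal domain $\mathbb{F}_2[x]$ gives the ring isomorphism $\mathbb{F}_2[x]/(1+x^{p\tau}) \cong \mathbb{F}_2[x]/(1+x^{\tau}) \times \mathbb{F}_2[x]/(h(x))$.

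Concretely, I would proceed as follows. First, define the natural map $\varphi \colon R_{p\tau} \to \mathbb{F}_2[x]/(1+x^{\tau}) \times \mathbb{F}_2[x]/(h(x))$ by $\varphi(a(x)) = \big(a(x) \bmod (1+x^{\tau}),\, a(x) \bmod h(x)\big)$; this is well defined because both $1+x^{\tau}$ and $h(x)$ divide $1+x^{p\tau}$, so reduction modulo $1+x^{p\tau}$ is compatible with reduction modulo each factor. It is clearly a ring homomorphism. Second, for injectivity: if $a(x)$ lies in the kernel, then $1+x^{\tau} \mid a(x)$ and $h(x) \mid a(x)$; since the two factors are co-prime, their product $1+x^{p\tau}$ divides $a(x)$, so $a(x) = 0$ in $R_{p\tau}$. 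Third, for surjectivity: a dimension count over $\mathbb{F}_2$ suffices, since $\dim_{\mathbb{F}_2} R_{p\tau} = p\tau = \tau + (p-1)\tau = \dim_{\mathbb{F}_2} \mathbb{F}_2[x]/(1+x^{\tau}) + \dim_{\mathbb{F}_2} \mathbb{F}_2[x]/(h(x))$, and an injective linear map between finite-dimensional spaces of equal dimension is bijective. Alternatively, one can exhibit the inverse explicitly using the idempotents: $e(x)$ from \eqref{eq:identity-1} is the identity of the component $C_{p\tau}$ (which is $\mathbb{F}_2[x]/(1+x^{\tau})$ as an ideal), and $h(x)$ is the complementary idempotent projecting onto $\mathbb{F}_2[x]/(h(x))$.

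There is essentially no hard part here; the content is entirely the co-primality of the two factors, which has already been displayed, and the rest is a textbook application of CRT. The only point requiring a small remark is the identification of $C_{p\tau}$, defined in \eqref{eq:C} as the ideal generated by $1+x^{\tau}$ in $R_{p\tau}$, with the quotient $\mathbb{F}_2[x]/(1+x^{\tau})$: multiplication by $1+x^{\tau}$ gives a surjection $\mathbb{F}_2[x]/(h(x)) \cong (1+x^{\tau})\cdot\big(\mathbb{F}_2[x]/(1+x^{p\tau})\big) = C_{p\tau}$ whose kernel is trivial by the same co-primality argument, but if one only needs the isomorphism as stated in the lemma, the dimension-count version of surjectivity makes even this unnecessary. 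I would therefore keep the proof short: establish $\varphi$ is a well-defined homomorphism, prove injectivity from co-primality, and conclude bijectivity by comparing $\mathbb{F}_2$-dimensions.
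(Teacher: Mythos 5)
Your proof is correct and, up to the choice of argument for bijectivity, follows the same structure as the paper: both use the map $\theta(f(x))=(f(x)\bmod 1+x^{\tau},\,f(x)\bmod h(x))$ and both rest on the displayed B\'ezout identity showing $1+x^{\tau}$ and $h(x)$ are co-prime with product $1+x^{p\tau}$. Where you differ is in how bijectivity is established: the paper exhibits the explicit inverse $\phi((a(x),b(x)))=[a(x)h(x)+b(x)e(x)]\bmod(1+x^{p\tau})$ and verifies by direct computation that $\phi\circ\theta$ is the identity on $R_{p\tau}$, whereas you give the textbook CRT argument---injectivity because any representative in the kernel is divisible by both co-prime factors and hence by $1+x^{p\tau}$, and surjectivity from the $\mathbb{F}_2$-dimension count $p\tau=\tau+(p-1)\tau$. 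Both are valid; your route is shorter and avoids the computation, while the paper's route has the side benefit of producing the explicit inverse $\phi$, which it uses right after the lemma (e.g., recovering $1+x^8$ from $(0,x^2+x^4+x^6)$) and which mirrors the $e(x)$-inverse machinery used later. One small correction to your closing remark: the ideal $C_{p\tau}$ generated by $1+x^{\tau}$ is the component isomorphic to $\mathbb{F}_2[x]/(h(x))$ (this is Lemma~\ref{lm:ismp}), with identity $e(x)$, while the complementary idempotent $h(x)$ cuts out the component isomorphic to $\mathbb{F}_2[x]/(1+x^{\tau})$; you have these two identifications swapped in your parenthetical, though this does not affect the main proof you propose, which stands as written.
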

\begin{proof}
We need to find an isomorphism between $R_{p\tau}$ and $\mathbb{F}_2[x]/(1+x^{\tau})\times \mathbb{F}_2[x]/(h(x))$. Indeed, we can set up an isomorphism
$$\theta: R_{p\tau} \rightarrow \mathbb{F}_2[x]/(1+x^{\tau}) \times \mathbb{F}_2[x]/(h(x))$$
by defining
\[
 \theta(f(x)) \triangleq (f(x) \bmod 1+x^{\tau}, f(x) \bmod h(x)).
\]
The mapping $\theta$ is a ring homomorphism and a bijection, because it has an inverse function $\phi((a(x),b(x)))$  given by
\[
\phi((a(x),b(x))) \triangleq [a(x) \cdot h(x) + b(x) \cdot e(x)] \bmod 1+x^{p\tau}.
\]
In the following, we show that the composition $\phi \circ \theta$ is the identity map of~$R_{p\tau}$.

For any polynomial $f(x)\in R_{p\tau}$, there exist two polynomials $g_1(x),g_2(x)\in \mathbb{F}_2[x]$ such that 
\begin{align*}
f(x)=& g_1(x)(1+x^\tau)+(f(x)\bmod (1+x^\tau)), \\
f(x)=& g_2(x)h(x)+(f(x)\bmod h(x)).
\end{align*}
Then we have
\begin{align*}
\tiny
\phi(\theta(f(x)))
=&\big(h(x)(f(x)\bmod (1+x^\tau))+\\
&e(x)(f(x)\bmod h(x))\big)\bmod 1+x^{p\tau}\\
=&\big(h(x)(f(x)-g_1(x)(1+x^\tau))+\\
&(1+h(x))(f(x)-g_2(x)h(x))\big)\bmod 1+x^{p\tau} \\
=&\big(h(x)f(x)-h(x)g_1(x)(1+x^\tau)+f(x)+\\
&h(x)f(x)-e(x)g_2(x)h(x)\big)\bmod 1+x^{p\tau} \\
=&\big(f(x)-h(x)g_1(x)(1+x^\tau)-\\
&e(x)g_2(x)h(x)\big)\bmod 1+x^{p\tau} \\
=&\big(f(x)-(1+x^{\tau})(x^{\tau}+x^{3\tau}+\cdots+x^{(p-2)\tau})\cdot \\
&g_2(x)h(x)\big)\bmod 1+x^{p\tau} \\
=&f(x).
\end{align*}
The composition $\phi \circ \theta$ is thus the identity mapping of~$R_{p\tau}$ and the lemma is proved.
\end{proof}

By Lemma \ref{lm:ismp1}, we have that $C_{p\tau}$ is isomorphic to $\mathbb{F}_2[x]/(h(x))$ in the next lemma. 
\begin{lemma}
The ring $C_{p\tau}$ is isomorphic to $\mathbb{F}_2[x]/(h(x))$. Furthermore, the  isomorphism
$\theta: C_{p\tau} \rightarrow \mathbb{F}_2[x]/(h(x))$
can be defined as
 $\theta(f(x)) \triangleq  f(x) \bmod h(x).$

\label{lm:ismp}
\end{lemma}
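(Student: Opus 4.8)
The plan is to obtain the claimed isomorphism by restricting the isomorphism $\theta$ of Lemma~\ref{lm:ismp1} to the ideal $C_{p\tau}$ and identifying its image. First I would record the elementary factorization $1+x^{p\tau}=(1+x^{\tau})h(x)$ over $\mathbb{F}_2[x]$, which is immediate from the definition of $h(x)$; in particular both $1+x^{\tau}$ and $h(x)$ divide $1+x^{p\tau}$. This divisibility is what makes reduction modulo $h(x)$ compatible with the arithmetic of $R_{p\tau}$: if $f(x)\equiv g(x)\pmod{1+x^{p\tau}}$ then $f(x)\equiv g(x)\pmod{h(x)}$, so $f(x)\mapsto f(x)\bmod h(x)$ is a well-defined map on $R_{p\tau}$ that is additive and multiplicative.

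Next I would show that $\theta$ maps $C_{p\tau}$ bijectively onto $\{0\}\times\mathbb{F}_2[x]/(h(x))$. For the inclusion of the image: any $s(x)\in C_{p\tau}$ has the form $s(x)=a(x)(1+x^{\tau})+q(x)(1+x^{p\tau})$ for some $a(x),q(x)\in\mathbb{F}_2[x]$, and since $1+x^{\tau}$ divides both summands we get $s(x)\bmod(1+x^{\tau})=0$; hence $\theta(s(x))=(0,\,s(x)\bmod h(x))$. For surjectivity onto $\{0\}\times\mathbb{F}_2[x]/(h(x))$: given any $b(x)$, the element $\phi((0,b(x)))=b(x)e(x)\bmod(1+x^{p\tau})$ is a multiple of $1+x^{\tau}$ by the factorization of $e(x)$ in \eqref{eq:identity-1}, hence lies in $C_{p\tau}$, and $\theta(\phi((0,b(x))))=(0,b(x))$ because $\phi$ and $\theta$ are mutually inverse by Lemma~\ref{lm:ismp1}. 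Injectivity of the restriction is inherited from injectivity of $\theta$ on all of $R_{p\tau}$.

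It then remains to check that this restriction, composed with the projection $\{0\}\times\mathbb{F}_2[x]/(h(x))\to\mathbb{F}_2[x]/(h(x))$, is a ring isomorphism carrying the multiplicative identity $e(x)$ of $C_{p\tau}$ to $1$. Additivity and bijectivity are already in hand; multiplicativity follows from the first paragraph's observation that $f(x)\mapsto f(x)\bmod h(x)$ respects products, together with the fact that the product in $C_{p\tau}$ is just the product in $R_{p\tau}$; and $\theta(e(x))=(1+h(x))\bmod h(x)=1$ since $e(x)=1+h(x)$. Combining these yields the isomorphism $\theta\colon C_{p\tau}\to\mathbb{F}_2[x]/(h(x))$, $\theta(f(x))=f(x)\bmod h(x)$, as stated. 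The only delicate point is bookkeeping the two nested reductions — modulo $1+x^{p\tau}$ inside $C_{p\tau}$, then modulo $h(x)$ — and confirming the map is well-defined and multiplicative even though $C_{p\tau}$ is not a unital subring of $R_{p\tau}$ in the usual sense; the divisibility $h(x)\mid 1+x^{p\tau}$ is exactly what dissolves this difficulty.
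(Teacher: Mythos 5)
Your proof is correct and follows essentially the same route as the paper, which obtains Lemma~\ref{lm:ismp} directly from Lemma~\ref{lm:ismp1} by restricting the Chinese-remainder isomorphism $\theta$ to the ideal $C_{p\tau}$ and identifying its image with the factor $\mathbb{F}_2[x]/(h(x))$; you simply fill in the details (image contained in $\{0\}\times\mathbb{F}_2[x]/(h(x))$, surjectivity via $\phi((0,b(x)))=b(x)e(x)$, and $e(x)\mapsto 1$) that the paper leaves implicit.
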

For example, when $p=5$ and $\tau=2$, $C_{10}$ is isomorphic to the ring $\mathbb{F}_2[x]/(1+x^2+x^4+x^6+x^{8})$ and the element $1+x^8$ in $C_{10}$ is mapped to
\[1+x^8 \bmod (1+x^2+x^4+x^6+x^{8})= x^2+x^4+x^6.
\]
If we apply the function $\phi$ to $x^2+x^4+x^6$, we can recover
\begin{align*}
\phi (0,x^2+x^4+x^6) =& (x^2+x^4+x^6)(x^2+x^4+x^6+x^8) \\
=& 1+x^8 \bmod (1+x^{10}).
\end{align*}

When $\tau=1$, the ring $C_{p}$ is discussed in \cite{shumregenerating,Hou2016BASIC} and used in a new class of regenerating codes with low computational complexity. Note that $C_{p\tau}$ is isomorphic to a finite field $\mathbb{F}_{2^{(p-1)\tau}}$ if and only if 2 is a primitive element in $\mathbb{Z}_p$ and $\tau=p^i$ for some non-negative integer $i$~\cite{Itoh1991Characterization}.

We need the following definition about $e(x)$-inverse before introducing the explicit constructions of the proposed array codes.

\begin{defn}
A polynomial $f(x)\in R_{p\tau}$ is called \emph{$e(x)$-invertible} if we can find a polynomial $\bar{f}(x)\in R_{p\tau}$ such that $f(x)\bar{f}(x)=e(x)$, where $e(x)$ is given in~\eqref{eq:identity-1}. The polynomial $\bar{f}(x)$ is called $e(x)$-inverse of $f(x)$.
\end{defn}

We show in the next lemma that $1+x^b$ is $e(x)$-invertible in $R_{p\tau}$.

\begin{lemma}
Let $b$ be an integer with $1\leq b<p\tau$ and the greatest common divisor (GCD) of $b$ and $p$ is $\gcd (b,p)=1$, and let $\gcd(b,\tau)=a$. The $e(x)$-inverse of $1+x^b$ in $R_{p\tau}$ is 
\begin{eqnarray}
\sum_{i=\tau/a}^{2\tau/a-1}x^{ib}+\sum_{i=3\tau/a}^{4\tau/a-1}x^{ib}+\cdots+\sum_{i=(p-2)\tau/a}^{(p-1)\tau/a-1}x^{ib}.
\label{eq:inverse}
\end{eqnarray}
\label{lm:inv}
\end{lemma}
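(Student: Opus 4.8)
The plan is to prove the lemma by a direct verification: I will show that the polynomial displayed in \eqref{eq:inverse}, which I denote by $\bar{f}(x)$, satisfies $(1+x^b)\bar{f}(x)=e(x)$ in $R_{p\tau}$. The computation rests on two observations: a telescoping cancellation available because we work over $\mathbb{F}_2$, and the fact that multiplication by an integer coprime to $p$ permutes the nonzero residues modulo $p$.

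First I would record that $a=\gcd(b,\tau)$ divides $\tau$, so $N:=\tau/a$ is a positive integer, and write $b=ab'$ so that $Nb=\tau b'$. The sum in \eqref{eq:inverse} then splits into $(p-1)/2$ consecutive blocks of length $N$: the $m$-th block is $\sum_{i=(2m-1)N}^{2mN-1}x^{ib}$ for $m=1,2,\ldots,(p-1)/2$ (here $p$ is odd, so this range of blocks is exactly the one written in \eqref{eq:inverse}). For a single block, working in $\mathbb{F}_2[x]$,
\[
(1+x^b)\sum_{i=c}^{c+N-1}x^{ib}=\sum_{i=c}^{c+N-1}x^{ib}+\sum_{i=c+1}^{c+N}x^{ib}=x^{cb}+x^{(c+N)b},
\]
since all intermediate terms cancel in pairs. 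Summing this over the blocks gives, still in $\mathbb{F}_2[x]$,
\[
(1+x^b)\bar{f}(x)=\sum_{m=1}^{(p-1)/2}\bigl(x^{(2m-1)Nb}+x^{2mNb}\bigr)=\sum_{n=1}^{p-1}x^{nNb}=\sum_{n=1}^{p-1}x^{n\tau b'}.
\]

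Next I would reduce this modulo $1+x^{p\tau}$. Each exponent $n\tau b'$ is a multiple of $\tau$, and $n\tau b'\bmod p\tau=\tau\,(nb'\bmod p)$, so
\[
(1+x^b)\bar{f}(x)\equiv\sum_{n=1}^{p-1}x^{\tau(nb'\bmod p)}\pmod{1+x^{p\tau}}.
\]
Since $\gcd(b,p)=1$ and $b'$ divides $b$, we have $\gcd(b',p)=1$, so the map $n\mapsto nb'\bmod p$ is a bijection of $\{1,2,\ldots,p-1\}$; hence the right-hand side equals $\sum_{j=1}^{p-1}x^{j\tau}=e(x)$ by \eqref{eq:identity-1}. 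This establishes $(1+x^b)\bar{f}(x)=e(x)$, i.e., $\bar{f}(x)$ is the $e(x)$-inverse of $1+x^b$.

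The whole argument is essentially a bookkeeping exercise, so there is no serious obstacle; the one place that needs attention is getting the block decomposition to match \eqref{eq:inverse} exactly (the blocks must have length $N=\tau/a$, start at the odd multiples $(2m-1)N$, and the last block must end at $(p-1)\tau/a-1$), and making sure the telescoping is carried out in $\mathbb{F}_2[x]$ with the reduction modulo $1+x^{p\tau}$ applied only at the very end. The conceptual point worth highlighting --- rather than an obstacle --- is why the block length $\tau/a$ is forced: it is precisely what makes the surviving exponents $nNb$ run through all nonzero multiples of $\tau$ modulo $p\tau$, after which the claim drops out from $\gcd(b',p)=1$.
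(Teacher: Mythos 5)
Your proposal is correct and follows essentially the same route as the paper's proof: the block-by-block telescoping of $(1+x^b)$ against the sum in \eqref{eq:inverse}, reducing the claim to showing $\sum_{n=1}^{p-1}x^{n\tau b/a}\equiv e(x)\bmod (1+x^{p\tau})$, which both you and the paper settle by observing that multiplication by $b/a$ (coprime to $p$) permutes the nonzero multiples of $\tau$ modulo $p\tau$. Your only cosmetic difference is carrying out the permutation argument in $\mathbb{Z}_p$ rather than in $\mathbb{Z}_{p\tau}$, and spelling out the telescoping that the paper merely asserts.
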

\begin{proof}
We can check that, in  $R_{p\tau}$,
\begin{align*}
&(1+x^b)\Big((x^{\frac{\tau}{a}b}+x^{(\frac{\tau}{a}+1)b}+\cdots+x^{(2\frac{\tau}{a}-1)b})+\\
&(x^{3\frac{\tau}{a}b}+x^{(3\frac{\tau}{a}+1)b}+\cdots+x^{(4\frac{\tau}{a}-1)b})+\cdots+\\
&(x^{(p-2)\frac{\tau}{a}b}+x^{((p-2)\frac{\tau}{a}+1)b}+\cdots+x^{((p-1)\frac{\tau}{a}-1)b})\Big)\\
=&(x^{\frac{\tau}{a}b}+x^{2\frac{\tau}{a}b}+x^{3\frac{\tau}{a}b}+\cdots+x^{(p-2)\frac{\tau}{a}b}+x^{(p-1)\frac{\tau}{a}b}).
\end{align*}
It is sufficient to show that the above equation is equal to $e(x)$, i.e., 
\begin{equation}
\begin{array}{ll}
&x^{\frac{\tau}{a}b}+x^{2\frac{\tau}{a}b}+x^{3\frac{\tau}{a}b}+\cdots+x^{(p-2)\frac{\tau}{a}b}+x^{(p-1)\frac{\tau}{a}b}\\
\equiv & e(x) \bmod (1+x^{p\tau}).
\end{array}
\label{eq:ba}
\end{equation}
Consider the ring of integers modulo $p\tau$, which is denoted $\mathbb{Z}_{p\tau}$. In $\mathbb{Z}_{p\tau}$, there is a set 
\[\tau\mathbb{Z}_{p\tau}\triangleq \left( 0, \tau, 2\tau, \ldots, (p-1)\tau\right).\]
Now we consider $(i b/a)\bmod p\tau\in \mathbb{Z}_{p\tau}$ for $i\in\{1,2,\ldots,p-1\}$. Therefore,
$$(\tau i b/a)\bmod p\tau \in \tau\mathbb{Z}_{p\tau}.$$
Next, we want to show that, for $i\neq j\in\{1,2,\ldots,p-1\}$,
\[i\tau b/a \neq j\tau b/a \bmod p\tau.\]
Assume that $i\tau b/a \bmod p\tau= j\tau b/a \bmod p\tau$. Then there exists an integer $\ell$ such that
$$i\tau b/a=\ell p\tau+j\tau b/a.$$ The above equation can be further reduced to
$$(i-j) b/a=\ell p.$$ Since $\gcd(b,p)=1$, $\gcd(b/a,p)=1$. Hence, we have $p|(i-j)$. However, this is impossible due to the fact that  $1\le j<i\le p-1$. Similarly, we can prove that, for $1\le i\le p-1$,
$$i\tau b/a \bmod p\tau \neq 0.$$ 
Hence, we can obtain that 
\[
\left(\tau b/a, \ldots, (p-1)\tau b/a\right)\equiv \left(\tau, \ldots, (p-1)\tau\right)\bmod p\tau.
\]
Therefore,  \eqref{eq:ba} holds.
\end{proof}

By Lemma~\ref{even-parity}, we have $s_i(x)\in C_{p\tau}$ for $i=1,2,\ldots, k$. Let $f(x)$ be any entry of the generator matrix or check matrix.  If $f(x)\notin C_{p\tau}$, one may replace the entry by $(f(x)e(x) \bmod (1+x^{p\tau}))\in C_{p\tau}$ without changing the results. This is due to the fact that $s_i(x)e(x)=s_i(x)\bmod (1+x^{p\tau})$, for $1\le i\le k$.
Hence, after replacing all  $f(x)$ of the generator matrix or check matrix with $(f(x)e(x) \bmod (1+x^{p\tau}))$, we have an equivalent generator matrix or check matrix  such that the coded polynomials in~\eqref{eq:data_clm} can be computed over the ring $C_{p\tau}$ via \eqref{eq:encoding} or \eqref{eq:check}. 

The encoding procedure can be described in terms of polynomial operations  as follows. Given $k(p-1)\tau$ information bits, by \eqref{eq:poly}, one appends $\tau$ extra bits for each of $(p-1)\tau$ information bits and forms $k$ data polynomials that  belong to  $C_{p\tau}$.
After obtaining the vector in \eqref{eq:data_clm} by choosing some specific encoding matrix or check matrix, one stores the coefficients  in the polynomials of degrees $0$ to $(p-1)\tau-1$ and drops the coefficients in higher degrees.
The proposed array code can be considered as punctured systematic linear code over $C_{p\tau}$.

\section{Two Explicit Constructions of Binary MDS Array Codes}
\label{sec:construction}
The purpose of this paper is to find suitable encoding matrices $\mathbf{P}_{k\times r}$ or check matrices $\mathbf{H}_{r\times (k+r)}$ such that the corresponding codes are MDS codes and the repair bandwidth of one single failure is asymptotically weak-optimal. In the section, we will give two explicit constructions of  binary MDS array codes, where the first construction is formed by an encoding matrix and the second construction is formed by a check matrix. The repair bandwidth of the first construction is asymptotically weak-optimal for any one single information failure, while the repair bandwidth of the second construction is asymptotically weak-optimal for any one single information or parity failure.

Note that, for codes  constructed from both constructions, not all parameters exist for them to be MDS codes. A sufficient condition will be derived for them to be MDS codes when $p$ is large enough.  Some constructed codes with small $p$ are also proved to be MDS codes. 
\vspace{-0.5cm}
\subsection{The First Construction: Encoding Matrix}
\label{sec:constr1}
Let $\eta=d-k+1$,\footnote{$\eta$ and $d-k+1$ will be interchangeably used in the work.} $k\geq 4$, $r\geq 3$ be an odd number, $d=k+(r-1)/2$, $\tau=(d-k+1)^{k-2}$, and $p>d-k$.
The constructed code is denoted by $\mathcal{C}_1(k,r,d,p)$ with $\mathbf{P}_{k\times r}$ given in \eqref{matrixp}.  
\newcounter{mytempeqncnt}
\begin{figure*}[!t]
\normalsize
\begin{equation}
\mathbf{P}_{k\times r}\triangleq 
\left[\begin{array}{ccccc|cccc}
 1& x&x^2 & \cdots & x^{d-k} &1 &1 & \cdots &1\\
 1 & x^{\eta} & x^{2\eta}&  \cdots & x^{(d-k)\eta}& x^{(d-k)\eta^{k-2}}& x^{(d-k-1)\eta^{k-2}} &\cdots &x^{\eta^{k-2}} \\
 1 & x^{\eta^2} & x^{2\eta^2}& \cdots & x^{(d-k)\eta^{2}}& x^{(d-k)\eta^{k-3}}&x^{(d-k-1)\eta^{k-3}}& \cdots &x^{\eta^{k-3}} \\
 \vdots &\vdots&\vdots&\ddots&\vdots&\vdots&\vdots&\ddots&\vdots\\
 1 & x^{\eta^{k-3}} & x^{2\eta^{k-3}}& \cdots & x^{(d-k)\eta^{k-3}}& x^{(d-k)\eta^{2}}&x^{(d-k-1)\eta^{2}}&\cdots &x^{\eta^{2}} \\
 1 &x^{\eta^{k-2}}  & x^{2\eta^{k-2}}& \cdots & x^{(d-k)\eta^{k-2}}& x^{(d-k)\eta}&x^{(d-k-1)\eta}&\cdots &x^{\eta} \\
 1 &1  & 1& \cdots & 1& x^{d-k}&x^{d-k-1}& \cdots &x 
 \end{array}\right].
\label{matrixp}
\end{equation}
\end{figure*}
Since every data polynomial is in $C_{p\tau}$ and $C_{p\tau}$ is an ideal, we have the following lemma.
\begin{lemma}
For $j=k+1,k+2,\ldots,k+r$, each coded polynomial $s_{j}(x)$ in $\mathcal{C}_1(k,r,d,p)$ belongs to $C_{p\tau}$.
\label{lm:codedC}
\end{lemma}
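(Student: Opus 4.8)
The plan is to deduce the statement immediately from Theorem~\ref{even-parity} together with the ideal property of $C_{p\tau}$ already recorded above. First I would recall that, by Theorem~\ref{even-parity}, every data polynomial $s_i(x)$ with $1\le i\le k$ lies in $C_{p\tau}$, since by construction its coefficients satisfy the extra-bit relation \eqref{eq:check1}. Next I would invoke \eqref{eq:encoding}, which expresses each coded polynomial as
\[
s_j(x)=\sum_{i=1}^{k}s_i(x)\,P_{i,j-k}(x),\qquad j=k+1,\ldots,k+r,
\]
with all arithmetic carried out in $R_{p\tau}=\mathbb{F}_2[x]/(1+x^{p\tau})$, where $P_{i,\ell}(x)$ denotes the $(i,\ell)$ entry of the encoding matrix $\mathbf{P}_{k\times r}$ in \eqref{matrixp}; each such entry is a monomial, hence an element of $R_{p\tau}$.

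The core of the argument is the absorption property of the ideal $C_{p\tau}\subseteq R_{p\tau}$: since $s_i(x)\in C_{p\tau}$ and $P_{i,j-k}(x)\in R_{p\tau}$, each product $s_i(x)P_{i,j-k}(x)$ again lies in $C_{p\tau}$. Because $C_{p\tau}$, being an ideal, is in particular an additive subgroup of $R_{p\tau}$, it is closed under the term-wise sum over $i$, so $s_j(x)\in C_{p\tau}$ for every $j=k+1,\ldots,k+r$, which is exactly the claim.

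There is essentially no obstacle here; the only point worth stating explicitly is that the reduction modulo $1+x^{p\tau}$ in \eqref{eq:encoding} is harmless precisely because $C_{p\tau}$ is an ideal rather than merely a subring. Equivalently, one may note — as observed after Lemma~\ref{lm:inv} — that the encoding can be performed entirely inside $C_{p\tau}$ after replacing each entry $f(x)$ of $\mathbf{G}_{k\times(k+r)}$ by $f(x)e(x)\bmod(1+x^{p\tau})\in C_{p\tau}$, which makes membership of the coded polynomials in $C_{p\tau}$ automatic. As a byproduct, the reverse implication of Theorem~\ref{even-parity} then shows that the coefficients of each $s_j(x)$ satisfy \eqref{eq:check1}, justifying the earlier assertion that the extra bits of the parity columns obey the same relation as those of the information columns.
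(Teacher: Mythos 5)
Your proof is correct and follows exactly the paper's own (one-line) argument: each coded polynomial is an $R_{p\tau}$-linear combination of the data polynomials, which lie in $C_{p\tau}$ by Theorem~\ref{even-parity}, and the absorption and additive-closure properties of the ideal $C_{p\tau}$ give the claim. Nothing is missing.
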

By Lemma \ref{even-parity} and Lemma \ref{lm:codedC}, the coefficients of the coded polynomials $s_{j}(x)$ satisfy \eqref{eq:check1} if we replace $i$ with $j$ in \eqref{eq:check1}.
Let $(i:j)=\{i,i+1,\ldots,j\}$ and $\mathbf{P}_{k\times r}(i:j)$ be the sub-matrix of $\mathbf{P}_{k\times r}$ with column index  determined by  $(i:j)$. In $\mathbf{P}_{k\times r}$, the sub-matrix $\mathbf{P}_{k\times r}(\eta+1:2\eta-1)$ is a clockwise rotation of the sub-matrix $\mathbf{P}_{k\times r}(2:\eta)$ by 180 degrees. The last row of $\mathbf{P}_{k\times r}(2:\eta)$ is an all one vector, and the exponent of the entry in row $i$ and column $j$ of $\mathbf{P}_{k\times r}(2:\eta)$ is ${\eta^{i-1}}$ times of that in the first row and column $j$ for $i=2,3,\ldots,k-1$ and $j=1,2,\ldots, d-k$.

\begin{example}
\label{example1}
Consider  $k=4, p=3, r=3$. Hence, $d=4+1=5$ and $\tau=4$. The 32 information bits are represented by $s_{0,i}$, $s_{1,i}, \ldots, s_{7,i}$, for $i=1,2,3,4$. The encoding matrix is
\begin{align*}
\mathbf{P}_{4\times 3}=\begin{bmatrix}
1 & x & 1 \\
1 & x^2 & x^4 \\
1 & x^4 & x^2 \\
1&1&x
\end{bmatrix}.
\end{align*}
The Example~\ref{example1} is illustrated in Fig.~\ref{example}.
Note that the extra bits calculated from the information bits do not need to be stored and they are only used to calculate the parity bits.

\begin{figure*}
\centering
\includegraphics[width=0.79\textwidth]{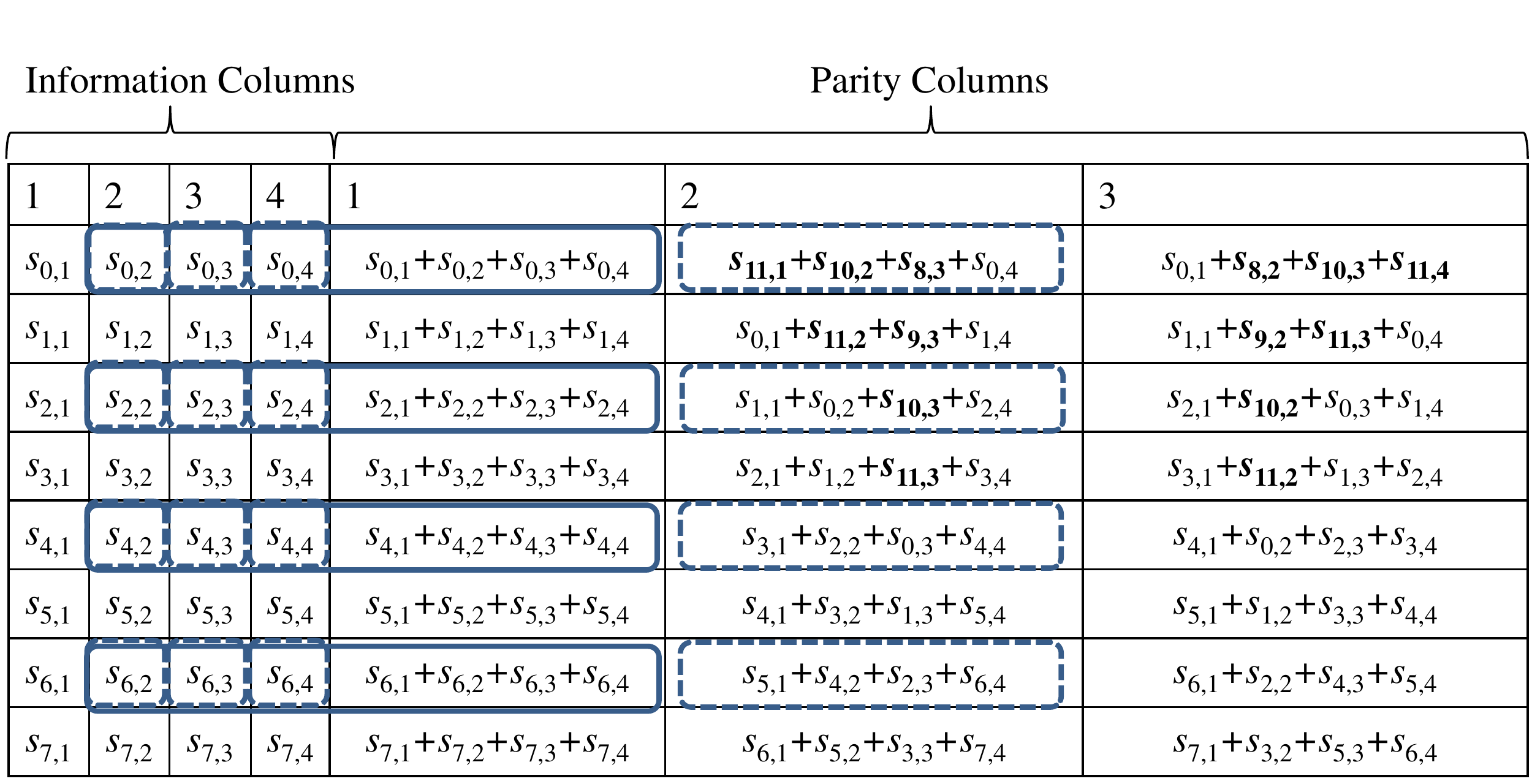}
\caption{The information and parity columns in Example~\ref{example1}. When information column 1 fails, the bits in the solid line box are downloaded to repair the information bits $s_{0,1},s_{2,1},s_{4,1},s_{6,1}$ and the bits in the dashed box are used to repair the information bits $s_{1,1},s_{3,1},s_{5,1},s_{7,1}$.}
\label{example}
\end{figure*}

Suppose that the first column fails. We can repair the bits $s_{0,1},s_{2,1},s_{4,1},s_{6,1}$ by
\begin{eqnarray*}
& s_{0,1}=s_{0,2}+s_{0,3}+s_{0,4}+(s_{0,1}+s_{0,2}+s_{0,3}+s_{0,4}) \\
& s_{2,1}=s_{2,2}+s_{2,3}+s_{2,4}+(s_{2,1}+s_{2,2}+s_{2,3}+s_{2,4}) \\
& s_{4,1}=s_{4,2}+s_{4,3}+s_{4,4}+(s_{4,1}+s_{4,2}+s_{4,3}+s_{4,4}) \\
& s_{6,1}=s_{6,2}+s_{6,3}+s_{6,4}+(s_{6,1}+s_{6,2}+s_{6,3}+s_{6,4}),
\end{eqnarray*}
and repair the bits $s_{1,1},s_{3,1},s_{5,1},s_{7,1}$ by
\begin{align*}
s_{1,1}=&s_{0,2}+\mathbf{s_{10,3}}+s_{2,4}+(s_{1,1}+s_{0,2}+s_{10,3}+s_{2,4}) \\
s_{3,1}=&s_{2,2}+s_{0,3}+s_{4,4}+(s_{3,1}+s_{2,2}+s_{0,3}+s_{4,4}) \\
s_{5,1}=&s_{4,2}+s_{2,3}+s_{6,4}+(s_{5,1}+s_{4,2}+s_{2,3}+s_{6,4}) \\
s_{7,1}=&s_{6,2}+s_{4,3}+\mathbf{s_{8,4}}+(\mathbf{s_{11,1}}+\mathbf{s_{10,2}}+\mathbf{s_{8,3}}+s_{0,4})\\
&+(s_{3,1}+s_{2,2}+s_{0,3}+s_{4,4}).
\end{align*}

As we can compute $\mathbf{s_{10,3}}$ by $s_{6,3}+s_{2,3}$ and $\mathbf{s_{8,4}}$ by $s_{4,4}+s_{0,4}$, we do not need to download the bits $\mathbf{s_{10,3}}$ and $\mathbf{s_{8,4}}$.
Therefore,  we only need to download 4 bits from each of three information columns and two parity columns. There are total 20 bits downloaded from 5 columns to repair the bits of the first information column.
Namely, only half of the bits of the helping columns are accessed. In Fig.~\ref{example}, the bits in the solid line box are downloaded to repair the information bits $s_{0,1},s_{2,1},s_{4,1},s_{6,1}$ and the bits in the dashed box are used to repair the information bits $s_{1,1},s_{3,1},s_{5,1},s_{7,1}$.

Note that, according to Theorem \ref{thm:MDSs}, the code given in Fig.~\ref{example} is not an MDS code due to the fact that there exists a $2\times 2$ sub-matrix whose determinant is divisible by $1+x+x^2$.
\end{example}

\subsection{The Second Construction: Check Matrix}
\label{sec:constr2}
Let $k\geq 4$, $r\geq 4$ be an even number, the matrix $\mathbf{0}$ be the $\frac{r}{2}\times \frac{r}{2}$ zero matrix, $\eta=d-k+1=\frac{r}{2}$, $\tau=\eta^{d-1}=(\frac{r}{2})^{d-1}$, and $p>\eta$. The second construction is denoted by $\mathcal{C}_2(k,r,d,p)$ with the check matrix $\mathbf{H}_{r\times (k+r)}$ defined in \eqref{matrixh} in the next page.
\begin{figure*}[!t]
\normalsize
\begin{equation}
\left[\begin{array}{ccccccccc}
 1&x&x^2 & \cdots & x^{\eta-1} & & & &\\
 1 & x^{\eta} & x^{2\eta}& \cdots &  x^{(\eta-1)\eta}& & \text{\huge 0}& &\\
 \vdots & \vdots & \vdots& \ddots &  \vdots & & & &\\
 1 & x^{(\eta)^{\eta-1}} & x^{2(\eta)^{\eta-1}}& \cdots &  x^{(\eta-1)(\eta)^{\eta-1}}& & & &\\ \hline
 \vdots &\vdots&\vdots&\ddots&\vdots& 1 &\cdots &1 & 1\\
 1 & x^{(\eta)^{d-1}} & x^{2(\eta)^{d-1}}& \cdots & x^{(\eta-1)(\eta)^{d-1}}& x^{(\eta-1)(\eta)^{d-1}}&\cdots &x^{(\eta)^{d-1}} &x^{(d-1)(\eta)^{d-1}}\\
 1 & 1 & 1& \cdots & 1&\vdots &\ddots &\vdots &\vdots\\ \hline
  &  & &  & &x^{(\eta-1)(\eta)^{\eta-1}} &\cdots &x^{(\eta)^{\eta-1}} &x^{(\eta-1)(\eta)^{d-1}}\\
 &  & \text{\huge 0}&  & &\vdots  &\ddots &\vdots &\vdots \\
  &  & &  & &x^{(\eta-1)\eta} &\cdots &x^{\eta} &x^{(\eta)^{d-1}}\\
  &  & &  & &x^{(\eta-1)} &\cdots &x &1
   \end{array}\right]^T.
\label{matrixh}
\end{equation}
\end{figure*}
Unlike the definition given in~\eqref{eq:data_clm}, it will be shown later that, for $\mathcal{C}_2(k,r,d,p)$,   $k$ data polynomials can be placed in any $k$ positions of the codewords. To compute the coded polynomials, we should solve a linear equation system with the encoding coefficients being a sub-matrix of~\eqref{matrixh}. 

In the following, we first give a construction for $r=4$. The cases for  $r>4$ will be given later.  A $\mathcal{C}_2(k,4,d,p)$ contains $k+4$ polynomials $s_1(x),s_2(x),\ldots,s_{k+4}(x)$, where we select $s_3(x),s_4(x),\ldots,s_{k+2}(x)$ as data polynomials and $s_1(x),s_2(x),s_{k+3}(x),s_{k+4}(x)$ as coded polynomials. 
The check matrix $\mathbf{H}_{4\times (k+4)}$ is given as follows: 
\begin{equation}
\small
\left[\begin{array}{cc|ccccc|cc}
1 & 1 & 1 & 1 & \cdots & 1 & 1 & 0 & 0\\
x & x^2 & x^4 & x^8 & \cdots & x^{2^k} & 1 & 0 &0\\ \hline
0 & 0 & 1 & x^{2^k} & \cdots & x^8 & x^4 & x^2 & x\\
0 & 0 & 1 & x^{k\cdot 2^k} & \cdots & x^{3\cdot 2^k} & x^{2\cdot 2^k} & x^{2^k} & 1\\
\end{array}\right].
\label{check4}
\end{equation}
From the first two rows of $\mathbf{H}_{4\times (k+4)}$, we have 
\begin{align}
& s_1(x)+s_2(x)+\cdots+s_{k+1}(x)+s_{k+2}(x)=0, \label{eq:s_{k+2}-1}\\
& xs_1(x)+x^2s_2(x)+\cdots+x^{2^k}s_{k+1}(x)+s_{k+2}(x)=0. \label{eq:s_{k+2}-2}
\end{align}
First, we can compute the summation of cyclic-shifted version of the data polynomials as
\begin{align*}
& p_{1}(x)\triangleq  s_3(x)+s_4(x)+\cdots+s_{k+1}(x)+s_{k+2}(x), \\
& p_2(x)\triangleq x^4s_3(x)+x^8s_4(x)+\cdots+x^{2^k}s_{k+1}(x)+s_{k+2}(x). 
\end{align*}
Substituting \eqref{eq:s_{k+2}-1} into $ p_{1}(x)$ and \eqref{eq:s_{k+2}-2} into $ p_{2}(x)$,
we have 
\[
xs_2(x)=\frac{xp_1(x)+p_2(x)}{1+x}.
\]
Note that we may solve for $s_2(x)$ by left cyclic shifting of $xs_2(x)$. It is easy to see that $p_1(x),p_2(x)\in C_{p\tau}$, and therefore $xp_1(x)+p_2(x)\in C_{p\tau}$. 
By Lemma \ref{lm:inv}, $1+x^b$ is $e(x)$-invertible and we can compute $g(x)$ from $(1+x^b)g(x)=f(x)$ as
\begin{eqnarray}
g(x)=f(x)\Big(\sum_{i=\frac{\tau}{a}}^{\frac{2\tau}{a}-1}x^{ib}+\sum_{i=\frac{3\tau}{a}}^{\frac{4\tau}{a}-1}x^{ib}+\cdots+\sum_{i=\frac{(p-2)\tau}{a}}^{\frac{(p-1)\tau}{a}-1}x^{ib}\Big),
\label{eq:inverse1}
\end{eqnarray}
where $f(x)\in C_{p\tau}$ and $\gcd(b,\tau)=a$. By letting $b=1$ and $f(x)=xp_1(x)+p_2(x)$, we can solve $g(x)=xs_2(x)$ by \eqref{eq:inverse1} and then $s_1(x)$ by~\eqref{eq:s_{k+2}-1}. 
As $f(x)\in C_{p\tau}$, the resulting polynomial $g(x)$ in \eqref{eq:inverse1} is also in $C_{p\tau}$. Therefore, the coded polynomials $s_1(x),s_2(x)$ are in $C_{p\tau}$. The other two coded polynomials $s_{k+3}(x),s_{k+4}(x)$ can be computed similarly and are also in $C_{p\tau}$.

Next, we demonstrate how to efficiently compute $g(x)$ via~\eqref{eq:inverse1}. We can compute the coefficients $g_j$ for $j=0,1,\ldots,a-1$ by
\begin{eqnarray}
g_j=&\sum_{i=\frac{\tau}{a}}^{\frac{2\tau}{a}-1}f_{(j-ib)\bmod p\tau}+\sum_{i=\frac{3\tau}{a}}^{\frac{4\tau}{a}-1}f_{(j-ib)\bmod p\tau}+ \nonumber\\
&\cdots+\sum_{i=\frac{(p-2)\tau}{a}}^{\frac{(p-1)\tau}{a}-1}f_{(j-ib)\bmod p\tau}.
\label{eq:inverse2}
\end{eqnarray}
Since $(1+x^b)g(x)=f(x)$, once $g_0,g_1,\ldots,g_{a-1}$ are known, we can compute the other coefficients of $g(x)$ iteratively by
\begin{equation}
g_{b\ell+j}=f_{b\ell+j}+g_{b(\ell-1)+j}
\label{eq:itera}
\end{equation}
with the index $\ell$ running from 1 to $p\tau/a-1$ and $0\le j\le a-1$. It can be shown that $b\ell_1+j\bmod p\tau \neq b\ell_2+j\bmod p\tau$ for $\ell_1\neq \ell_2 \in\{1,2,\ldots,p\tau/a-1\}$. Therefore, we can compute all the other coefficients of $g(x)$ by \eqref{eq:itera}.
We can count that there are $$a(\frac{p-1}{2}\cdot \frac{\tau}{a}-1)+(p\tau-a)=\frac{3p\tau-\tau-4a}{2}$$ XORs involved in solving $g(x)$ from $(1+x^b)g(x)=f(x)$. 

\begin{example}
\label{example2}	

We give a construction  with $k=4, p=3, r=4$, and then $d=5$ and $\tau=16$. Each column stores $(p-1)(\frac{r}{2})^{d-1}=32$ bits and there are $k+r=8$ columns. We have 8 polynomials $s_i(x)=\sum_{\ell=0}^{47}s_{\ell,i}x^\ell$ for $i=1,2,\ldots,8$. Suppose that $s_3(x),s_4(x),s_5(x),s_6(x)$ are four data polynomials. 
First, we compute $p_1(x)=\sum_{\ell=0}^{47}p_{\ell,1}x^\ell,p_2(x)=\sum_{\ell=0}^{47}p_{\ell,2}x^\ell$ as
\begin{align*}
p_1(x)=&s_3(x)+s_4(x)+s_5(x)+s_6(x),\\
p_2(x)=&x^4s_3(x)+x^8s_4(x)+x^{16}s_5(x)+s_6(x).
\end{align*}
Then, we can compute $xs_2(x)$ via
\[
xs_2(x)=\frac{x p_1(x)+p_2(x)}{1+x}
\]
by \eqref{eq:inverse2} and \eqref{eq:itera}, and obtain $s_2(x)$ by left cyclic shifting of $xs_2(x)$, i.e.,
\begin{equation*}
s_{\ell,2}=\left\{\begin{array}{ll}
&\sum_{j=0}^{\ell}(p_{j,1}+p_{j+1,2})+\\
&\sum_{i=16}^{31}(p_{47-i,1}+p_{48-i,2}) \text{ for } \ell=0,1,\ldots,46;\\
&\sum_{i=16}^{31}(p_{47-i,1}+p_{48-i,2}) \text{ for } \ell=47.
\end{array}\right.
\end{equation*}
We can then compute $s_1(x)$ as $s_1(x)=s_2(x)+p_1(x)$, i.e., 
\begin{equation*}
s_{\ell,1}=\left\{\begin{array}{ll}
&p_{\ell,1}+\sum_{j=0}^{\ell}(p_{j,1}+p_{j+1,2})+\\
&\sum_{i=16}^{31}(p_{47-i,1}+p_{48-i,2}) \text{ for } \ell=0,1,\ldots,46;\\
&p_{\ell,1}+\sum_{i=16}^{31}(p_{47-i,1}+p_{48-i,2}) \text{ for } \ell=47.
\end{array}\right.
\end{equation*}
We can also compute columns 7 and 8 in a similar way. 

Suppose that the first column is failed. We can recover the bits $s_{2\ell,1}$ for $\ell=0,1,\ldots,15$ by 
\begin{eqnarray}
s_{2\ell,1}=s_{2\ell,2} + s_{2\ell,3} + s_{2\ell,4} + s_{2\ell,5} + s_{2\ell,6}
\label{eq:example2rep}
\end{eqnarray}
and the other bits $s_{2\ell+1,1}$ for $\ell=0,1,\ldots,15$ by 
\begin{eqnarray}
s_{2\ell+1,1}=s_{2\ell,2} + s_{2\ell-2,3} + s_{2\ell-6,4} + s_{2\ell-14,5} + s_{2\ell+2,6}.
\label{eq:example2rep1}
\end{eqnarray}
Note that we do not need to download the extra bits ${\bf s_{46,3}}$, ${\bf s_{48-2\ell,4}}$ for $\ell=1,2,3$, ${\bf s_{48-2\ell,5}}$ for $\ell=1,2,\ldots,7$ and ${\bf s_{32,6}}$ as we can compute them by summing two downloaded bits in \eqref{eq:example2rep}. The other bits needed in \eqref{eq:example2rep1} are also needed in \eqref{eq:example2rep}. Therefore, we only need to download 16 bits $s_{2\ell,i}$ for $\ell=0,1,\ldots,15$ from columns $i=2,3,4,5,6$ and there are total $80$ bits downloaded from 5 columns to repair the first column.

Note that, according to Theorem \ref{thm:MDSs}, the code given in Example~\ref{example2} is not an MDS code due to the fact that there exists a sub-matrix whose determinant is divisible by $1+x+x^2$.

\end{example}

The encoding procedure with  $r>4$ is described as follows. Since we will show that $\mathcal{C}_2(k,r,d,p)$ satisfy the MDS condition in Theorem \ref{thm:7} (in Section \ref{sec:MDS}), the encoding procedure can be implemented as a special case of decoding procedure. There are $k+r$ polynomials $s_1(x),s_2(x),\ldots,s_{k+r}(x)$, and assume that $k$ data polynomials are $s_{\frac{r}{2}+1},s_{\frac{r}{2}+2},\ldots,s_{\frac{r}{2}+k}$. We first replace each entry $f(x)$ of $\mathbf{H}_{r\times (k+r)}$ with $f(x)e(x)\bmod (1+x^{p\tau})$, and then solve the $r$ linear equations for $r$ coded polynomials according to the modified check matrix over $C_{p\tau}$. As $\mathcal{C}_2(k,r,d,p)$ satisfy the MDS condition, we can always compute the $r$ coded polynomials.

\section{The MDS Property}
\label{sec:MDS}
Let $\bar{\mathbf{G}}_{k\times (k+r)}$ and $\bar{\mathbf{H}}_{r\times (k+r)}$ be the matrices by replacing each entry $f(x)$ of $\mathbf{G}_{k\times (k+r)}$ and $\mathbf{H}_{r\times (k+r)}$ with $f(x)e(x)\bmod (1+x^{p\tau})$ respectively. The codes satisfy MDS property if and only if the determinant of any $k\times k $ sub-matrix of $\bar{\mathbf{G}}_{k\times (k+r)}$ or $r\times r$ sub-matrix of $\bar{\mathbf{H}}_{r\times (k+r)}$ is $e(x)$-invertible. Recall that $C_{p\tau}$ is isomorphic to $\mathbb{F}_2[x]/(h(x))$ by Lemma \ref{lm:ismp}, the necessary and sufficient MDS condition is equivalent to that the determinant of any $k\times k $ sub-matrix of $\bar{\mathbf{G}}_{k\times (k+r)}$ or $r\times r$ sub-matrix of $\bar{\mathbf{H}}_{r\times (k+r)}$, after reducing modulo $h(x)$, is invertible in $\mathbb{F}_2[x]/(h(x))$. Note that the determinant of any $k\times k $ sub-matrix of $\bar{\mathbf{G}}_{k\times (k+r)}$ or $r\times r$ sub-matrix of $\bar{\mathbf{H}}_{r\times (k+r)}$ after  reducing modulo $h(x)$ can be computed by first reducing each entry of the square matrix by $h(x)$, and then computing the determinant after reducing modulo $h(x)$. For any integers $i,j$, we have
\[
(x^{ip\tau+j}\bmod (1+x^{p\tau}))\bmod h(x)=x^j \bmod h(x)
\] 
and
\begin{align*}
&(f(x)e(x)\bmod (1+x^{p\tau}))\bmod h(x)\\
=&(f(x)e(x)\bmod h(x))\bmod 1+x^{p\tau}\\
=&[f(x)(1+h(x))\bmod h(x)]\bmod 1+x^{p\tau}\\
=&(f(x)\bmod 1+x^{p\tau})\bmod h(x)\\
=&f(x)\bmod h(x).
\end{align*} 
It is sufficient to show that the determinant of any $k\times k $ sub-matrix of $\mathbf{G}_{k\times (k+r)}$ (or equivalently any square sub-matrix of \eqref{matrixp} by Corollary 3 and Theorem 8 in \cite{MacWilliamsSloane}) or $r\times r$ sub-matrix of $\mathbf{H}_{(k+r)\times r}$, after  modulo $h(x)$, is invertible over $\mathbb{F}_2[x]/(h(x))$.


\begin{theorem}
Let $h(x)$ be factorized as a product of powers of irreducible polynomials over $\mathbb{F}_2$:
\begin{equation}
h(x)\triangleq (f_1(x))^{\ell_1}\cdot (f_2(x))^{\ell_2}\cdots (f_t(x))^{\ell_t}, 
\label{eq:factor}
\end{equation}
where $\ell_i\geq 1$ for $i=1,2,\ldots,t$. $\mathcal{C}_1(k,r,d,p)$ (or $\mathcal{C}_2(k,r,d,p)$) is an MDS code if and only if the determinant of any $\ell\times \ell$ sub-matrix of \eqref{matrixp} for $\ell=1,2,\ldots,\min\{k,r\}$ (or the determinant of any $r\times r$ sub-matrix of~\eqref{matrixh}) is a non-zero polynomial in $\mathbb{F}_2[x]/(f_i(x))$ for $i=1,2,\ldots,t$. 
\label{thm:mds1}
\end{theorem}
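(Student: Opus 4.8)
The plan is to start from the reduction that has just been established in this section: $\mathcal{C}_1(k,r,d,p)$ (resp. $\mathcal{C}_2(k,r,d,p)$) is MDS if and only if the determinant of every $\ell\times\ell$ sub-matrix of \eqref{matrixp}, for $\ell=1,2,\dots,\min\{k,r\}$ (resp. of every $r\times r$ sub-matrix of \eqref{matrixh}), is invertible in the ring $\mathbb{F}_2[x]/(h(x))$. What remains is purely to translate ``invertible in $\mathbb{F}_2[x]/(h(x))$'' into ``nonzero modulo each $f_i(x)$''; for $\mathcal{C}_1$ the passage from the full generator matrix $\mathbf{G}_{k\times(k+r)}$ to sub-matrices of \eqref{matrixp} is exactly Corollary~3 and Theorem~8 of \cite{MacWilliamsSloane}, which I would simply invoke.

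First I would apply the Chinese Remainder Theorem to \eqref{eq:factor}: since $f_1(x)^{\ell_1},\dots,f_t(x)^{\ell_t}$ are pairwise coprime with product $h(x)$, simultaneous reduction gives a ring isomorphism $\mathbb{F}_2[x]/(h(x))\cong\prod_{i=1}^{t}\mathbb{F}_2[x]/(f_i(x)^{\ell_i})$. An element of a finite product of rings is a unit exactly when every component is a unit, so for any square matrix $M$ over $\mathbb{F}_2[x]$, $\det M$ is invertible modulo $h(x)$ if and only if $\det M$ is invertible modulo $f_i(x)^{\ell_i}$ for every $i$.

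Next I would identify the units of $\mathbb{F}_2[x]/(f_i(x)^{\ell_i})$. Because $f_i(x)$ is irreducible and $\mathbb{F}_2[x]$ is a principal ideal domain, this quotient is a local ring whose unique maximal ideal is generated by the image of $f_i(x)$, and its non-units are precisely the multiples of $f_i(x)$. Hence $\det M$ is invertible modulo $f_i(x)^{\ell_i}$ if and only if $\det M\not\equiv 0\pmod{f_i(x)}$, equivalently the image of $\det M$ in the field $\mathbb{F}_2[x]/(f_i(x))$ is nonzero; in particular the multiplicities $\ell_i$ play no role. Finally, since $\det$ is given by the Leibniz formula and is therefore a polynomial in the matrix entries, it commutes with the quotient map $\mathbb{F}_2[x]\to\mathbb{F}_2[x]/(f_i(x))$: $\det M\bmod f_i(x)$ equals the determinant, computed over $\mathbb{F}_2[x]/(f_i(x))$, of the matrix obtained by reducing every entry of $M$ modulo $f_i(x)$. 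This is consistent with the entry-wise reductions of $\bar{\mathbf{G}}_{k\times(k+r)}$ and $\bar{\mathbf{H}}_{r\times(k+r)}$ first to $h(x)$ and then to $f_i(x)$ performed just above, since $f_i(x)\mid h(x)$. Chaining these three equivalences with the opening reduction yields the claimed statement.

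I do not expect a genuine obstacle here, as every step is standard commutative algebra; the one thing that needs care is keeping the three layers of quotients straight — $\mathbb{F}_2[x]/(1+x^{p\tau})$ (carrying the $e(x)$ twist used for encoding), $\mathbb{F}_2[x]/(h(x))$, and the residue fields $\mathbb{F}_2[x]/(f_i(x))$ — and checking at each stage that reduction commutes with taking determinants, so that one may legitimately work directly with the unmodified matrices \eqref{matrixp} and \eqref{matrixh} reduced modulo $f_i(x)$.
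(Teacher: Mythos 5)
Your proposal is correct and follows essentially the same route as the paper: both reduce MDS-ness to invertibility of the sub-matrix determinants in $\mathbb{F}_2[x]/(h(x))$, decompose this ring via the Chinese Remainder Theorem into the factors $\mathbb{F}_2[x]/(f_i(x)^{\ell_i})$, and observe that invertibility in each factor is equivalent to the determinant being nonzero modulo the irreducible $f_i(x)$. The only difference is cosmetic: where you invoke the standard facts that units in a product ring are componentwise units and that each factor is a local ring, the paper carries out the converse direction by an explicit B\'ezout-identity computation constructing the inverse modulo $h(x)$.
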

\begin{proof}
By Chinese remainder theorem, the ring $\mathbb{F}_2[x]/h(x)$ is isomorphic to the direct sum of $t$ rings $\mathbb{F}_2[x]/(f_1(x))^{\ell_1}, \mathbb{F}_2[x]/(f_2(x))^{\ell_2}, \ldots, \mathbb{F}_2[x]/(f_t(x))^{\ell_t}$, and the mapping $\theta$ is defined by
\[
\theta (a(x))\triangleq ( a(x) \bmod (f_1(x))^{\ell_1},  \ldots, a(x) \bmod (f_t(x))^{\ell_t}),
\]
where $a(x) \in \mathbb{F}_2[x]/h(x)$. The inverse of $\theta$ is given by
\begin{align*}
&\theta^{-1}(a_1(x),\ldots,a_{t}(x))\triangleq \\
&\sum_{i=1}^{t}a_i(x)\cdot h_i(x) \cdot < h_i(x)^{-1} >_{f_i(x)^{\ell_i}} \bmod h(x),
\end{align*}
where $a_i(x) \in \mathbb{F}_2[x]/(f_i(x))^{\ell_i}$ and $h_i(x)=h(x)/(f_i(x))^{\ell_i}$ for $i=1,2,\ldots,t$, $< h_i(x)^{-1} >_{f_i(x)^{\ell_i}}$ denotes the multiplicative inverse of $h_i(x) \bmod f_i(x)^{\ell_i}$ and $1\leq i\leq t$. It can be checked that 
\[
\theta(\theta^{-1}(a_1(x),\ldots,a_{t}(x)))=(a_1(x),\ldots,a_{t}(x)).
\]
A more general version of the polynomial Chinese remainder theorem is presented in \cite[Theorem 1]{Fan2016A}.

The code is MDS if and only if the determinants of all the sub-matrices are invertible in $\mathbb{F}_2[x]/h(x)$. 
Suppose that a determinant $a(x)$ is invertible, i.e., there exists a polynomial $g(x)$ such that $g(x)a(x)=1 \bmod h(x)$. If we apply the mapping $\theta$ to $g(x)a(x)\bmod\ h(x)$, then we have 
\begin{align*}
&\theta (g(x)a(x)\bmod h(x))=\theta (1)\\
=&( 1 \bmod (f_1(x))^{\ell_1}, \ldots, 1 \bmod (f_t(x))^{\ell_t}).
\end{align*}
Therefore, $a(x)$ is invertible in $\mathbb{F}_2[x]/(f_i(x))^{\ell_i}$ and is a non-zero polynomial in $\mathbb{F}_2[x]/f_i(x)$ for $i=1, 2, \ldots, t$.

Conversely, suppose that a determinant $a(x)\bmod f_i(x)$ is a non-zero polynomial in $\mathbb{F}_2[x]/f_i(x)$ for $i=1, 2, \ldots, t$. As $f_i(x)$ is irreducible polynomial, we have $\gcd (f_i(x), a(x))=1$. Hence,  $\gcd (f_i(x)^{\ell_i}, a(x))=1$. Therefore, there exists a polynomial $g_i(x)$ such that $g_i(x)a(x)=1 \bmod f_i(x)^{\ell_i}$.  
By Chinese remainder theorem, there exists a unique polynomial $f(x)\in \mathbb{F}_2[x]/h(x)$ such that 
\begin{align*}
&\theta(f(x)a(x) \bmod h(x))\\
=&(g_1(x)a(x)\bmod f_1(x)^{\ell_1},\cdots, g_t(x)a(x)\bmod f_t(x)^{\ell_t})\\
=&(1,\cdots,1).
\end{align*}
By applying the inverse mapping $\theta^{-1}$ to $\theta(f(x)a(x) \bmod h(x))$, we obtain that 
\begin{align*}
&\theta^{-1}(\theta(f(x)a(x) \bmod h(x)))=f(x)a(x) \bmod h(x)\\
=&\theta^{-1}(g_1(x)a(x)\bmod (f_1(x))^{\ell_1},\ldots,g_{t}(x)a(x)\bmod (f_t(x))^{\ell_t})\\
=&\theta^{-1}(1,\ldots,1)
 =\sum_{i=1}^{t} h_i(x) \cdot < h_i(x)^{-1} >_{f_i(x)^{\ell_i}} \bmod\   h(x).
\end{align*}
As $\gcd ((f_i(x))^{\ell_i},(f_j(x))^{\ell_j})=1$ for $i\neq j$, we have  
\[
\gcd (h_1(x),h_2(x),\cdots,h_t(x))=1.
\]
By B$\acute{e}$zout's identity \cite{Area2006Extensions}, there exist polynomials $b_1(x),b_2(x),\ldots,b_t(x)$ in $\mathbb{F}_2[x]/h(x)$ such that 
\[
\sum_{i=1}^{t} h_i(x) \cdot b_i(x) \bmod h(x)=1
\]
and $b_i(x)$ can be computed as $b_i(x)=< h_i(x)^{-1} >_{f_i(x)^{\ell_i}}$. Therefore, we obtain that 
\begin{align*}
f(x)a(x) \bmod h(x)=&\sum_{i=1}^{t} h_i(x) \cdot b_i(x) \bmod h(x)\\
=&1 \bmod h(x),
\end{align*}
and $a(x)$ is invertible in $\mathbb{F}_2[x]/h(x)$. This completes the proof.
\end{proof}

A sufficient MDS condition is given in the next theorem.

\begin{theorem}
Let $h(x)$ be factorized as in \eqref{eq:factor}, where $\deg(f_1(x))\leq \deg(f_2(x))\leq \cdots \leq \deg(f_t(x))$. If $\deg(f_1(x))$ is larger than
\begin{equation}
\small
(r-1)(\frac{(\eta-1)((d-k)\eta^{k-1}-\eta^{k-\frac{r+1}{2}})-\eta^{k-1}+\eta^{k-\frac{r-1}{2}}}{(\eta-1)^2}),
\label{eq:suff1}
\end{equation}
then   $\mathcal{C}_1(k,r,d,p)$ is an MDS code for $k\geq r$. Similarly, $\mathcal{C}_2(k,r,d,p)$ is an MDS code for $k\geq r$, if $\deg(f_1(x))$ is larger than
\begin{equation}
(\eta-1)(\eta)^{d-1}-(\eta)^{d-\eta}-\frac{(\eta)^{d-1}-(\eta)^{d-\eta+1}}{\eta-1}.
\label{eq:suff2}
\end{equation}
\label{thm:7}
\end{theorem}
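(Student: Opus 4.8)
The plan is to reduce, via Theorem~\ref{thm:mds1}, to a statement about square sub-matrices over $\mathbb{F}_2[x]$ and then finish with a degree count. By Theorem~\ref{thm:mds1}, $\mathcal{C}_1(k,r,d,p)$ (resp. $\mathcal{C}_2(k,r,d,p)$) is MDS once we know that, for every $\ell\times\ell$ sub-matrix $M$ of \eqref{matrixp} with $1\le\ell\le\min\{k,r\}=r$ (resp. every $r\times r$ sub-matrix $M$ of \eqref{matrixh}), the polynomial $\det M$ is nonzero in $\mathbb{F}_2[x]/(f_i(x))$ for every $i$. Since each entry of \eqref{matrixp} and \eqref{matrixh} is a monomial, $\det M$ is an honest polynomial in $\mathbb{F}_2[x]$. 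I would therefore establish two facts about such an $M$: (i) $\det M\ne 0$ in $\mathbb{F}_2[x]$, and (ii) $\deg(\det M)$ is at most the quantity in \eqref{eq:suff1} (resp. \eqref{eq:suff2}). Granting these, since $\deg f_1(x)\le\deg f_i(x)$ for all $i$ and $\deg f_1(x)$ strictly exceeds that quantity, $\det M$ is a nonzero polynomial of degree strictly smaller than $\deg f_i(x)$, hence is not divisible by the irreducible $f_i(x)$; thus $\det M\not\equiv 0\pmod{f_i(x)}$ for all $i$, and Theorem~\ref{thm:mds1} gives the result.

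For (ii), expand $\det M=\sum_{\sigma}\prod_i M_{a_i,\sigma(a_i)}$ over the permutations $\sigma$ of the chosen columns; each summand is $x^{E(\sigma)}$, where $E(\sigma)$ is the sum of the $\ell$ selected exponents, so $\deg(\det M)\le\max_\sigma E(\sigma)$. In \eqref{matrixp} the left (Vandermonde) block contributes power $\eta^{i-1}$ in row $i$ while the $180^\circ$-rotated right block contributes power $\eta^{k-i}$, with column ``multipliers'' ranging over $\{0,1,\dots,d-k\}$; a rearrangement argument (match large multipliers to the rows carrying the largest $\eta$-power) pins down the optimal $\sigma$ together with the optimal choice of rows and columns, and the resulting extremal value is a sum of geometric progressions in $\eta$ whose evaluation yields \eqref{eq:suff1}; the analogous optimization for \eqref{matrixh}, using its block layout (two zero blocks and shared middle rows), yields \eqref{eq:suff2}. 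For (i), the key is the $\eta$-ary design of the exponents: in the Vandermonde blocks a product over a permutation reads off a base-$\eta$ expansion, so distinct permutations of a fixed set of distinct multipliers into a fixed set of distinct $\eta$-power positions produce distinct exponent sums — no cancellation occurs and $\det M$ is a sum of $\ell!$ pairwise-distinct monomials, in particular nonzero. When $M$ mixes columns from the two blocks of \eqref{matrixp}, one argues the same way after separating the left-block and right-block contributions; for \eqref{matrixh} the zero blocks split $\det M$ into a product of smaller Vandermonde-type determinants, and the all-ones rows (which carry exponent $0$) are seen not to disturb uniqueness of the extremal monomial.

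I expect sub-claim~(i) — the absence of cancellation in $\det M$ over the binary field — to be the main obstacle, because over $\mathbb{F}_2$ there are no signs and equal monomials simply collapse; the whole purpose of the exponents $c\,\eta^{i-1}$ and of the $180^\circ$-rotated right block (resp. the overlapping zero-block layout of \eqref{matrixh}) is to enforce positional uniqueness, and making that bookkeeping airtight for every sub-matrix, including those straddling both blocks, is the delicate part. The hypothesis $k\ge r$ is used here and in (ii): it guarantees enough rows with distinct $\eta$-powers to treat each $\ell\le r$ as a genuine $\ell\times\ell$ generalized-Vandermonde problem and keeps the extremal assignment in the regime where the geometric-series estimate takes the stated closed form. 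Finally I would emphasize that this is only a \emph{sufficient} condition: $\det M$ is always a nonzero polynomial, but it may still vanish modulo a small-degree factor of $h(x)$ — exactly what happens in Examples~\ref{example1} and~\ref{example2} — which is precisely the pathology the degree hypothesis on $f_1(x)$ excludes.
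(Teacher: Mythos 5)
Your overall skeleton coincides with the paper's: invoke Theorem~\ref{thm:mds1}, observe that a nonzero polynomial of degree smaller than $\deg f_1(x)$ cannot be divisible by any irreducible factor $f_i(x)$, and then bound the largest exponent appearing in any square sub-matrix determinant by an exchange argument that pins down the extremal entries; your part~(ii) is essentially the paper's computation of \eqref{eq:suff1} and \eqref{eq:suff2}. The genuine gap is in your part~(i). The lemma you lean on --- that $\det M$ is a sum of $\ell!$ pairwise-distinct monomials, so no cancellation can occur over $\mathbb{F}_2$ --- is false as soon as the sub-matrix straddles the two blocks of \eqref{matrixp}: a single row then carries two different scales $\eta^{i-1}$ and $\eta^{k-i}$, the same scale reappears in the symmetric row, and the base-$\eta$ ``digit'' argument breaks down. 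Concretely, take $k=5$, $r=3$ (so $d=6$, $\eta=2$, $\tau=8$, and $k\ge r$ as the theorem requires), rows $2,3,4$ and all three columns of \eqref{matrixp}:
\begin{equation*}
M=\begin{bmatrix} 1 & x^2 & x^8 \\ 1 & x^4 & x^4 \\ 1 & x^8 & x^2 \end{bmatrix},
\end{equation*}
whose six permutation products are $x^6,\,x^{12},\,x^4,\,x^6,\,x^{16},\,x^{12}$: two pairs collide and cancel, leaving $\det M=x^4+x^{16}$. The determinant is still nonzero, but not for the reason you give, and you yourself flag this no-cancellation bookkeeping as the crux of your proof, so the argument as proposed does not go through. (The same phenomenon occurs for larger $r$, e.g.\ $k=r=5$ with rows $2,3,4$ and columns $1,2,5$.)

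What is actually needed is weaker and is what you should aim at: for each fixed sub-matrix it suffices that its \emph{top-degree} monomial occurs an odd number of times (e.g.\ uniquely), which one can try to establish by running the exchange argument \emph{inside} that sub-matrix rather than over all sub-matrices, instead of claiming all monomials distinct. It is worth noting that the paper's own proof does not address non-vanishing at all --- it speaks of ``the non-zero determinant'' and only carries out the degree maximization --- so your instinct to isolate (i) as a separate obligation is sound; it is your proposed proof of it that fails. A similar caution applies to your treatment of \eqref{matrixh}: a generic $r\times r$ sub-matrix of the check matrix mixes columns from all three bands (the two half-zero bands and the full middle band) and is not block-triangular, so its determinant does not split into a product of smaller Vandermonde-type determinants as you assert; the paper handles \eqref{matrixh} by the same entry-exchange degree argument, not by a factorization.
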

\begin{proof}
It is easy to see that \eqref{eq:suff1} is larger than $d-k$ and \eqref{eq:suff2} is larger than $\frac{r}{2}$. By Theorem~\ref{thm:mds1}, we should show that the determinants of all sub-matrices are invertible in $\mathbb{F}_2[x]/(f_i(x))$ for $i=1,2,\ldots,t$. If the maximum degree of the non-zero determinant is less than $\deg(f_1(x))$, then the determinant is invertible in $\mathbb{F}_2[x]/(f_i(x))$.

Since any $\ell\times \ell$, $1\le \ell\le r$, sub-matrix of \eqref{matrixp}  is contained in an $r\times r$ sub-matrix,  the maximum degree among all  determinants of $\ell\times \ell$ sub-matrices is no larger than that of all determinants of $r\times r$ sub-matrices. It is sufficient to calculate the maximum degree of the determinants of all $r\times r$ sub-matrices of \eqref{matrixp} for $\mathcal{C}_1(k,r,d,p)$.

Note that the size of the matrix \eqref{matrixp} is $k\times r$, we need to first choose $r$ rows from the $k$ rows to form an $r\times r$ sub-matrix and then calculate the maximum exponent of the determinant of the $r\times r$ sub-matrix. The determinant is computed as the summation (with plus or minus signs) of all possible multiplications of $r$ entries that are in different rows and different columns. Denote the row index and column index of $\ell$-th entry that is involved in computing the determinant with maximum degree among all the determinants as $i_{\ell}$ and $j_{\ell}$, respectively, where $\ell=1,2,\ldots,r$, $1\leq i_1<\cdots <i_r\leq k$ and $(j_1,j_2,\ldots,j_r)$ is a permutation of $(1,2,\ldots,r)$. There exists an integer $t$ for $1\leq t \leq r$ such that $j_{t}=d-k+1$.
If the maximum degree of all the determinants does not contain the entry $x^{(d-k)\eta^{k-2}}$ in column $d-k+1$ row $k-1$, i.e., $i_\ell\neq k-1$ and $j_\ell\neq d-k+1$ for all $\ell$, then we have that the exponent of the multiplication of $r$ entries with row indices 
\[
\{i_1,i_2,\ldots,i_{t-1},i_{t+1},\ldots,i_{r},k-1\}
\]
and column indices 
\[
\{j_1,j_2,\ldots,j_{t-1},j_{t+1},\ldots,j_{r},d-k+1\}
\]
is larger than the exponent of the determinant with row indices $i_1,i_2,\ldots,i_{r}$ and columns indices $j_1,j_2,\ldots,j_{r}$, as $(d-k)\eta^{k-2}$ is the largest exponent of all entries in \eqref{matrixp} and is larger than the exponent of all the entries in column $d-k+1$ except $x^{(d-k)\eta^{k-2}}$. This contradicts to the assumption that the determinant with row indices $i_1,i_2,\ldots,i_{r}$ and columns indices $j_1,j_2,\ldots,j_{r}$ has the maximum degree among all the determinants. Therefore, the maximum degree of all the determinants should contain the entry $x^{(d-k)\eta^{k-2}}$ in column $d-k+1$ and row $k-1$. Then, it is reduced to find $r-1$ entries that are in different columns and rows with each other in the $k\times r$ matrix \eqref{matrixp} except row $k-1$ and column $d-k+1$. By the same argument, we can obtain that the maximum degree of all the determinants contains the entry $x^{(d-k)\eta^{k-2}}$ in column $d-k+2$ and row $2$. Repeating the above procedure, we can obtain that the maximum degree of all determinants contains the entries in column $d-k+2-\ell$ and row $k-\ell$ for $\ell=1,2,\ldots,(r-1)/2+1$, and entries in column $d-k+1+\ell$ and row $1+\ell$ for $\ell=1,2,\ldots,(r-1)/2$.
It can be computed that the maximum degree is $2((d-k)\eta^{k-2}+(d-k-1)\eta^{k-3}+\cdots+\eta^{k-(r-1)/2-1})$, which is~\eqref{eq:suff1}. 
By repeating the same argument for \eqref{matrixh}, the maximum degree among all the determinants is achieved when the row indices of the $r\times r$ sub-matrix are $d-\frac{r}{2}-4$ to $d+\frac{r}{2}-5$ and the maximum degree is as given in~\eqref{eq:suff2}.
\end{proof}
By Theorem~\ref{thm:7}, we can choose $p$ with $\deg(f_1(x))$ is larger than ~\eqref{eq:suff1} and \eqref{eq:suff2} to ensure the MDS property for $\mathcal{C}_1(k,r,d,p)$ and $\mathcal{C}_2(k,r,d,p)$ respectively. Although this lower bound given in Theorem~\ref{thm:7}  is exponentially increasing on $k,d$, and $r$, we will show that the lower bound can be greatly reduced when the parameters are specified in the following. 

If 2 is a primitive element in $\mathbb{Z}_p$ and $\tau$ is a power of $p$, $h(x)$ is irreducible and $\mathbb{F}_2[x]/(h(x))$ is the finite field $\mathbb{F}_{2^{(p-1)\tau}}$~\cite{Itoh1991Characterization}. If $d-k+1=p$ for $\mathcal{C}_1(k,r,d,p)$ or $\frac{r}{2}=p$ for $\mathcal{C}_2(k,r,d,p)$, then $\tau$ is a power of $p$.  According to Theorem \ref{thm:mds1}, the MDS condition is reduced to that the determinant of each sub-matrix is non-zero in $\mathbb{F}_2[x]/(h(x))$. It can be shown, by computer search, that $\mathcal{C}_1(k,5,k+2,3)$ is an MDS code for $k=3,4,\ldots,12$. 
If $\tau$ is a power of 2, we have
\begin{equation}
h(x)=1+x^{\tau}+\cdots+x^{(p-1)\tau}=(1+x+\cdots+x^{p-1})^{\tau}.
\label{eq:check-factor}
\end{equation}
As $1+x+\cdots+x^{p-1}$ is irreducible \cite{Fenn1997Bit}, we can directly have the next theorem by Theorem \ref{thm:mds1}.

\begin{theorem}
If $\tau$ is a power of 2, 
and the determinant of any $\ell\times \ell$ sub-matrix of \eqref{matrixp} for $\ell=1,2,\ldots,\min\{k,r\}$ (or any $r\times r$ sub-matrix of \eqref{matrixh}) is invertible in $\mathbb{F}_2[x]/(1+x+\cdots+x^{p-1})$, then $\mathcal{C}_1(k,r,d,p)$ (or $\mathcal{C}_2(k,r,d,p)$) satisfy the MDS property.
\label{thm:MDSs}
\end{theorem}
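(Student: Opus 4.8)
The plan is to obtain Theorem~\ref{thm:MDSs} as a direct specialization of Theorem~\ref{thm:mds1}, using the factorization of the check polynomial recorded in \eqref{eq:check-factor}. First I would note that when $\tau=2^s$, the Frobenius (``freshman's dream'') identity over $\mathbb{F}_2$ gives
\[
h(x)=1+x^{\tau}+\cdots+x^{(p-1)\tau}=\bigl(1+x+\cdots+x^{p-1}\bigr)^{2^s},
\]
which is exactly \eqref{eq:check-factor}. Since $2$ is, by the standing assumption on $p$, a primitive element of $\mathbb{Z}_p$, the polynomial $1+x+\cdots+x^{p-1}$ is the $p$-th cyclotomic polynomial and is irreducible over $\mathbb{F}_2$ (see \cite{Fenn1997Bit}). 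Hence, in the notation of \eqref{eq:factor}, the factorization of $h(x)$ into powers of distinct irreducibles has $t=1$, $f_1(x)=1+x+\cdots+x^{p-1}$, and $\ell_1=\tau$.

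Next I would invoke Theorem~\ref{thm:mds1} with this single-factor factorization: $\mathcal{C}_1(k,r,d,p)$ (respectively $\mathcal{C}_2(k,r,d,p)$) is an MDS code if and only if the determinant of every $\ell\times\ell$ sub-matrix of \eqref{matrixp} for $\ell=1,\ldots,\min\{k,r\}$ (respectively every $r\times r$ sub-matrix of \eqref{matrixh}) is a non-zero polynomial in $\mathbb{F}_2[x]/(f_1(x))$. Because $f_1(x)$ is irreducible over $\mathbb{F}_2$, the quotient $\mathbb{F}_2[x]/(f_1(x))\cong\mathbb{F}_{2^{p-1}}$ is a field, so a residue is non-zero precisely when it is invertible; moreover, reducing the monomial entries of \eqref{matrixp} or \eqref{matrixh} modulo $f_1(x)$ and then forming the determinant agrees with forming the determinant over $\mathbb{F}_2[x]$ and reducing afterwards, since reduction modulo $f_1(x)$ is a ring homomorphism. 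Thus the hypothesis of Theorem~\ref{thm:MDSs} --- that each such determinant is invertible in $\mathbb{F}_2[x]/(1+x+\cdots+x^{p-1})$ --- is literally the criterion supplied by Theorem~\ref{thm:mds1}, and the conclusion follows. (The reduction, for $\mathcal{C}_1$, from $k\times k$ sub-matrices of $\mathbf{G}_{k\times(k+r)}$ to square sub-matrices of the encoding matrix \eqref{matrixp} is already incorporated in Theorem~\ref{thm:mds1} via the results of \cite{MacWilliamsSloane}, so nothing further is needed there.)

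There is essentially no substantial obstacle here; the statement is a corollary of the machinery already developed, and the proof is short. The only points that need care are the two algebraic facts being imported --- the Frobenius collapse of $h(x)$ to a perfect $2^s$-th power of a single irreducible, and the irreducibility of $1+x+\cdots+x^{p-1}$ over $\mathbb{F}_2$ under the primitivity hypothesis on $2$ modulo $p$ --- together with the routine observation that ``non-zero'' and ``invertible'' coincide in the field $\mathbb{F}_2[x]/(1+x+\cdots+x^{p-1})$.
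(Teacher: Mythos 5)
Your proposal is correct and follows essentially the same route as the paper: the paper likewise observes that for $\tau$ a power of $2$ one has $h(x)=(1+x+\cdots+x^{p-1})^{\tau}$, cites the irreducibility of $1+x+\cdots+x^{p-1}$ (via \cite{Fenn1997Bit}, under the standing assumption that $2$ is primitive modulo $p$), and then applies Theorem~\ref{thm:mds1} with a single irreducible factor. Your added remarks --- that non-zero and invertible coincide in the resulting field and that reduction modulo $f_1(x)$ commutes with taking determinants --- are exactly the implicit steps the paper leaves unstated, so nothing is missing.
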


Note that, by Theorem \ref{thm:MDSs}, the codes given in Example~\ref{example1} and Example~\ref{example2} are not MDS codes due to the fact that there exist  sub-matrices for both codes whose determinants are divisible by $1+x+x^2$.

Next, we characterize the detailed MDS condition for codes with some specific parameters.
The MDS condition of $\mathcal{C}_1(k,3,d,p)$ given in~\cite{hanxu2017triple} is a special case and is summarized in the next theorem. 
\begin{theorem}~\cite[Theorem 2]{hanxu2017triple}
Let $k\geq 4$. If $p\ge 2k-1$ is a prime such that 2 is a primitive element in $\mathbb{Z}_p$,  then the code $\mathcal{C}_1(k,3,d,p)$ satisfies the MDS property.
\label{thm:MDS5}
\end{theorem}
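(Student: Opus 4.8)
The plan is to verify the MDS condition through Theorem~\ref{thm:MDSs}. Specialising the first construction to $r=3$ forces $d=k+1$, hence $\eta=d-k+1=2$ and $\tau=(d-k+1)^{k-2}=2^{k-2}$, which is a power of $2$; and since $2$ is a primitive element of $\mathbb{Z}_p$, the polynomial $g(x):=1+x+\cdots+x^{p-1}$ is irreducible over $\mathbb{F}_2$, so by \eqref{eq:check-factor} it is the unique irreducible factor of $h(x)$. By Theorem~\ref{thm:MDSs} it then suffices to show that every $\ell\times\ell$ sub-matrix of $\mathbf{P}_{k\times 3}$ in \eqref{matrixp}, for $\ell\in\{1,2,3\}$, has a determinant that is invertible in $\mathbb{F}_2[x]/(g(x))\cong\mathbb{F}_{2^{p-1}}$. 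In that field the image $\zeta$ of $x$ satisfies $\zeta^p=1$ with $\zeta\neq 1$, so $\zeta^a=\zeta^b$ iff $a\equiv b\pmod p$; thus each determinant is evaluated by reducing all exponents modulo $p$, and the whole problem becomes arithmetic of exponents modulo $p$.

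The next step is a counting observation that carries most of the weight: the only multiples of $g(x)$ of degree at most $p-1$ are $0$ and $g(x)$ itself, and $g(x)$ has exactly $p$ nonzero monomials; meanwhile a determinant of an $\ell\times\ell$ sub-matrix with $\ell\le 3$ expands, over $\mathbb{F}_2$, into at most $\ell!\le 6$ monomials, and $p\ge 2k-1\ge 7>6$. Hence such a determinant is a unit in $\mathbb{F}_{2^{p-1}}$ precisely when its monomials, after reduction of exponents modulo $p$, do not cancel completely. So it is enough to prove: (i) $\ell=1$ --- every entry of \eqref{matrixp} is a monomial $x^a$, hence a unit; (ii) $\ell=2$ --- for each choice of two rows and two columns the two monomial exponents are distinct modulo $p$; (iii) $\ell=3$ --- for each triple of rows (all three columns) the six monomial exponents do not split into three coinciding pairs.

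For (ii) I would exploit that the first column of $\mathbf{P}_{k\times 3}$ is the all-ones column: any $2\times 2$ minor that uses it is $x^a+x^b$ with $a,b$ two of the exponents occurring in column $2$ (or column $3$), and one checks that those exponents are pairwise distinct modulo $p$; the minors using columns $2$ and $3$ reduce to pairwise distinctness modulo $p$ of the row-wise "twist" differences (column-$2$ exponent minus column-$3$ exponent). For (iii), the constraints established in (ii) already forbid every coincidence among the six exponents except a small set of "cyclic" ones, and one finishes by checking directly that those cannot form a perfect pairing. In every case the forbidden equality, after cancelling a common power of $x$ and a factor of the shape $x^{c}+1$, collapses to a congruence of one of the forms $2^{u}\equiv 1$, $2^{u}\equiv -1$, or $2^{u}\equiv 2^{v}\pmod p$, in which the relevant exponent offset (respectively $|u-v|$) lies in a window of bounded width --- of order $k$; since $2$ has multiplicative order $p-1$ and $-1=2^{(p-1)/2}$ with $(p-1)/2\ge k-1$, these offsets are too small to reach $0 \pmod{p-1}$ or the half-period $(p-1)/2$, and the congruences are impossible. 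This is exactly where the threshold $p\ge 2k-1$ is consumed.

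The main obstacle is the bookkeeping behind (ii) and (iii): enumerating precisely which exponent differences arise in the twist congruences and in the cyclic coincidences of the $3\times 3$ expansion, and proving that each stays within the window whose width is controlled by $p\ge 2k-1$ --- in particular, that it is nonzero modulo $p-1$ and has absolute value below $(p-1)/2$. The ``boundary'' rows of $\mathbf{P}_{k\times 3}$, namely those whose column-$2$ or column-$3$ entry is $x^{0}$ or $x^{1}$ rather than a generic entry, must be folded into the generic case split, and keeping the exponent window under control precisely there is the delicate point; once the case analysis is organised this way, primitivity of $2$ together with $p\ge 2k-1$ closes every case and the MDS property follows.
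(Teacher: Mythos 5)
Your opening reduction is sound and is the natural route: for $r=3$ one has $\eta=2$, $\tau=2^{k-2}$, so Theorem~\ref{thm:MDSs} applies, primitivity of $2$ makes $g(x)=1+x+\cdots+x^{p-1}$ irreducible, and since every $\ell\times\ell$ minor of \eqref{matrixp} with $\ell\le 3$ has at most $3!=6<p$ monomials, it is non-invertible modulo $g(x)$ exactly when its exponents cancel in even multiplicities after reduction modulo $p$ (note your item (iii) should allow $4{+}2$ or all-six coincidences, not only three pairs, though that is a minor point). Your window argument also genuinely works for the interior rows: for rows $2\le i<j\le k-1$ and the two non-constant columns the vanishing condition factors as $(2^{j-i}-1)(2^{i-1}+2^{k-j})\equiv 0\pmod p$, i.e.\ $2^{j-i}\equiv 1$ or $2^{k+1-i-j}\equiv -1$, and these are excluded because $\mathrm{ord}_p(2)=p-1\ge 2k-2$ and $-1=2^{(p-1)/2}$ with $(p-1)/2\ge k-1$.

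The genuine gap is exactly at the ``boundary'' rows you defer to bookkeeping. For rows $\{1,j\}$ (and symmetrically $\{i,k\}$) on columns $2,3$ the minor is $x^{1+2^{k-j}}+x^{2^{j-1}}$, so the condition to rule out is the \emph{trinomial} congruence $2^{j-1}\equiv 2^{k-j}+1\pmod p$, which does not collapse, after removing $x^{c}$ and a factor $1+x^{c}$, to any of the forms $2^{u}\equiv 1$, $2^{u}\equiv-1$, $2^{u}\equiv 2^{v}$; and since $2$ is primitive, $2^{k-j}+1$ is itself a power of $2$, so nothing in $p\ge 2k-1$ keeps its discrete logarithm out of the window $\{1,\ldots,k-2\}$. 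Concretely, take $k=6$ and $p=13$ (so $p\ge 2k-1=11$ and $2$ is primitive modulo $13$): rows $\{1,5\}$ and columns $\{2,3\}$ of \eqref{matrixp} give $\left[\begin{smallmatrix} x & 1\\ x^{16} & x^{2}\end{smallmatrix}\right]$ with determinant $x^{3}+x^{16}=x^{3}(1+x^{13})$, which is divisible by $1+x+\cdots+x^{12}$ because $16\equiv 3\pmod{13}$; hence this minor is not $e(x)$-invertible and your case analysis cannot close this case. This shows the proposed argument cannot be completed for the matrix exactly as printed in \eqref{matrixp} (indeed, for these parameters that matrix violates the MDS criterion of Theorem~\ref{thm:mds1}); the paper itself gives no proof of Theorem~\ref{thm:MDS5} but cites the conference version, whose argument must either treat these boundary-row trinomial congruences directly or rely on details of the construction that differ from what your window-of-exponents reduction captures.
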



Next, we consider the MDS condition of $\mathcal{C}_2(k,4,d,p)$. When $r=4$, we have $\tau=2^{k}$. By Theorem~\ref{thm:MDSs}, we need to prove that the determinant of each $4 \times 4$ sub-matrix of $\mathbf{H}_{4\times (k+4)}$ \eqref{check4} is not divisible by $1+x+\cdots+x^{p-1}$. For any two positive integers $i,j$ such that $i<j$, we have
\[
x^i+x^j=x^i(1+x^{j-i})=x^i(1+x)(1+x+\cdots+x^{j-i-1}).
\]
A polynomial with even number of terms can be written as multiple pairs of $x^i+x^j$, we thus have that a polynomial with even number of terms must have a factor of $1+x$. It is easy to check that any determinant $f(x)$ of the $4 \times 4$ sub-matrix has even number of terms and $f(x)$ can be written as $f(x)=(1+x)g(x)$. Suppose that the determinant $f(x)$ is not divisible by $1+x+\cdots+x^{p-1}$, then $g(x)$ is not divisible by $1+x+\cdots+x^{p-1}$ because $1+x+\cdots+x^{p-1}$ is irreducible polynomial and is not a multiple of $1+x$. Recall that the polynomial $1+x^p$ can be factorized as $1+x^p=(1+x)(1+x+\cdots+x^{p-1})$, so $f(x)$ is not divisible by $1+x^p$. Conversely, if $f(x)$ is not divisible by $1+x^p$, then we can directly have that $f(x)$ is not divisible by $1+x+\cdots+x^{p-1}$. Therefore, it is sufficient to show that the determinant is not divisible by $1+x^p$.
An upper bound of $p$ for which $\mathcal{C}_2(k,4,d,p)$ is MDS is summarized in the next theorem.

\begin{theorem}
Let $p$ be a prime such that 2 is a primitive element in $\mathbb{Z}_p$. If $p$ is larger than 
\begin{equation}
(k-1)\cdot 2^k+17,
\label{eq:valuep}
\end{equation}
then $\mathcal{C}_2(k,4,d,p)$ is MDS for $k\geq 6$.
\label{thm:MDS2}
\end{theorem}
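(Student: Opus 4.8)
The plan is to carry out the reduction already prepared just before the statement. Since $\tau=2^{k}$ is a power of $2$, Theorem~\ref{thm:MDSs} applies, so it suffices to prove that the determinant $f(x)$ of every $4\times 4$ submatrix of the check matrix~\eqref{check4} is not divisible by $1+x^{p}$; equivalently, writing $f(x)=(1+x)g(x)$ as in the discussion preceding the theorem (every such determinant has an even number of terms, hence the factor $1+x$), that $1+x+\cdots+x^{p-1}$ does not divide $f(x)$, since this polynomial is irreducible over $\mathbb{F}_2$ (as $2$ is a primitive element of $\mathbb{Z}_p$) and is coprime to $1+x$. I would fix four column indices $c_1<c_2<c_3<c_4$ of~\eqref{check4} and expand the corresponding $4\times4$ determinant over $\mathbb{F}_2$ as $f(x)=\sum_{\sigma}x^{E_\sigma}$, the sum running over the permutations $\sigma$ that select only non-zero entries and $E_\sigma$ being the sum of the four selected exponents. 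Reading the exponents off~\eqref{check4}: row~$1$ always contributes $0$, rows~$2$ and~$3$ each contribute $0$ or one of $2^{0},2^{1},\dots,2^{k}$, and row~$4$ contributes $0$ or one of $2^{k},2\cdot2^{k},\dots,k\cdot2^{k}$. Hence every $E_\sigma$ has the shape $a\cdot2^{k}+2^{i}+2^{j}$ with $a\in\{0,1,\dots,k\}$ and $i,j\in\{0,1,\dots,k\}$ (a missing summand read as $0$), and there are at most $4!=24$ of them.

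The crux is to show that reduction modulo $1+x^{p}$ cannot annihilate $f(x)$. Every $E_\sigma$ satisfies $E_\sigma\le k\cdot2^{k}+2^{k+1}<2p$ under the hypothesis $p>(k-1)2^{k}+17$ (here $k\ge6$ is comfortably enough), so any two distinct exponents that are congruent modulo $p$ must differ by exactly $p$. Consequently $f(x)\equiv 0\pmod{1+x^{p}}$ would force the monomials of $f(x)$ to pair off into pairs whose exponents differ by $p$; in particular the top monomial $x^{M}$, $M=\deg f$, would have to be matched by a genuine monomial $x^{M-p}$ of $f(x)$, that is, by a permutation $\sigma'$ with $E_{\sigma'}=M-p$. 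I would rule this out through a short case analysis on how $\{c_1,\dots,c_4\}$ meets the two degenerate blocks $\{1,2\}$ (where rows $3,4$ vanish) and $\{k+3,k+4\}$ (where rows $1,2$ vanish): exploiting the separation of scales between the $a\cdot2^{k}$ part and the $2^{i}$ part of the exponents, one locates $M$, checks that it is attained by a \emph{unique} $\sigma$ (so in particular $f\neq0$ and $M=\deg f$ is well defined), and then verifies that no attainable $E_\sigma$ equals $M-p$ for any prime $p>(k-1)2^{k}+17$ — and this inequality is exactly the threshold~\eqref{eq:valuep}, the tightest instance being the ``middle'' $4\times4$ submatrix already singled out in the proof of Theorem~\ref{thm:7}. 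Therefore $x^{M\bmod p}$ retains coefficient $1$ in $f(x)\bmod(1+x^{p})$, so $f(x)$ is not divisible by $1+x^{p}$; the lowest-degree monomial of $f(x)$ can be used symmetrically if preferred.

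Having established this for all $\binom{k+4}{4}$ choices of columns, Theorem~\ref{thm:MDSs} yields that $\mathcal{C}_2(k,4,d,p)$ satisfies the MDS property, which is the assertion. The two hypotheses enter precisely at the two quantitative points above: $k\ge6$ makes the estimate $k\cdot2^{k}+2^{k+1}<2p$ valid and keeps the column case analysis non-degenerate, while $p>(k-1)2^{k}+17$ is the exact point beyond which no $4\times4$ minor can contain a matched pair $x^{M},x^{M-p}$. I expect the real work to be this last combinatorial bookkeeping: enumerating the possible column patterns, pinning down $\mathrm{ord}\,f$ and $\deg f$ of the associated determinant, confirming uniqueness of the leading monomial in each pattern, and checking that $M-p$ is never an attainable exponent for an admissible prime $p$ — elementary but laborious.
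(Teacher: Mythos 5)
Your reduction is sound and matches the paper's: by Theorem~\ref{thm:MDSs} it suffices to show no $4\times 4$ minor of \eqref{check4} is divisible by $1+x^{p}$, every such minor has at most $24$ terms, all exponents are below $2p$ under \eqref{eq:valuep}, and divisibility would force the surviving monomials to pair off with differences exactly $p$. The gap is in the step you lean on to finish: that the top monomial $x^{M}$ must pair with $x^{M-p}$ and that ``no attainable $E_\sigma$ equals $M-p$ for any admissible $p$.'' The bound \eqref{eq:valuep} does \emph{not} make the exponent span of every minor smaller than $p$: e.g.\ for the minor on columns $3,k+2,4,5$ the exponents range from $12$ up to $k2^{k}+2^{k-1}+4$, a span exceeding $(k-1)2^{k}+17$, so for $p$ near the threshold $M-p$ lies strictly inside the attainable range. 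Whether it then \emph{coincides} with an attainable exponent is not a structural question but an arithmetic one about the particular prime $p$. Concretely, take the minor on columns $1,4,5,k+4$: its eight exponents are $2^{k-1}+1,\,2^{k-1}+8,\,2^{k}+1,\,2^{k}+16,\,(k-1)2^{k}+2,\,(k-1)2^{k}+9,\,k2^{k}+2,\,k2^{k}+17$, so $M=k2^{k}+17$ and $E=2^{k-1}+8$ is attainable with $M-E=k2^{k}-2^{k-1}+9$, an \emph{odd} integer exceeding $(k-1)2^{k}+17$ for all $k\ge 5$. If for some $k$ this integer happens to be a prime with $2$ primitive (your hypotheses cannot exclude this; it is a primality question about exponentially large integers, not ``bookkeeping''), then $M-p$ is an attainable, non-cancelling exponent and your argument is silent: a matched top monomial does not by itself certify non-divisibility, since divisibility requires \emph{all} residue classes mod $p$ to have even multiplicity, and you have no mechanism to inspect the rest.

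This is exactly where the paper's criterion is more robust than yours: it shows for each remaining minor that either $e_t-e_1<p-1$ or the \emph{median} exponent satisfies $e_{1+t/2}<p-1$. In the latter case the $1+t/2$ smallest exponents are all distinct residues modulo $p$, while only $t/2-1$ exponents remain to pair with them, so by pigeonhole some residue class has odd multiplicity and $1+x^{p}$ cannot divide the determinant — no knowledge of the exact value of $p$ beyond the lower bound is needed. (For the minor $\{1,4,5,k+4\}$ above, $e_5=(k-1)2^{k}+2<p-1$, and the problem disappears.) To repair your proof, replace the single-top-monomial pairing test by this lower-half pigeonhole (or otherwise argue about more than one monomial simultaneously); as written, the claim that \eqref{eq:valuep} is ``the exact point beyond which no minor can contain a matched pair $x^{M},x^{M-p}$'' is false, and the case analysis you defer cannot be completed in the form you describe.
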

\begin{proof}
As $r=4$, we have $\frac{r}{2}=2$ that is strictly less than the value in \eqref{eq:valuep}. 
Note that each entry in \eqref{check4} is a polynomial with at most one non-zero term. For each $4\times 4$ sub-matrix of \eqref{check4}, each entry is also a polynomial with at most one non-zero term. By expanding the determinant of $4\times 4$ sub-matrix,  the determinant is a polynomial over $\mathbb{F}_2[x]$ with at most 24 non-zero terms  and can be written as
\[
x^{e_{1}}+x^{e_{2}}+\cdots+x^{e_{t}}=x^{e_{1}}(1+x^{e_{2}-e_{1}}+\cdots+x^{e_{t}-e_{1}}),
\]
where $t$ is a positive even number with $t\leq 24$, and $e_{1}<\cdots <e_{t}$.
If $e_{t}-e_{1}<p-1$, then $e_i-e_1<p-1$ for $i=2,\ldots,t$ and the determinant is a non-zero polynomial in $\mathbb{F}_2[x]/(1+x+\cdots+x^{p-1})$. If $e_{1+t/2}<p-1$, then $e_i<p-1$ for $i=1,2,\ldots,1+t/2$ and there exists at least one $i$ such that $e_i$ and $e_{j}$ are not congruent modulo $p$ for $j=1,\ldots,i,i+1,\ldots,t$. This is due to the fact that the number of $e_i$ that can be chosen from is larger than the number of $e_j$ for $j=2+t/2, \ldots, t$, and $e_j<p-1$ for $j=1,2,\ldots, i,i+1,\ldots, 1+t/2$. Hence, the determinant is not divisible by $1+x^{p}$. Thus,  $e_{t}-e_{1}\ge p-1$, $e_{1+t/2}\ge p-1$ is the remaining case to be proved.

Let $\mathbf{H}_1$ and $\mathbf{H}_2$  be the sub-matrices of $\mathbf{H}_{4\times (k+4)}$ given in \eqref{check4} with the column-vector indices being $1,2,3,k+2,k+3,k+4$ and 4 to $k+1$ respectively. Hence, we have
\begin{equation}
\mathbf{H}_1 \triangleq
\begin{bmatrix}
 1&x&0 & 0 \\
 1 & x^{2} & 0 & 0\\
 1 & x^{4} & 1 & 1\\
1 & 1 & x^{4} & x^{2\cdot 2^{k}}\\
  0 & 0 & x^{2} & x^{2^{k}} \\
 0 & 0  &x&1
 \end{bmatrix}^T
\label{matrixh4}
\end{equation}
and
\begin{equation}
\mathbf{H}_2 \triangleq
\begin{bmatrix}
 1 &x^8& x^{2^{k}}& x^{k\cdot 2^{k}}\\
 1& x^{16}&x^{2^{k-1}} & x^{(k-1)\cdot 2^{k}}\\
 \vdots& \vdots&\vdots&  \vdots  \\
 1 & x^{2^{k}} &x^{8} & x^{3\cdot 2^{k}}
 \end{bmatrix}^T.
\label{matrixh5}
\end{equation}

 For $k\geq 6$, it is sufficient to show that the determinant of the matrix consisting of any $g$ rows of \eqref{matrixh4} and any $4-g$ rows of \eqref{matrixh5} is not divisible by $1+x^{p}$ for $g=0,1,2,3,4$.

When $g=0$, the $4\times 4$ sub-matrix with column-vector indices being $i,j,\ell,m$ is 
\[
\begin{bmatrix}
 1 &x^{2^{i+2}}&  x^{2^{k-i+1}} & x^{(k-i+1)\cdot 2^{k}}\\
1 & x^{2^{j+2}}&x^{2^{k-j+1}} &x^{(k-j+1)\cdot 2^{k}} \\
1& x^{2^{\ell+2}}&x^{2^{k-\ell+1}}& x^{(k-\ell+1)\cdot 2^{k}}   \\
 1& x^{2^{m+2}} & x^{2^{k-m+1}} & x^{(k-m+1)\cdot 2^{k}}
 \end{bmatrix}^T,
\]
where $1\leq i<j<\ell<m\leq k-2$. The determinant of it is
\begin{equation}
\sum_{\ell_1\neq\ell_2\neq\ell_3\in \{i,j,\ell,m\}}x^{2^{\ell_1+2}+2^{k-\ell_2+1}+(k-\ell_3+1)2^{k}},
\label{deterC}
\end{equation}
where the number of non-zero term is $t$. Since the maximum degree and the minimum degree of terms in~\eqref{deterC} is  respectively $2^{m+2}+2^{k-j+1}+(k-i+1)2^k$ and $2^{i+2}+2^{k-\ell+1}+(k-m+1)2^k$, 
\[
e_{t}-e_{1}=(m-i)2^k+2^{m+2}+2^{k-j+1}+2^{i+2}+2^{k-\ell+1}.
\]
Clearly, $e_{t}-e_{1}$ is maximal when $m=k-2,\ell=k-3,i=1,j=2$.  Therefore,
$e_{t}-e_{1}\leq (k-2)2^k+2^{k-1}-24$ and,  according to \eqref{eq:valuep},
it is less than $p-1$. 

Now consider $g=1$. When the $4\times 4$ sub-matrix is consisted of the fourth column-vector of $\mathbf{H}_1$ in \eqref{matrixh4} and column-vectors 1, 2, 3 of $\mathbf{H}_2$ in \eqref{matrixh5}, i.e., 
\[
\begin{bmatrix}
1 & 1 & x^4 & x^{2\cdot 2^k} \\
1 & x^8 & x^{2^k} & x^{k\cdot 2^k} \\
1 & x^{16} & x^{2^{k-1}} & x^{(k-1)\cdot 2^k} \\
1 & x^{32} & x^{2^{k-2}} & x^{(k-2)\cdot 2^k} 
\end{bmatrix}^T,
\]
the determinant of the above matrix is
\begin{align*}
\small
& x^{2\cdot 2^k+2^{k-2}+8}+ x^{2\cdot 2^k+2^{k-2}+16}+ x^{2\cdot 2^k+ 2^{k-1}+8}+ x^{2\cdot 2^k+ 2^{k-1}+32}\\
&+ x^{2\cdot 2^k+ 2^{k}+16}+ x^{2\cdot 2^k+ 2^{k}+32}+x^{(k-2)\cdot 2^k+12}+x^{(k-2)\cdot 2^k+20}\\
&  +x^{(k-2)\cdot 2^k+ 2^{k-1}}+x^{(k-2)\cdot 2^k+ 2^{k-1}+8}+x^{(k-2)\cdot 2^k+2^{k}}+  \\
& x^{(k-1)\cdot 2^k+12}+x^{(k-2)\cdot 2^k+ 2^{k}+16}+ x^{(k-1)\cdot 2^k+36}+ \\
&x^{(k-1)\cdot 2^k+ 2^{k-2}}+ x^{(k-1)\cdot 2^k+ 2^{k-2}+8}+ x^{(k-1)\cdot 2^k+ 2^{k}}+  \\
& x^{k\cdot 2^k+20}+x^{(k-1)\cdot 2^k+ 2^{k}+32}+ x^{k\cdot 2^k+36}+ x^{k\cdot 2^k+ 2^{k-2}}+ \\
&x^{k\cdot 2^k+ 2^{k-2}+16}+ x^{k\cdot 2^k+ 2^{k-1}}+ x^{k\cdot 2^k+ 2^{k-1}+32}.
\end{align*}
If $k\geq 8$, then the above polynomial has 24 terms and the exponents in the polynomial are in ascending order. Note that $e_{13}=(k-1)\cdot 2^k+16$, which is less than $p-1$. If $k=6$, the above polynomial becomes
\begin{align*}
& x^{152}+ x^{160}+ x^{168}+ x^{192}+ x^{208}+ x^{224}+ x^{268}+ x^{276}+ \\
&x^{288}+x^{296}+ x^{320}+x^{332}+x^{344}+x^{356}+x^{384}+x^{400}+\\
&x^{404}+x^{416}+x^{420}+x^{448}.
\end{align*}
The lower bound of $p-1$ in \eqref{eq:valuep} is 337 when $k=6$. Hence,  $e_{11}=320$ in the above equation  is less than $p-1$.  When $k=7$, the determinant becomes
\begin{align*}
& x^{296}+x^{304}+x^{328}+ x^{352}+x^{400}+ x^{416}+x^{652}+x^{660}+ \\
&x^{704}+ x^{712}+ x^{768}+x^{780}+x^{784}+x^{800}+x^{804}+x^{808}+\\
&x^{896}+x^{916}+x^{932}+x^{944}+x^{960}+x^{992}.
\end{align*}
The lower bound of $p-1$ in \eqref{eq:valuep} is 785 when $k=7$. Therefore, $e_{12}=780$ is less than $p-1$.
After expanding the determinant for all the other $4\times 4$ sub-matrices when $g=1$, we can also determine that $e_{1+t/2}<p-1$.

When $g=2$, we first consider the $4\times 4$ sub-matrix  consisting of the third column-vector, the fourth column-vector of \eqref{matrixh4} and column-vectors 1, 2 of \eqref{matrixh5}, i.e., 
\[
\begin{bmatrix}
1 & x^4 & 1 & 1 \\
1 & 1 & x^{4} &x^{2\cdot 2^k}  \\
1 & x^{8} & x^{2^{k}} & x^{k\cdot 2^k} \\
1 & x^{16} & x^{2^{k-1}} & x^{(k-1)\cdot 2^k} 
\end{bmatrix}^T.
\]
The determinant of this sub-matrix is 
\begin{align*}
& x^{12}+ x^{20}+ x^{2^{k-1}}+ x^{2^{k-1}+8}+ x^{2^{k}+4}+x^{2^{k}+16}+x^{2 \cdot 2^{k}+8}+\\
& x^{2 \cdot 2^{k}+16}+ x^{2 \cdot 2^{k}+2^{k-1}+4}+ x^{2 \cdot 2^{k}+2^{k-1}+8}+x^{2 \cdot 2^{k}+2^{k}+4}+\\
&  x^{2 \cdot 2^{k}+2^{k}+16}+ x^{(k-1) \cdot 2^{k}}+ x^{(k-1) \cdot 2^{k}+12}+ x^{(k-1) \cdot 2^{k}+2^{k}}+ \\
& x^{k \cdot 2^{k}+8}+ x^{k \cdot 2^{k}+16}+ x^{k \cdot 2^{k}+20}+ x^{k \cdot 2^{k}+2^{k-1}}+ x^{k \cdot 2^{k}+2^{k-1}+4}.
\end{align*}
When $k\geq 6$, we have $t=20$ and $e_{11}=3\cdot 2^k+4$ is less than $p-1$. Similarly, we can prove $e_{1+t/2}<p-1$ for the other cases when $g=2$.
When $g=3,4$, it is easy to check that either $e_{t}-e_{1}$ or $e_{1+t/2}$ is less than $p-1$. 

From the above discussion, the determinants of all $4 \times 4$ sub-matrices in \eqref{check4} are not divisible by $1+x^{p}$. The codes $\mathcal{C}_2(k,4,d,p)$ thus satisfy MDS property for $k\geq 6$.
\end{proof}
\begin{table*}[tbh]
\caption{All values of $p$ that $\mathcal{C}_2(k,4,d,p)$ are MDS codes for $r=4$ and  $k=2,3,\ldots,13$.}
\begin{center}
\begin{tabular}{|c|c|c|c|c|c|c|} \hline
$k$&2& 3 & 4 & 5 & 6 & 7 \\ \hline 
$p$&$\geq 11$&11, $\geq 19$ & 19, $\geq 37$&$\geq 19,$  $\neq 29,61$&$\geq 19,$ $\neq 29,37,61,107$&$\geq 53$, $\neq 61,107$  \\ \hline \hline
$k$ & 8 & 9 & 10 & 11 & 12 & 13\\ \hline 
$p$&$\geq 53,$ & $\geq 53,$ & $\geq 67,$ $\neq 107,$ &67,101,131,149, &67,101,131,149,$\geq 179$ &67,101,131, $\geq 179,$ $\neq 211,$\\ 
& $\neq 61,$ &  $\neq 59,61,$& $139,163,491,$ &$\geq 173,$ $\neq 491,$&  $\neq 491,509,613,$& $347,491,509,613,653,709,$\\
& $107,163,$ &  $107,139,$& $509,613,$ &  $509,613,709,$&$653,709,1741,2027,$ &$1741,1949,1973,2027,4093,$\\ 
& $491,$&$163,491,$ & $709,1741,$ & $1741,2027,$ & $4093,6827,8171,$ &$6827,8171,16363,16381,$ \\ 
& $509$ &$509$ & $2027$ & $4093,6827$ & $36353,39937$ & $36353,39937,80897$ \\ \hline
\end{tabular}
\end{center}
\label{table:p-example}
\end{table*}
Indeed, the upper bound of $p$ in Theorem \ref{thm:MDS2} is exponential in $k$. However, since we are  interested in small $k$, we can first compute each $4\times 4$ determinant that can be viewed as a polynomial $g(x)$ in $\mathbb{F}_2[x]$. Then we can  check by computer search whether polynomial $g(x)$  is a multiple of $1+x^p$ or not. By using this procedure,  all values of $p$ for which the codes $\mathcal{C}_2(k,4,d,p)$ are  MDS codes are found and summarized in Table \ref{table:p-example} for $k=2,3,\ldots,13$. 

\section{Weak-Optimal Repair Procedure for One Column Failure}
\label{sec:repair}

In this section, we demonstrate how to recover the bits stored in any information column for the first construction when an information column is failed. We also present an efficient repair algorithm for the second construction when a single column is failed. We prove that the proposed procedures are  with asymptotically weak-optimal repair bandwidth.
\subsection{Repair Procedure of the First Construction}
\label{sec:repair1}
In this subsection, we  assume that the information column $f$ is erased, where $1\le f\le k$. We want to recover the bits $s_{0,f}$, $s_{1,f}, \ldots, s_{(p-1)\tau-1,f}$ stored in the information column $f$ by accessing bits from $k-1$ other information columns and $d-k+1$ parity columns.
Recall that we can compute the extra bits by \eqref{eq:check1}. For notational convenience, we refer the \emph{bits} of column $i$ as the $p\tau$ bits $s_{0,i}$, $s_{1,i}, \ldots, s_{p\tau-1,i}$ in this section. 

For $j=1,2,\ldots,r$ and $0\leq \ell \leq p\tau-1$, we define the $\ell$-th parity set of the $j$-th parity column as 
$$P_{\ell,j}=\{s_{\ell-(j-1)\eta^{0},1},s_{\ell-(j-1)\eta,2},\ldots,s_{\ell-(j-1)\eta^{k-2},k-1},s_{\ell,k}\},$$
for $1\leq j\leq d-k+1$, and 
$$P_{\ell,j}=\{s_{\ell,1},s_{\ell-(2\eta-j)\eta^{k-2},2},\ldots,s_{\ell-(2\eta-j)\eta,k-1},s_{\ell-(2\eta-j),k}\},$$
for $d-k+2\leq j\leq r$. Note that all the indices of the elements in $P_{\ell,j}$  are taken modulo $p\tau$. It is clear that parity set $P_{\ell,j}$ consists of information bits which are used to generate the parity bit $s_{\ell,k+j}$. That is,
\begin{equation*}
s_{\ell,k+j}=\left\{\begin{array}{ll}&s_{\ell,k}+\sum_{i=1}^{k-1}s_{\ell-(j-1)\eta^{i-1},i}\\
&\text{ for } 0\le \ell\le p\tau-1, 1\leq j\leq d-k+1;\\
&s_{\ell,1}+\sum_{i=2}^{k}s_{\ell-(2\eta-j)\eta^{k-i},i} \\
&\text{ for } 0\le \ell\le p\tau-1, d-k+2\leq j\leq r.
\end{array}\right.
\end{equation*}
When we say an information bit is repaired by a parity column, it means that we access the parity bits of the parity column, and all the information bits, excluding the erased bits, in this parity set. Assume that the information column $f$ has failed. When $1\leq j\leq d-k+1$, we can repair the $\ell$-th bit in this failed column by the $j$-th parity column:
\begin{equation}
\label{eq:recover1}
s_{\ell,f}=\left\{\begin{array}{ll}&s_{\ell+(j-1)\eta^{f-1},k+j}+s_{\ell+(j-1)\eta^{f-1},k}+\\
&\sum_{i=1,i\neq f}^{k-1}s_{\ell+(j-1)\eta^{f-1}-(j-1)\eta^{i-1},i}, 1\leq f\leq k-1;\\
&s_{\ell,k+j}+\sum_{i=1}^{k-1}s_{\ell-(j-1)\eta^{i-1},i},  f=k.
\end{array}\right.
\end{equation}
When $d-k+2\leq j \leq r$, we can repair the bit $s_{\ell,f}$ by the $j$-th parity column:
\begin{equation}
\label{eq:recover2}
s_{\ell,f}=\left\{\begin{array}{ll}&s_{\ell,k+j}+\sum_{i=2}^{k}s_{\ell-(2\eta-j)\eta^{k-i},i},  f=1;\\
&s_{\ell+(2\eta-j)\eta^{k-f},k+j}+s_{\ell+(2\eta-j)\eta^{k-f},1}+\\
&\sum_{i=2,i\neq f}^{k}s_{\ell+(2\eta-j)\eta^{k-f}-(2\eta-j)\eta^{k-i},i}, 2\leq f\leq k.
\end{array}\right.
\end{equation}



\begin{algorithm}
\begin{algorithmic}[1]
\STATE {Suppose that the information column $f$ has failed.}
    \IF {$f\in\{1, 2,\ldots, \lceil k/2 \rceil\}$}
            \FOR {$\ell \bmod \eta^{f} \in\{0,1,\ldots,\eta^{f-1}-1\}$}
                  \STATE {Repair the bit $s_{\ell,f}$ by the first parity column, i.e., by \eqref{eq:recover1} with $j=1$.}
            \ENDFOR
         \FOR {$t=1,2,\ldots,d-k$}
            \FOR {$\ell \bmod \eta^f \in\{t\eta^{f-1},t\eta^{f-1}+1,\ldots,(t+1)\eta^{f-1}-1\}$}
                  \STATE {Repair the bit $s_{\ell,f}$ by parity column $d-k-t+2$, i.e., by \eqref{eq:recover1} with $j=d-k-t+2$.}
            \ENDFOR
         \ENDFOR
         \RETURN{}
    \ENDIF
     \IF {$f\in\{\lceil k/2 \rceil+1, \lceil k/2 \rceil+2,\ldots, k\}$}
            \FOR {$\ell \bmod \eta^{k+1-f} \in\{0,1,\ldots,\eta^{k-f}-1\}$}
                  \STATE {Repair the bit $s_{\ell,f}$ by the first parity column, i.e., by \eqref{eq:recover1} with $j=1$.}
            \ENDFOR
            \FOR {$t=1,2,\ldots,d-k$}
                 \FOR {$\ell \bmod \eta^{k+1-f} \in\{t\eta^{k-f},t\eta^{k-f}+1,\ldots,(t+1)\eta^{k-f}-1\}$}
                     \STATE {Repair the bit $s_{\ell,f}$ by parity column $d-k+t+1$, i.e., by \eqref{eq:recover2} with $j=d-k+t+1$.}
            \ENDFOR
         \ENDFOR
    \RETURN{}     
    \ENDIF
    \caption{Repair procedure of the first construction for one information failure}
    \label{alg:A1}
\end{algorithmic}
\end{algorithm}

The repair algorithm is stated in Algorithm \ref{alg:A1}. In the algorithm, we divide the $k$ information columns into two parts. The first part has $\lceil k/2 \rceil$ columns and the second part has $k-\lceil k/2 \rceil$ columns. If a column in the first part fails, we repair the failure column by the first $d-k+1$ parity columns; otherwise,  the failure column is recovered by the first parity column and the last $d-k$ parity columns.

Let's again consider the Example~\ref{example1} to illustrate the repair procedure in detail. In this example, $k=4$, $d=5$, $p=3$, $r=3$, $\tau=4$, and $\eta=2$. 
The repair procedure of the first column is shown in Fig.~\ref{example} followed from Algorithm \ref{alg:A1} with $f=1$.
Suppose that the second information column (i.e., node 2) fails, i.e,. $f=2$. By steps 3 to 4 in Algorithm~\ref{alg:A1}, we can repair the bits $s_{0,2},s_{1,2},s_{4,2},s_{5,2}$ by
\begin{eqnarray*}
& s_{0,2}=s_{0,1}+s_{0,3}+s_{0,4}+(s_{0,1}+s_{0,2}+s_{0,3}+s_{0,4}) \\
& s_{1,2}=s_{1,1}+s_{1,3}+s_{1,4}+(s_{1,1}+s_{1,2}+s_{1,3}+s_{1,4}) \\
& s_{4,2}=s_{4,1}+s_{4,3}+s_{4,4}+(s_{4,1}+s_{4,2}+s_{4,3}+s_{4,4}) \\
& s_{5,2}=s_{5,1}+s_{5,3}+s_{5,4}+(s_{5,1}+s_{5,2}+s_{5,3}+s_{5,4}).
\end{eqnarray*}
Similarly, we can repair the bits $s_{2,2},s_{3,2},s_{6,2},s_{7,2}$ by
\begin{align*}
s_{2,2}&=s_{3,1}+s_{0,3}+s_{4,4}+(s_{3,1}+s_{2,2}+s_{0,3}+s_{4,4}) \\
s_{3,2}&=s_{4,1}+s_{1,3}+s_{5,4}+(s_{4,1}+s_{3,2}+s_{1,3}+s_{5,4}) \\
s_{6,2}&=s_{7,1}+s_{4,3}+s_{0,4}+s_{4,4}+
(\mathbf{s_{11,1}}+\mathbf{s_{10,2}}+\\
&\mathbf{s_{8,3}}+s_{0,4})+(s_{3,1}+s_{2,2}+s_{0,3}+s_{4,4}) \\
s_{7,2}&=s_{0,1}+s_{4,1}+s_{5,3}+s_{1,4}+s_{5,4}+(s_{0,1}+\mathbf{s_{11,2}}\\
&+\mathbf{s_{9,3}}+s_{1,4})+(s_{4,1}+s_{3,2}+s_{1,3}+s_{5,4}).
\end{align*}
As a result, the 8 bits stored in the second information column can be recovered by downloading 6 bits from the first information column and 4 bits from each of columns 3 to 6. There are total 22 bits downloaded in the repair procedure. It can be verified that for the code in Example~\ref{example1}, column 3 and column 4 can be rebuilt by accessing 22 bits and 20 bits from 5 columns, respectively.

There exist some intersections of information bits between different parity sets. The key idea in Algorithm \ref{alg:A1} is that for each erased information column, the accessed parity sets are carefully chosen such that they have a larger intersection. This  selection results in a small number of accesses during the repair process. Moreover, it is clear that the choice of the encoding vectors is crucial if we want to ensure the MDS property and efficient repair. We show in the next theorem that Algorithm \ref{alg:A1} can recover any information column and the repair bandwidth of one information column is asymptotically weak-optimal.
\begin{theorem}
Assume that column $f$ is failed. All the information bits in column $f$ can be recovered by Algorithm \ref{alg:A1}, where $1\leq f\leq k$. The repair bandwidth of information column $f$ induced by Algorithm \ref{alg:A1} is
\begin{equation}
(p-1)((d+1)\eta^{k-3}-\eta^{k-f-2}),
\label{eq:repairband}
\end{equation}
when $f\in\{1,2,\ldots,\lceil k/2\rceil \}$, and is 
\begin{equation}
(p-1)((d+1)\eta^{k-3}-\eta^{f-3}),
\label{eq:repairband1}
\end{equation}
when $f\in\{\lceil k/2\rceil+1,\lceil k/2\rceil+2,\ldots,k\}$.
\label{thmrep}
\end{theorem}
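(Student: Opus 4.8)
## Proof Proposal

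The plan is to argue in two stages: first correctness (Algorithm~\ref{alg:A1} indeed reconstructs every erased bit $s_{\ell,f}$ for $0 \le \ell \le p\tau-1$), and then a careful accounting of the download cost, separating the two cases $f \in \{1,\dots,\lceil k/2\rceil\}$ and $f \in \{\lceil k/2\rceil+1,\dots,k\}$ as in the algorithm.

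\textbf{Correctness.} For the case $f \in \{1,\dots,\lceil k/2\rceil\}$, I would first observe that the residue classes of $\ell$ modulo $\eta^f$ partition $\{0,1,\dots,p\tau-1\}$ into $\eta^f$ blocks of equal size $p\tau/\eta^f$ (recall $\tau = \eta^{k-2}$, so $\eta^f \mid \tau$), and the nested loops of Algorithm~\ref{alg:A1} assign each residue class $\ell \bmod \eta^f$ to exactly one parity column: the classes $\{0,\dots,\eta^{f-1}-1\}$ go to parity column $1$ via \eqref{eq:recover1} with $j=1$, and the classes $\{t\eta^{f-1},\dots,(t+1)\eta^{f-1}-1\}$ go to parity column $d-k-t+2$ for $t=1,\dots,d-k$. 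Since $(d-k+1)\eta^{f-1} = \eta^f$ (because $d-k+1 = \eta$), this covers every residue class exactly once, so every bit $s_{\ell,f}$ is repaired exactly once. The repair formula \eqref{eq:recover1} is just the rearrangement of $s_{\ell+(j-1)\eta^{f-1},k+j} = s_{\ell+(j-1)\eta^{f-1},k} + \sum_{i=1}^{k-1} s_{\ell+(j-1)\eta^{f-1}-(j-1)\eta^{i-1},i}$, which holds by definition of the parity sets together with Lemma~\ref{lm:codedC} and Lemma~\ref{even-parity} (these guarantee the coded polynomial is in $C_{p\tau}$ so the extra bits also satisfy \eqref{eq:check1}, hence all indices may be read modulo $p\tau$). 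The symmetric case $f \in \{\lceil k/2\rceil+1,\dots,k\}$ is handled identically using \eqref{eq:recover2}, the partition modulo $\eta^{k+1-f}$, and the identity $(d-k+1)\eta^{k-f} = \eta^{k+1-f}$.

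\textbf{Repair bandwidth.} This is the main technical work. Fix $f \in \{1,\dots,\lceil k/2\rceil\}$. From each of the $d$ helper columns we must download the union, over all residue classes assigned by the algorithm, of the bits of that column appearing in the corresponding parity sets. The key point — and the reason the encoding matrix \eqref{matrixp} was designed this way — is that for helper information column $i$, the bit of column $i$ used to repair $s_{\ell,f}$ via parity column $j$ is $s_{\ell + (j-1)\eta^{f-1} - (j-1)\eta^{i-1},\,i}$, whose index depends on $\ell$ only through $\ell + (j-1)(\eta^{f-1}-\eta^{i-1})$; as $\ell$ ranges over a residue class mod $\eta^f$ and $j$ is tied to that class, the set of indices requested from column $i$ collapses onto a union of residue classes modulo $\eta^{\min(f,i)}$ (for $i \ge 2$) or onto a single residue class structure governed by $\eta^{f-1}$ (for $i=1$ and the parity columns $k$ and $k+j$). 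I would make this precise by computing, for each helper column separately, exactly which residue classes modulo the relevant power of $\eta$ are hit, count the total number of distinct bits requested from that column, and then sum over all $d$ helpers. Summing the geometric-type series that arises — the dominant contributions being $p\tau/\eta^{f-1} = (p-1+1)\eta^{k-1-f}$ order terms, with corrections from the intersections noted in the worked Example~\ref{example1} (where extra bits like $s_{10,3}$ can be locally recomputed from two already-downloaded bits) — should yield $(p-1)\big((d+1)\eta^{k-3} - \eta^{k-f-2}\big)$. The case $f \in \{\lceil k/2\rceil+1,\dots,k\}$ gives \eqref{eq:repairband1} by the mirror-image computation using the right half of \eqref{matrixp}.

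\textbf{Expected obstacle.} The delicate part is not the correctness but the exact bandwidth count: one must verify for \emph{every} helper column $i$ (information columns $1,\dots,k$ with $i \ne f$, plus the $d-k+1$ accessed parity columns) precisely how many distinct bit-indices are requested after all overlaps between the $d-k+1$ parity sets are removed, including the subtle savings where an ``extra bit'' $s_{(p-1)\tau+\mu,i}$ needed in one parity set is not downloaded because it equals a sum of bits already downloaded for another. Getting the $-\eta^{k-f-2}$ correction term exactly right requires tracking these overlaps carefully across the boundary column $i = f$ and the two columns $1$ and $k$ which behave differently from the generic information helpers; I expect this bookkeeping, rather than any conceptual difficulty, to be where the proof's length concentrates.
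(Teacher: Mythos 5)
Your correctness argument is sound and coincides with the paper's: the residue classes of $\ell$ modulo $\eta^{f}$ (resp.\ $\eta^{k+1-f}$) are partitioned into $\eta=d-k+1$ groups of size $\eta^{f-1}$, each group is served by exactly one parity column via \eqref{eq:recover1} or \eqref{eq:recover2}, and since $(d-k+1)\eta^{f-1}=\eta^{f}$ every erased bit is repaired exactly once. The gap is in the bandwidth claim, which is the actual content of \eqref{eq:repairband} and \eqref{eq:repairband1}: you state that the per-column overlap analysis ``should yield'' $(p-1)((d+1)\eta^{k-3}-\eta^{k-f-2})$, and you explicitly defer the bookkeeping to your ``expected obstacle'' paragraph, but that bookkeeping \emph{is} the proof. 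Concretely, what is missing is the computation the paper carries out: for a bit $s_{\ell,f}$ with $\ell\bmod\eta^{f}$ in the group of parity column $j=\delta+1$, the helper bit from information column $i$ has index $\ell'=(\ell+\delta\eta^{f-1}-\delta\eta^{i-1})\bmod p\tau$, and one must show (i) for $i=f+1,\ldots,k-1$, $i=k$, one gets $\ell'\bmod\eta^{f}\in\{0,1,\ldots,\eta^{f-1}-1\}$, so these bits were already downloaded when repairing via the first parity column and cost nothing extra; and (ii) for $i=1,\ldots,f-1$, the union over $\delta=1,\ldots,\eta-1$ of the new residues is exactly $\{\eta^{f}-(\eta-1)\eta^{i-1},\ldots,\eta^{f}-1\}$, contributing precisely $(\eta-1)(p-1)\eta^{k+i-f-3}$ additional bits from column $i$. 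Summing $k(p-1)\eta^{k-3}$ (first parity round) plus $(d-k)(p-1)\eta^{k-3}$ (remaining parity bits) plus $\sum_{i=1}^{f-1}(d-k)(p-1)\eta^{k+i-f-3}$ gives the stated formula. Without establishing (i) and (ii) the coefficient of the correction term $-\eta^{k-f-2}$ is unsupported.

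Two smaller inaccuracies in your sketch would also need repair if you fleshed it out. First, the downloaded index sets do not ``collapse onto residue classes modulo $\eta^{\min(f,i)}$''; the whole analysis is carried out modulo $\eta^{f}$ for every helper column, and the dichotomy is between $i<f$ (new classes appear) and $i>f$ or $i=k$ (no new classes), as above. Second, columns $1$ and $k$ are not special cases outside this scheme: column $k$ behaves exactly like the $i>f$ helpers (its shift is $+\delta\eta^{f-1}$, landing in already-downloaded classes), and column $1$ is simply the $i=1$ term of the $i<f$ sum. The extra-bit issue you raise (bits with index $\geq(p-1)\tau$) resolves automatically once (i) and (ii) are proved, because each extra bit is, by \eqref{eq:check1}, a sum of stored bits sharing the same residue modulo $\tau$, hence modulo $\eta^{f}$, so it is computable from bits already counted; but this observation, too, needs to be said rather than left implicit.
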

\begin{proof}
Recall that $\eta=d-k+1$ and $\tau=\eta^{k-2}$. We can recover all $(p-1)\eta^{k-2}$ information bits in column $f$ by \eqref{eq:recover1} or \eqref{eq:recover2} with a given $j$, i.e., downloading the other $k-1$ information columns and one extra $j$-th parity column, where $j=1,2,\ldots,r$. The total  repair bandwidth by this repair procedure is $k(p-1)\eta^{k-2}$. In order to  reduce the repair bandwidth, in additional to the $k-1$ healthy information columns, $d-k+1$ parity columns are used to recover column $f$ in Algorithm~\ref{alg:A1}. 
$(p-1)\eta^{k-2}$ information bits in column $f$ are recovered by $d-k+1$ parity columns, where each parity column recovers $(p-1)\eta^{k-3}$ bits. 
When $1\leq f\leq \lceil k/2\rceil$,  the number of bits in each column is a multiple of $\eta^f$ when $k\geq 4$. The values of $\ell\bmod \eta^f$ are partitioned into $\eta$ groups each with size $\eta^{f-1}$,  where each group is associated with one parity column. $(p-1)\eta^{k-3}$ bits to be recovered are then associated with a group according to the values of their indices divided by $\eta^f$. The idea behind this grouping is that we can  employ the cyclic structure in the underlying quotient ring  to recover some specific information by choosing a suitable parity column such that  the number of bits downloaded is minimized.
 Only small portion of information bits in each  column are downloaded and a particular subset of the $d-k+1$ parity columns are chosen to have many common bits.  
 When $f> \lceil k/2\rceil$, we can replace $\ell\bmod \eta^f$ with $\ell\bmod \eta^{k+1-f}$ and apply the similar procedure. 
 In the following, we show that both  column $f$ can be repaired and the repair bandwidth can be reduced by Algorithm~\ref{alg:A1}.

Assume that  $1\leq f\leq \lceil k/2\rceil$.  By steps 3 and 4 in Algorithm \ref{alg:A1}, the bits $s_{\ell,f}$ are recovered by the first parity column, i.e., by \eqref{eq:recover1} with $j=1$ when 
\begin{equation}
\ell\bmod \eta^f \in \{0,1,\ldots,\eta^{f-1}-1\}
\label{eq:p1}
\end{equation}
and $\ell\in\{0,1,\ldots,(p-1)\eta^{k-2}-1\}$. As $\ell$ ranges from $0$ to $(p-1)\eta^{k-2}-1$ and $(p-1)\eta^{k-2}$ is a multiple of $\eta^f$ for $k\geq 4$, $\ell \bmod \eta^f$ is uniform distributed over $\{0, 1,\ldots, \eta^f-1\}$. We thus obtain that the total number of bits $s_{\ell,f}$ that are recovered by \eqref{eq:recover1} with $j=1$ is $\frac{(p-1)\eta^{k-2}}{\eta^f}\cdot \eta^{f-1}=(p-1)\eta^{k-3}$. 

We have recovered $(p-1)\eta^{k-3}$ bits $s_{\ell,f}$ in column $f$ with indices in~\eqref{eq:p1} by the first parity column. As column $f$ contains $(p-1)\eta^{k-2}$ information bits, we still need to repair the other $(\eta-1)(p-1)\eta^{k-3}$ information bits in column $f$. For simplicity of notation, let $\delta=d-k-t+1$. By steps 5 to 7 in Algorithm \ref{alg:A1}, for $t=1,2,\ldots,d-k$,
the bits $s_{\ell,f}$ are recovered by parity column $\delta+1$, i.e., by \eqref{eq:recover1} with $j=\delta+1$ when  
\begin{equation}
\ell\bmod \eta^f \in \{t\eta^{f-1},t\eta^{f-1}+1,\ldots,(t+1)\eta^{f-1}-1\}
\label{eq:p2}
\end{equation}  
and $\ell\in\{0,1,\ldots,(p-1)\eta^{k-2}-1\}$. As $\ell \bmod \eta^f$ is uniform distribution, the number of bits $s_{\ell,f}$ that are recovered by \eqref{eq:recover1} with $j=\delta+1$  is $(p-1)\eta^{k-3}$. Therefore, the total number of bits $s_{\ell,f}$ that are recovered by parity column $\delta+1$ for $t=1,2,\ldots,\eta-1$ is $(\eta-1)(p-1)\eta^{k-3}$.

We have recovered $(p-1)\eta^{k-3}+(\eta-1)(p-1)\eta^{k-3}=(p-1)\eta^{k-2}$ bits $s_{\ell,f}$ with indices in~\eqref{eq:p1} and \eqref{eq:p2} by Algorithm~\ref{alg:A1}. Since
\begin{align*}
&\{t\eta^{f-1},t\eta^{f-1}+1,\ldots,(t+1)\eta^{f-1}-1\}\cap\\
&\{t'\eta^{f-1},t'\eta^{f-1}+1,\ldots,(t'+1)\eta^{f-1}-1\}=\emptyset 
\end{align*}
for $0\le t\neq t'\le \eta-1$,
 the indices of all the repaired bits in column $f$ are distinct and all $(p-1)\eta^{k-2}$ information bits are recovered by Algorithm \ref{alg:A1} for $1\leq f\leq \lceil k/2\rceil$. Similarly, when $\lceil k/2\rceil+1\leq f\leq k$, we can also show that we can recover all the information bits in column $f$ by Algorithm \ref{alg:A1}.


Next, we calculate the repair bandwidth of column $f$ by Algorithm \ref{alg:A1} when $1\leq f\leq \lceil k/2\rceil$. Recall that by steps 3 and 4 in Algorithm \ref{alg:A1}, $(p-1)\eta^{k-3}$ bits $s_{\ell,f}$ with indices in \eqref{eq:p1} are recovered by \eqref{eq:recover1} with $j=1$, where $(p-1)\eta^{k-3}$ parity bits $s_{\ell,k+1}$ and $(k-1)(p-1)\eta^{k-3}$ information bits $s_{\ell,i}$ with $i=1,2,\ldots,f-1,f+1,\ldots, k$ are needed to perfrom the repair procedure. By steps 5 to 7 in Algorithm \ref{alg:A1}, $(\eta-1)(p-1)\eta^{k-3}$ bits $s_{\ell,f}$ with indices in \eqref{eq:p2} are recovered by \eqref{eq:recover1} with $j=\delta+1$, where $(\eta-1)(p-1)\eta^{k-3}$ parity bits $s_{\ell,k+\delta+1}$, $(\eta-1)(k-1)(p-1)\eta^{k-3}$ information bits $s_{\ell+\delta\eta^{f-1}-\delta\eta^{i-1},i}$ with $i=1,2,\ldots,f-1,f+1,\ldots, k-1$ and $s_{\ell+\delta\eta^{f-1},k}$ are needed with $\delta=1,2,\ldots,\eta-1$. Note that there are many needed bits that are in common such that the total number of downloaded bits is less.

We first download $(k-1)(p-1)\eta^{k-3}$ information bits $s_{\ell,i}$ for $i=1,2,\ldots,f-1,f+1,\ldots,k$ and $(p-1)\eta^{k-3}$ parity bits $s_{\ell,k+1}$ with indices in \eqref{eq:p1} in steps 3 and 4. Then, we only need to download the bits in steps 5 to 7 which have  not downloaded in steps 3 and 4. In the following, we show that we do not need to download all $(\eta-1)(k-1)(p-1)\eta^{k-3}$ information bits $s_{\ell+\delta\eta^{f-1}-\delta\eta^{i-1},i}$ with $i=1,2,\ldots,f-1,f+1,\ldots, k-1$ and $s_{\ell+\delta\eta^{f-1},k}$ in steps 5 to 7, as some of them have been downloaded in steps 3 and 4.

We now consider the needed information bits $s_{\ell+\delta\eta^{f-1}-\delta\eta^{i-1},i}$ with $i=1,2,\ldots,f-1,f+1,\ldots, k-1$, $\delta=1,2,\ldots,\eta-1$  in steps 5 to 7, where $\ell$ are in \eqref{eq:p2}. We want to show that many bits $s_{\ell+\delta\eta^{f-1}-\delta\eta^{i-1},i}$ are not necessary to be downloaded, as they are downloaded in steps 3 and 4. Given an $\ell$ that is in \eqref{eq:p2}, where $0\leq \ell \leq (p-1)\eta^{k-2}-1$, we have that the index of the corresponding needed bit $s_{\ell+\delta\eta^{f-1}-\delta\eta^{i-1},i}$ is $\ell'=(\ell+\delta\eta^{f-1}-\delta\eta^{i-1})\bmod p\eta^{k-2}$. Recall that the bits $s_{\ell,i}$ for $\ell\bmod \eta^f\in\{0,1,\ldots,\eta^{f-1}-1\}$ and $i=1,2,\ldots,f-1,f+1,\ldots,k-1$ are downloaded in steps 3 and 4. If 
\begin{align*}
&\ell'\bmod \eta^f=((\ell+\delta\eta^{f-1}-\delta\eta^{i-1})\bmod p\eta^{k-2})\bmod \eta^f\\
&\in\{0,1,\ldots,\eta^{f-1}-1\},
\end{align*}
then we do not need to download the bit $s_{\ell',i}$, as it is downloaded in steps 3 and 4. Otherwise, it should be downloaded. 

We first consider the bits $s_{\ell',i}$ for $i=1,2,\ldots,f-1$. If $\ell\bmod \eta^f=t\eta^{f-1}$, then there exists an integer $m$ such that $\ell=m\eta^f+t\eta^{f-1}$. Thus, we have
\begin{align*}
\ell'\bmod \eta^f=&((\ell+\delta\eta^{f-1}-\delta\eta^{i-1})\bmod p\eta^{k-2})\bmod \eta^f\\
=&(m\eta^f+t\eta^{f-1}+\delta\eta^{f-1}-\delta\eta^{i-1})\bmod \eta^f \\
=&((t+\delta)\eta^{f-1}-\delta\eta^{i-1})\bmod \eta^f\\
=&\eta^f-\delta\eta^{i-1}.
\end{align*}
If $\ell\bmod \eta^f=(t+1)\eta^{f-1}-1$, then there exists an integer $m$ such that $\ell=m\eta^f+(t+1)\eta^{f-1}-1$. Thus, we have
\begin{align*}
&\ell'\bmod \eta^f=((\ell+\delta\eta^{f-1}-\delta\eta^{i-1})\bmod p\eta^{k-2})\bmod \eta^f\\
=&(m\eta^f+(t+1)\eta^{f-1}-1+\delta\eta^{f-1}-\delta\eta^{i-1})\bmod \eta^f \\
=&((t+\delta+1)\eta^{f-1}-1-\delta\eta^{i-1})\bmod \eta^f\\
=&\eta^{f-1}-\delta\eta^{i-1}-1.
\end{align*}
By repeating the above procedure for $\ell\bmod\eta^f=t\eta^{f-1}+1,\ldots,(t+1)\eta^{f-1}-2$, we can obtain that 
\begin{eqnarray}
\ell'\bmod \eta^f =& \eta^{f}-\delta\eta^{i-1},\eta^{f}-\delta\eta^{i-1}+1,\ldots,\eta^{f}-1,0,\nonumber\\
&1,\ldots,\eta^{f-1}-\delta\eta^{i-1}-1
\label{eq:p3}
\end{eqnarray}
when $\ell\bmod \eta^f$ runs from $t\eta^{f-1}$ to $(t+1)\eta^{f-1}-1$.
Thus, the bits $s_{\ell',i}$ with $i=1,2,\ldots,f-1$ and $\ell'$ in the union set of all the values in \eqref{eq:p3} with $\delta=1,2,\ldots,\eta-1$ are needed. 
When $i=1,2,\ldots,f-1$, the union set of all the values in \eqref{eq:p3} with $\delta=1,2,\ldots,\eta-1$ is
\begin{eqnarray}
&&\cup_{\delta=1,\ldots,\eta-1}\{\eta^{f}-\delta\eta^{i-1},\ldots,\eta^{f}-1,0,\nonumber \\
&&1,\ldots,\eta^{f-1}-\delta\eta^{i-1}-1\}\nonumber\\
&=&\{\eta^{f}-(\eta-1)\eta^{i-1},\ldots,\eta^{f}-1,0,\nonumber \\
&&1,\ldots,\eta^{f-1}-\eta^{i-1}-1\}.
\label{eq:p4}
\end{eqnarray}
Since $i\le f-1$, we have $\eta^{f-1}-\eta^{i-1}-1<\eta^{f}-(\eta-1)\eta^{i-1}$. Hence, the elements in \eqref{eq:p4} can be rearranged as 
\begin{equation}
\label{set:range}
\{0,1,\ldots,\eta^{f-1}-\eta^{i-1}-1, \eta^{f}-(\eta-1)\eta^{i-1},\ldots,\eta^{f}-1\}.	
\end{equation}
Recall that $(f-1)(p-1)\eta^{k-3}$ information bits $s_{\ell,i}$ for $i=1,2,\ldots,f-1$ with $\ell\bmod \eta^f\in\{0,1,\ldots,\eta^{f-1}-1\}$ have already been downloaded in steps 3 and 4. Since $\eta^{f-1}-\eta^{i-1}-1\le \eta^{f-1}-1<\eta^{f}-(\eta-1)\eta^{i-1}$, 
\begin{align*}
&\{0,1,\ldots,\eta^{f-1}-\eta^{i-1}-1,\ldots, \eta^{f}-(\eta-1)\eta^{i-1},\ldots,\eta^{f}-1\} \\
&\setminus\{0,1,\ldots,\eta^{f-1}-1\}=\{\eta^{f}-(\eta-1)\eta^{i-1},\ldots,\eta^{f}-1\}.
\end{align*}
We thus only need to download $(\eta-1)(p-1)\eta^{k+i-f-3}$ information bits $s_{\ell',i}$ from columns $i$ for $i=1,2,\ldots,f-1$ with $\ell'\bmod \eta^f\in\{\eta^{f}-(\eta-1)\eta^{i-1},\ldots,\eta^{f}-1\}$ in steps 5 to 7.

We then consider the information bits $s_{\ell',i}$ for $i=f+1,f+2,\ldots,k-1$. Recall that $\ell'=(\ell+\delta\eta^{f-1}-\delta\eta^{i-1})\bmod p\eta^{k-2}$. Similarly, we can prove that 
\[
\ell'\bmod \eta^f =0,1,\ldots,\eta^{f-1}-1
\]
when $\ell \bmod \eta^f$ runs from $t\eta^{f-1}$ to $(t+1)\eta^{f-1}-1$.
In this case, all needed bits have already been downloaded in steps 3 and 4 and
we thus do not need to download bits from columns $i$ for $i=f+1,f+2,\ldots,k-1$ in steps 5 to 7.

In the last, we consider $i=k$, i.e., the needed bits $s_{\ell+\delta\eta^{f-1},k}$ with indices in \eqref{eq:p2}.
Given an $\ell$ that is in \eqref{eq:p2}, where $0\leq \ell \leq (p-1)\eta^{k-2}-1$, we have that the index of the corresponding needed bit $s_{\ell+\delta\eta^{f-1},k}$ is $\ell'=(\ell+\delta\eta^{f-1})\bmod p\eta^{k-2}$.
If $\ell\bmod \eta^f=t\eta^{f-1}$, then $\ell=m\eta^f+t\eta^{f-1}$, where $m$ is an integer. Thus, we have
\begin{align*}
\ell'\bmod \eta^f=&((\ell+\delta\eta^{f-1})\bmod p\eta^{k-2})\bmod \eta^f\\
=&(m\eta^f+t\eta^{f-1}+\delta\eta^{f-1})\bmod \eta^f\\
=&((t+\delta)\eta^{f-1})\bmod \eta^f\\
=&0.
\end{align*}
Hence, we can obtain that 
\[
\ell'\bmod \eta^f =0,1,\ldots,\eta^{f-1}-1
\]
when $\ell \bmod \eta^f$ runs from $t\eta^{f-1}$ to $(t+1)\eta^{f-1}-1$.
Recall that $(p-1)\eta^{k-3}$ information bits $s_{\ell,k}$ with indices in \eqref{eq:p1} have already been downloaded in steps 3 and 4, 
we thus do not need to download bits $s_{\ell,k}$ in steps 5 to 7.

We can count that the total number of bits downloaded from $d=k+(r-1)/2$ columns to repair the information column $f$ is
\begin{align*}
&\underbrace{k(p-1)\eta^{k-3}}_{\text{the first parity column}}+\\
&\underbrace{(d-k)(p-1)\eta^{k-3}+\sum_{i=1}^{f-1}(d-k)(p-1)\eta^{k+i-f-3}}_{\text{parity columns 2 to } d-k+1}\\
&=(p-1)((d+1)\eta^{k-3}-\eta^{k-f-2}),
\end{align*}
which is equal to \eqref{eq:repairband}. 

When $f> \lceil k/2 \rceil$, we can replace $\ell\bmod \eta^f$ with $\ell\bmod \eta^{k+1-f}$ and the repair bandwidth of column $f$ can be obtained by replacing $f$ in \eqref{eq:repairband} with $k+1-f$ via the similar argument. This completes the proof.
\end{proof}

When $f> \lceil k/2 \rceil$, the repair bandwidth of column $f$ is the same as that of column $k+1-f\le \lceil k/2 \rceil$ by \eqref{eq:repairband} and \eqref{eq:repairband1}. Therefore, we only consider the cases of $1\leq f\leq \lceil k/2 \rceil$ in the following discussion.
By Theorem~\ref{thmrep}, the repair bandwidth increases when $f$ increases.
When $f=1$, the repair bandwidth is $d(p-1)\eta^{k-3}$, which achieves the lower bound in \eqref{optimal_repair}. Even for the worst case of $f=\lceil k/2\rceil$, the repair bandwidth is
\begin{align*}
(p-1)((d+1)\eta^{k-3}-\eta^{k-\lceil k/2\rceil-2})< (p-1)(d+1)\eta^{k-3},
\end{align*}
which is strictly less than $\frac{d+1}{d}$ times of the value in \eqref{optimal_repair}. Therefore, the repair bandwidth of any one information failure can achieve the weak-optimal repair in \eqref{optimal_repair} asymptotically when $d$ is large enough.


\subsection{Repair Procedure of the Second Construction}
\label{sec:repair2}
Suppose that column $f$ is erased, where $1\leq f\leq k+r$. We want to recover the bits $s_{0,f},s_{1,f},\ldots,s_{(p-1)\tau-1,f}$ by accessing bits from $d=k+\frac{r}{2}-1$ healthy columns. We refer the \emph{bits} of column $i$ as the $p\tau$ bits $s_{0,i},s_{1,i},\ldots,s_{p\tau-1,i}$. 
Recall that the check matrix of the second construction is given in~\eqref{matrixh}. By the $j$-th row of the check matrix, where $j=1,2,\ldots,\frac{r}{2}$, we have
\begin{align*}
s_{\ell-(j-1),1}+s_{\ell-(j-1)\frac{r}{2},2}+\cdots+s_{\ell-(j-1)(\frac{r}{2})^{d-1},d}+s_{\ell,d+1}=0,
\end{align*}
for $\ell=0,1,\ldots, p\tau-1$. Thus, we can repair a bit $s_{\ell,f}$ by
\begin{equation}
s_{\ell,f}=\left\{\begin{array}{ll}&\sum_{i=1,i\neq f}^{d}s_{\ell+(j-1)(\frac{r}{2})^{f-1}-(j-1)(\frac{r}{2})^{i-1},i} \\
&+s_{\ell+(j-1)(\frac{r}{2})^{f-1},d+1}, 1\leq f\leq d;\\
&\sum_{i=1}^{d}s_{\ell-(j-1)(\frac{r}{2})^{i-1},i}, f=d+1.
\end{array}\right.
\label{eq:repair21}
\end{equation}
Similarly, by the $j$-th row for $j=\frac{r}{2}+1,\frac{r}{2}+2,\ldots, r-1$ of \eqref{matrixh}, we have
\begin{align*}
&s_{\ell,\frac{r}{2}+1}+s_{\ell-(r-j)(\frac{r}{2})^{d-1},\frac{r}{2}+2}+\cdots+s_{\ell-(r-j)(\frac{r}{2})^{1},n-1}\\
&+s_{\ell-(r-j),n}=0,
\label{repairl}
\end{align*}
for $\ell=0,1,\ldots, p\tau-1$. The bit $s_{\ell,f}$ in column $f$ for $\frac{r}{2}+1\leq f\leq n$ can be recovered by accessing $k+\frac{r}{2}-1$ bits
\begin{equation}
s_{\ell,f}=\left\{\begin{array}{ll}&\sum_{i=\frac{r}{2}+2,i\neq f}^{n}s_{\ell+(r-j)(\frac{r}{2})^{n-f}-(r-j)(\frac{r}{2})^{n-i},i}\\
&+s_{\ell+(r-j)(\frac{r}{2})^{n-f},\frac{r}{2}+1}, \frac{r}{2}+2 \leq f\leq n;\\
&\sum_{i=\frac{r}{2}+2}^{n}s_{\ell-(r-j)(\frac{r}{2})^{n-i},i}, f=\frac{r}{2}+1.
\end{array}\right.
\label{eq:repair22}
\end{equation}
When $j=r$, by the $r$-th row of \eqref{matrixh}, we can have another set of equations
\begin{align*}
&s_{\ell,\frac{r}{2}+1}+s_{\ell-(d-1)(\frac{r}{2})^{d-1},\frac{r}{2}+2}+s_{\ell-(d-2)(\frac{r}{2})^{d-1},\frac{r}{2}+3}+\\
&\cdots+s_{\ell-(\frac{r}{2})^{d-1},n-1}+s_{\ell,n}=0,
\end{align*}
for $\ell=0,1,\ldots, p\tau-1$. We can also recover the bit $s_{\ell,f}$ by
\begin{equation}
s_{\ell,f}=\left\{\begin{array}{ll}&\sum_{i=\frac{r}{2}+2,i\neq f}^{n}s_{\ell+(n-f)(\frac{r}{2})^{d-1}-(n-i)(\frac{r}{2})^{d-1},i}\\
&+s_{\ell+(n-f)(\frac{r}{2})^{d-1},\frac{r}{2}+1},  \frac{r}{2}+2 \leq f\leq n;\\
&\sum_{i=\frac{r}{2}+2}^{n}s_{\ell-(n-i)(\frac{r}{2})^{d-1},i}, f=\frac{r}{2}+1.
\end{array}\right.
\label{eq:repair23}
\end{equation}

Note that all the indices in this section and throughout the paper are taken modulo $p\tau$. 
For $j=1,2,\ldots,r$, the $j$-th row of the check matrix~\eqref{matrixh} reveals the relationship of the bits in a codeword. When we say a bit $s_{\ell,f}$ is repaired by \emph{check-vector} $j$, it means that we access all the bits in the right hand side of \eqref{eq:repair21} when $j=1,2,\ldots,\frac{r}{2}$, \eqref{eq:repair22} when $j=\frac{r}{2}+1,\frac{r}{2}+2,\ldots,r-1$ or \eqref{eq:repair23} when $j=r$ to recover the bit $s_{\ell,f}$.

\begin{algorithm}
\begin{algorithmic}[1]
\STATE {Suppose that the column $f$ has failed.}
\IF {$f\in\{1, 2,\ldots, \lceil n/2 \rceil\}$.}
            \FOR {$\ell \bmod (\frac{r}{2})^f \in\{0,1,\ldots,(\frac{r}{2})^{f-1}-1\}$.}
                  \STATE {Repair the bit $s_{\ell,f}$ by the first check-vector, i.e., by \eqref{eq:repair21} with $j=1$.}
            \ENDFOR
         \FOR {$t=1,2,\ldots,\frac{r}{2}-1$}
            \FOR {$\ell \bmod (\frac{r}{2})^f \in\{t(\frac{r}{2})^{f-1},t(\frac{r}{2})^{f-1}+1,\ldots,(t+1)(\frac{r}{2})^{f-1}-1\}$.}
                  \STATE {Repair the bit $s_{\ell,f}$ by check-vector $\frac{r}{2}-t+1$, i.e., by \eqref{eq:repair21} with $j=\frac{r}{2}-t+1$.}
            \ENDFOR
         \ENDFOR
     \RETURN{}    
    \ENDIF
\IF {$f\in\{\lceil n/2 \rceil+1, \lceil n/2 \rceil+2,\ldots, n\}$.}
            \FOR {$\ell \bmod (\frac{r}{2})^{n+1-f} \in\{0,1,\ldots,(\frac{r}{2})^{n-f}-1\}$.}
                  \STATE {Repair the bit $s_{\ell,f}$ by check-vector $r$, i.e., by \eqref{eq:repair23} with $j=r$.}
            \ENDFOR
\FOR {$t=1,2,\ldots,\frac{r}{2}-1$}
                 \FOR {$\ell \bmod (\frac{r}{2})^{n+1-f} \in\{t(\frac{r}{2})^{n-f},t(\frac{r}{2})^{n-f}+1,\ldots,(t+1)(\frac{r}{2})^{n-f}-1\}$.}
                     \STATE {Repair the bit $s_{\ell,f}$ by check-vector $\frac{r}{2}+t$, i.e., by \eqref{eq:repair22} with $j=\frac{r}{2}+t$.}
            \ENDFOR
         \ENDFOR
    \RETURN{}     
    \ENDIF
    \caption{Repair procedure of the second construction for one column failure.}
    \label{alg:A2}
\end{algorithmic}
\end{algorithm}

The repair algorithm is stated in Algorithm \ref{alg:A2}. In the algorithm, the first $\frac{r}{2}$ check-vectors are used to repair each of the first $\lceil n/2 \rceil$ columns and the last $\frac{r}{2}$ check-vectors are used to repair each of the other $n-\lceil n/2 \rceil$ columns. 

There are some common bits between the bits downloaded by different check-vectors. In Algorithm \ref{alg:A2}, the check-vectors are carefully chosen such that the number of common bits are as large as possible. 
We have shown that the first column of the code in Example \ref{example2} can be repaired by downloading half of the number of bits in each of columns $2,3,4,5,6$. Namely, the repair bandwidth of the first column is weak-optimal.
We show that Algorithm \ref{alg:A2} can recover all the bits of any failed column and the repair bandwidth of one failure column is asymptotically weak-optimal in the next theorem.
\begin{theorem}
Assume that column $f$ is failed. Algorithm \ref{alg:A2} can recover all the bits stored in column $f$, where $1\leq f\leq n$. The repair bandwidth of column $f$ induced by Algorithm \ref{alg:A2} is
\begin{equation}
d(p-1)(\frac{r}{2})^{d-2}+(p-1)((\frac{r}{2})^{d-2}-(\frac{r}{2})^{d-f-1})
\label{eq:repairband2}
\end{equation}
when $1\leq f\leq \lceil n/2 \rceil$, and is
\begin{equation}
d(p-1)(\frac{r}{2})^{d-2}+(p-1)((\frac{r}{2})^{d-2}-(\frac{r}{2})^{d-n+f-2})
\label{eq:repairband3}
\end{equation}
when $\lceil n/2 \rceil+1\leq f\leq n$.
\label{thmrep2}
\end{theorem}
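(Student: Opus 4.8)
The plan is to follow the proof of Theorem~\ref{thmrep} essentially line by line, using that for $\mathcal{C}_2(k,r,d,p)$ one has $\eta=r/2$ and $\tau=(r/2)^{d-1}$, so that a single residue block modulo $(r/2)^f$ contains $(p-1)(r/2)^{d-2}$ bits, and the quantity that appeared in Theorem~\ref{thmrep} as the size $k$ of the repair group is here replaced by $d+1$. I would treat the case $1\le f\le\lceil n/2\rceil$ in detail and deduce the case $\lceil n/2\rceil+1\le f\le n$ from the symmetry $f\leftrightarrow n+1-f$ coming from the rotational structure of the two halves of the check matrix \eqref{matrixh}.

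First, for correctness: since $k\ge 4$, the number $(p-1)\tau=(p-1)(r/2)^{d-1}$ is a multiple of $(r/2)^f$, so $\ell\bmod(r/2)^f$ is uniformly distributed over $\{0,\ldots,(r/2)^f-1\}$ as $\ell$ runs over $\{0,\ldots,(p-1)\tau-1\}$. Algorithm~\ref{alg:A2} partitions these residues into the $r/2$ pairwise disjoint blocks $\{t(r/2)^{f-1},\ldots,(t+1)(r/2)^{f-1}-1\}$, assigns block $0$ to the first check-vector via \eqref{eq:repair21} with $j=1$, and assigns block $t\ge 1$ to check-vector $r/2-t+1$ via \eqref{eq:repair21} with $j=r/2-t+1$; disjointness and exhaustion of the blocks then give that all $(p-1)\tau$ bits $s_{\ell,f}$ are repaired, $(p-1)(r/2)^{d-2}$ of them per block. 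For $\lceil n/2\rceil+1\le f\le n$ the same argument works with $(r/2)^{n+1-f}$ in place of $(r/2)^f$ and with \eqref{eq:repair23} and \eqref{eq:repair22}.

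Next, for the repair bandwidth with $1\le f\le\lceil n/2\rceil$: I would first download all bits used by the first check-vector, namely one residue block from each of the $d$ columns in $\{1,\ldots,d\}\setminus\{f\}$ together with the reference parity column $d+1$, i.e. $d(p-1)(r/2)^{d-2}$ bits. For every remaining check-vector $j=r/2-t+1$, $t=1,\ldots,r/2-1$, the key is the modular computation from Theorem~\ref{thmrep}: writing the shifted index $\ell'=(\ell+(j-1)(r/2)^{f-1}-(j-1)(r/2)^{i-1})\bmod p\tau$ and tracking $\ell'\bmod(r/2)^f$, one shows (i) the bits needed from columns $i$ with $f<i\le d$ have residues already in block $0$, hence were downloaded; (ii) the bits needed from the shared column $d+1$ also fall in block $0$, with any index reaching $\{(p-1)\tau,\ldots,p\tau-1\}$ being an extra bit obtainable as a sum of two downloaded bits, so nothing new is fetched from column $d+1$; and (iii) from each column $i$ with $1\le i\le f-1$ exactly $(r/2-1)(p-1)(r/2)^{d-f+i-2}$ bits are new. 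Adding the geometric series $\sum_{i=1}^{f-1}(r/2-1)(p-1)(r/2)^{d-f+i-2}=(p-1)((r/2)^{d-2}-(r/2)^{d-f-1})$ to the first term yields \eqref{eq:repairband2}, and substituting $f\mapsto n+1-f$ (the repair group now being $\{r/2+1,\ldots,n\}\setminus\{f\}$ with reference column $r/2+1$) yields \eqref{eq:repairband3}.

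The main obstacle is item (ii). In Algorithm~\ref{alg:A1} the parity columns used for successive check equations are all distinct, so no overlap argument among them is needed; here, by contrast, check-vectors $1,\ldots,r/2$ all reference the single column $d+1$, and I must verify that the residue-block images of the shifts $(j-1)(r/2)^{f-1}$ coincide rather than accumulate — otherwise column $d+1$ could contribute up to a whole column's worth of bits and the claimed bound would break. This reduces to checking that each such shift permutes the residue classes modulo $(r/2)^f$, mapping block $r/2-j+1$ bijectively onto block $0$, together with the finite bookkeeping (as illustrated in Example~\ref{example2}) showing that the handful of indices pushed past $(p-1)\tau$ are precisely extra bits.
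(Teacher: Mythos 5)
Your proposal is correct and follows essentially the same route as the paper's proof: partition the indices by $\ell \bmod (\tfrac{r}{2})^f$ (resp. $(\tfrac{r}{2})^{n+1-f}$), assign each residue block to one check-vector, track the shifted indices $\ell'$ modulo $(\tfrac{r}{2})^f$ to show that columns $i>f$ and the shared column $d+1$ contribute no new downloads while each column $i<f$ contributes $(\tfrac{r}{2}-1)(p-1)(\tfrac{r}{2})^{d-f+i-2}$ new bits, and sum the geometric series to get \eqref{eq:repairband2}, with \eqref{eq:repairband3} by the $f\leftrightarrow n+1-f$ symmetry. The ``main obstacle'' you flag is resolved exactly as you suggest and exactly as the paper does it (the $i=d+1$ case, where the shift $(\tfrac{r}{2}-t)(\tfrac{r}{2})^{f-1}$ maps block $t$ back onto block $0$); note this is not really new relative to the first construction, where information column $k$ plays the same shared role in Theorem~\ref{thmrep}.
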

\begin{proof}
Recall that $d=k+\frac{r}{2}-1$ and there are $(p-1)(\frac{r}{2})^{d-1}$ bits in each column. We can recover all $(p-1)(\frac{r}{2})^{d-1}$ bit in column $f$ by a check-vector, i.e., by \eqref{eq:repair21} or \eqref{eq:repair22} or \eqref{eq:repair23}. In Algorithm \ref{alg:A2}, we use $\frac{r}{2}$ check-vectors to recover a column for repair bandwidth reduction. $(p-1)(\frac{r}{2})^{d-1}$ bits in column $f$ are recovered by $\frac{r}{2}$ check-vectors, where each check-vector recovers $(p-1)(\frac{r}{2})^{d-2}$ bits. When $1\leq f\leq \lceil n/2 \rceil$, the number of bits in each column is a multiple of $(\frac{r}{2})^f$ when $k\geq 4$. The values of $\ell \bmod (\frac{r}{2})^f$ are partitioned into $\frac{r}{2}$ groups each with size $(\frac{r}{2})^{f-1}$, where each group is associated with one check-vector. $(p-1)(\frac{r}{2})^{d-2}$ bits to be recovered are then associated with a group according to the values of their indices divided by $(\frac{r}{2})^{f}$. The idea behind this grouping is that we can employ the cyclic structure in the underlying quotient ring to recover some specific bits by choosing a suitable check-vector such that the number of common bits between the bits downloaded by the chosen $\frac{r}{2}$ check-vectors is maximized. When $f\geq \lceil n/2 \rceil+1$, we can replace $\ell \bmod (\frac{r}{2})^{f}$ with $\ell \bmod (\frac{r}{2})^{n+1-f}$ and apply the similar procedure.
In the following, we first show that Algorithm \ref{alg:A2} can recover all the bits in  column $f$ and then present the repair bandwidth of column $f$ by Algorithm \ref{alg:A2}.

Consider $1\leq f\leq \lceil n/2 \rceil$. By steps 3 and 4 in Algorithm \ref{alg:A2}, the bits $s_{\ell,f}$ are recovered by the first check-vector, i.e., by \eqref{eq:repair21} with $j=1$ when
\begin{equation}
\ell\bmod (\frac{r}{2})^{f}\in\{0,1,\ldots,(\frac{r}{2})^{f-1}-1\} 
\label{eq:p21}
\end{equation}
and $\ell\in\{0,1,\ldots,(p-1)(\frac{r}{2})^{d-1}-1\}$. As $(p-1)(\frac{r}{2})^{d-1}$ is a multiple of $(\frac{r}{2})^{f}$ for $k\geq 4$, $\ell\bmod (\frac{r}{2})^{f}$ is uniform distributed over $\{0,1,\ldots,(\frac{r}{2})^{f}-1\}$. So, the number of bits $s_{\ell,f}$ that are recovered by \eqref{eq:repair21} with $j=1$ is $\frac{(p-1)(\frac{r}{2})^{d-1}}{(\frac{r}{2})^{f}}\cdot (\frac{r}{2})^{f-1}=(p-1)(\frac{r}{2})^{d-2}$. 

Recall that there are $(p-1)(\frac{r}{2})^{d-1}$ bits in column $f$ and $(p-1)(\frac{r}{2})^{d-2}$ bits are recovered by~\eqref{eq:repair21} with $j=1$. Hence, we still need to recover the other $(p-1)(\frac{r}{2})^{d-1}-(p-1)(\frac{r}{2})^{d-2}=(\frac{r}{2}-1)(p-1)(\frac{r}{2})^{d-2}$
bits in column $f$. By steps 5 to 7 in Algorithm \ref{alg:A2}, for $t=1,2,\ldots,\frac{r}{2}-1$, the bits $s_{\ell,f}$ are recovered by \eqref{eq:repair21} with $j=\frac{r}{2}-t+1$ when
\begin{equation}
\ell \bmod (\frac{r}{2})^f \in\{t(\frac{r}{2})^{f-1},t(\frac{r}{2})^{f-1}+1,\ldots,(t+1)(\frac{r}{2})^{f-1}-1\}
\label{eq:p22}
\end{equation}
and $\ell\in\{0,1,\ldots,(p-1)(\frac{r}{2})^{d-1}-1\}$. As $\ell\bmod (\frac{r}{2})^{f}$ is uniform distribution, the number of bits repaired by \eqref{eq:repair21} with $j=\frac{r}{2}-t+1$ is $(p-1)(\frac{r}{2})^{d-2}$. Therefore, the number of bits $s_{\ell,f}$ repaired by \eqref{eq:repair21} for $j=2,3,\ldots,\frac{r}{2}$ is $(\frac{r}{2}-1)(p-1)(\frac{r}{2})^{d-2}$.

We have recovered $(p-1)(\frac{r}{2})^{d-2}+(\frac{r}{2}-1)(p-1)(\frac{r}{2})^{d-2}=(p-1)(\frac{r}{2})^{d-1}$ bits $s_{\ell,f}$ with indices in \eqref{eq:p21} and \eqref{eq:p22}. It can be checked that
\begin{align*}
&\{t(\frac{r}{2})^{f-1},t(\frac{r}{2})^{f-1}+1,\ldots,(t+1)(\frac{r}{2})^{f-1}-1\}\cap \\
&\{t'(\frac{r}{2})^{f-1},t'(\frac{r}{2})^{f-1}+1,\ldots,(t'+1)(\frac{r}{2})^{f-1}-1\}=\emptyset
\end{align*}
for $0\leq t\neq t'\leq \frac{r}{2}-1$. Hence, the indices of the repaired bits in column $f$ are distinct and all $(p-1)(\frac{r}{2})^{d-1}$ bits are recovered by Algorithm \ref{alg:A2} for $1\leq f\leq \lceil n/2 \rceil$. Similarly, when $\lceil n/2 \rceil+1\leq f\leq n$, we can also show that we can recover all the bits in column $f$ by Algorithm~\ref{alg:A2}.

Next, we count the repair bandwidth of column $f$ by Algorithm \ref{alg:A2} when $1\leq f\leq \lceil n/2 \rceil$. By steps 3 to 7 in Algorithm \ref{alg:A2}, $(p-1)(\frac{r}{2})^{d-1}$ bits $s_{\ell,f}$ with indices in \eqref{eq:p21} and \eqref{eq:p22} are recovered by \eqref{eq:repair21} with $j=1$ and $j=\frac{r}{2}-t+1$ respectively, where $d(p-1)(\frac{r}{2})^{d-2}$ bits $s_{\ell,1},\ldots,s_{\ell,f-1},s_{\ell,f+1},\ldots,s_{\ell,d+1}$ with indices in \eqref{eq:p21},
$(\frac{r}{2}-1)(d-1)(p-1)(\frac{r}{2})^{d-2}$ bits $s_{\ell+(\frac{r}{2}-t)(\frac{r}{2})^{f-1}-(\frac{r}{2}-t)(\frac{r}{2})^{i-1},i}$ with $i=1,2,\ldots,f-1,f+1,\ldots,d$ and $(\frac{r}{2}-1)(p-1)(\frac{r}{2})^{d-2}$ bits $s_{\ell+(\frac{r}{2}-t)(\frac{r}{2})^{f-1},d+1}$ are needed with $t=1,2,\ldots,\frac{r}{2}-1$. There are many needed bits that are in common so that the total number of downloaded bits is less.

We first download $d(p-1)(\frac{r}{2})^{d-2}$ bits $s_{\ell,1},\ldots,s_{\ell,f-1},s_{\ell,f+1},\ldots,s_{\ell,d+1}$ with indices in \eqref{eq:p21} in steps 3 and 4. Then, we show that we do not need to download all $(\frac{r}{2}-1)d(p-1)(\frac{r}{2})^{d-2}$ bits $s_{\ell+(\frac{r}{2}-t)(\frac{r}{2})^{f-1}-(\frac{r}{2}-t)(\frac{r}{2})^{i-1},i}$ with $i=1,2,\ldots,f-1,f+1,\ldots,d$ and $s_{\ell+(\frac{r}{2}-t)(\frac{r}{2})^{f-1},d+1}$ in steps 5 to 7, as some of them have been downloaded in steps 3 and 4.

We now consider the needed bits $s_{\ell+(\frac{r}{2}-t)(\frac{r}{2})^{f-1}-(\frac{r}{2}-t)(\frac{r}{2})^{i-1},i}$ with $i=1,2,\ldots,f-1,f+1,\ldots,d$ in steps 5 to 7, where $\ell$ are in \eqref{eq:p22}. We want to show that many bits $s_{\ell+(\frac{r}{2}-t)(\frac{r}{2})^{f-1}-(\frac{r}{2}-t)(\frac{r}{2})^{i-1},i}$ are not necessary to be downloaded, as they are downloaded in steps 3 and 4. Given an $\ell$ that is in \eqref{eq:p22}, where $0\leq \ell \leq (p-1)(\frac{r}{2})^{d-1}-1$, we have that the index of the corresponding needed bit $s_{\ell+(\frac{r}{2}-t)(\frac{r}{2})^{f-1}-(\frac{r}{2}-t)(\frac{r}{2})^{i-1},i}$ is $\ell'=(\ell+(\frac{r}{2}-t)(\frac{r}{2})^{f-1}-(\frac{r}{2}-t)(\frac{r}{2})^{i-1})\bmod p(\frac{r}{2})^{d-1}$. Recall that the bits $s_{\ell,i}$ for $i=1,2,\ldots,f-1,f+1,\ldots,d$ and indices in \eqref{eq:p21} are downloaded in steps 3 and 4. If 
\begin{align*}
&\ell'\bmod (\frac{r}{2})^f=\\
&((\ell+(\frac{r}{2}-t)(\frac{r}{2})^{f-1}-(\frac{r}{2}-t)(\frac{r}{2})^{i-1})\bmod p(\frac{r}{2})^{d-1})\bmod (\frac{r}{2})^f \\
&\in \{0,1,\ldots,(\frac{r}{2})^{f-1}-1\},
\end{align*}
then we do not need to download the bit $s_{\ell',i}$, as it is downloaded in steps 3 and 4. Otherwise, it should be downloaded.

We first consider the bits $s_{\ell',i}$ for $i=1,2,\ldots,f-1$. If $\ell\bmod (\frac{r}{2})^f=t(\frac{r}{2})^{f-1}$, then there exists an integer $m$ such that $\ell=m(\frac{r}{2})^f+t(\frac{r}{2})^{f-1}$. Thus, we have
\begin{align*}
&\ell'\bmod (\frac{r}{2})^f\\
=&((\ell+(\frac{r}{2}-t)(\frac{r}{2})^{f-1}-(\frac{r}{2}-t)(\frac{r}{2})^{i-1})\bmod p(\frac{r}{2})^{d-1})\bmod (\frac{r}{2})^f\\
=&(m(\frac{r}{2})^f+t(\frac{r}{2})^{f-1}+(\frac{r}{2}-t)(\frac{r}{2})^{f-1}-(\frac{r}{2}-t)(\frac{r}{2})^{i-1})\bmod (\frac{r}{2})^f \\
&\text{ as } d-1\geq f \text{ for } k\geq 4 \\
=&(\frac{r}{2})^f-(\frac{r}{2}-t)(\frac{r}{2})^{i-1} \text{ as }  f>i \text{ and } t=1,2,\ldots,\frac{r}{2}-1.
\end{align*}
If $\ell\bmod (\frac{r}{2})^f=(t+1)(\frac{r}{2})^{f-1}-1$, then there exists an integer $m$ such that $\ell=m(\frac{r}{2})^f+(t+1)(\frac{r}{2})^{f-1}-1$. Thus, we have
\begin{align*}
&\ell'\bmod (\frac{r}{2})^f\\
=&((\ell+(\frac{r}{2}-t)(\frac{r}{2})^{f-1}-(\frac{r}{2}-t)(\frac{r}{2})^{i-1})\bmod p(\frac{r}{2})^{d-1})\bmod (\frac{r}{2})^f\\
=&(m(\frac{r}{2})^f+(t+1)(\frac{r}{2})^{f-1}-1+(\frac{r}{2}-t)(\frac{r}{2})^{f-1}-\\
&(\frac{r}{2}-t)(\frac{r}{2})^{i-1})\bmod (\frac{r}{2})^f  \\
=&(\frac{r}{2})^{f-1}-(\frac{r}{2}-t)(\frac{r}{2})^{i-1}-1.
\end{align*}
By repeating the above procedure for $\ell\bmod(\frac{r}{2})^f=t(\frac{r}{2})^{f-1}+1,\ldots,(t+1)(\frac{r}{2})^{f-1}-2$, we can obtain that 
\begin{eqnarray}
\ell'\bmod (\frac{r}{2})^f =& (\frac{r}{2})^{f}-(\frac{r}{2}-t)(\frac{r}{2})^{i-1},\ldots,(\frac{r}{2})^{f}-1,0,\nonumber \\
&1,\ldots,(\frac{r}{2})^{f-1}-(\frac{r}{2}-t)(\frac{r}{2})^{i-1}-1
\label{eq:p23}
\end{eqnarray}
when $\ell\bmod (\frac{r}{2})^f$ runs from $t(\frac{r}{2})^{f-1}$ to $(t+1)(\frac{r}{2})^{f-1}-1$. Thus, the bits $s_{\ell',i}$ with $i=1,2,\ldots,f-1$ and $\ell'$ in the union set of all the values in \eqref{eq:p23} with $t=1,2,\ldots,\frac{r}{2}-1$ are needed. 
When $i=1,2,\ldots,f-1$, the union set of all the values in \eqref{eq:p23} with $t=1,2,\ldots,\frac{r}{2}-1$ is
\begin{eqnarray}
&&\cup_{t=1,2,\ldots,\frac{r}{2}-1}\{(\frac{r}{2})^{f}-(\frac{r}{2}-t)(\frac{r}{2})^{i-1},\ldots,(\frac{r}{2})^{f}-1,0,\nonumber \\
&&1,\ldots,(\frac{r}{2})^{f-1}-(\frac{r}{2}-t)(\frac{r}{2})^{i-1}-1\}\nonumber\\
&=&\{(\frac{r}{2})^{f}-(\frac{r}{2}-1)(\frac{r}{2})^{i-1},\ldots,(\frac{r}{2})^{f}-1,0,\nonumber \\
&&1,\ldots,(\frac{r}{2})^{f-1}-(\frac{r}{2})^{i-1}-1\}.
\label{eq:p5}
\end{eqnarray}
Since $i\leq f-1$, we have $(\frac{r}{2})^{f-1}-(\frac{r}{2})^{i-1}-1<(\frac{r}{2})^{f}-(\frac{r}{2}-1)(\frac{r}{2})^{i-1}$. Hence, the elements in~\eqref{eq:p5} can be rearranged as
\begin{align*}
&\{0,1,\ldots,(\frac{r}{2})^{f-1}-(\frac{r}{2})^{i-1}-1,(\frac{r}{2})^{f}-(\frac{r}{2}-1)(\frac{r}{2})^{i-1},\\
&\ldots,(\frac{r}{2})^{f}-1\}.
\end{align*}
Recall that $(f-1)(p-1)(\frac{r}{2})^{d-2}$ bits $s_{\ell,i}$ for $i=1,2,\ldots,f-1$ with $\ell\bmod (\frac{r}{2})^f\in\{0,1,\ldots,(\frac{r}{2})^{f-1}-1\}$ have already been downloaded in steps 3 and 4. Since $(\frac{r}{2})^{f-1}-(\frac{r}{2})^{i-1}-1<(\frac{r}{2})^{f-1}-1<(\frac{r}{2})^{f}-(\frac{r}{2}-1)(\frac{r}{2})^{i-1}$,
\begin{align*}
&\{(\frac{r}{2})^{f}-(\frac{r}{2}-1)(\frac{r}{2})^{i-1},\ldots,(\frac{r}{2})^{f}-1,0,1,\\
&\ldots,(\frac{r}{2})^{f-1}-(\frac{r}{2})^{i-1}-1\}\setminus \{0,1,\ldots,(\frac{r}{2})^{f-1}-1\}\\
=&\{(\frac{r}{2})^{f}-(\frac{r}{2}-1)(\frac{r}{2})^{i-1},\ldots,(\frac{r}{2})^{f}-1\},
\end{align*}
we thus only need to download $(\frac{r}{2}-1)(p-1)(\frac{r}{2})^{d+i-f-2}$ bits $s_{\ell',i}$ for $i=1,2,\ldots,f-1$ with $\ell'\bmod (\frac{r}{2})^f\in\{(\frac{r}{2})^{f}-(\frac{r}{2}-1)(\frac{r}{2})^{i-1},\ldots,(\frac{r}{2})^{f}-1\}$ in steps 5 to 7.

We then consider the bits $s_{\ell',i}$ for $i=f+1,f+2,\ldots,d$. Recall that $\ell'=(\ell+(\frac{r}{2}-t)(\frac{r}{2})^{f-1}-(\frac{r}{2}-t)(\frac{r}{2})^{i-1})\bmod p(\frac{r}{2})^{d-1}$. Similarly, we can prove that 
\[
\ell'\bmod (\frac{r}{2})^f =0,1,\ldots,(\frac{r}{2})^{f-1}-1
\]
when $\ell \bmod (\frac{r}{2})^f$ runs from $t(\frac{r}{2})^{f-1}$ to $(t+1)(\frac{r}{2})^{f-1}-1$. As all needed bits in this case have already been downloaded in steps 3 and 4, 
we thus do not need to download bits from columns $i$ for $i=f+1,f+2,\ldots,d$ in steps 5 to 7.

In the last, we consider $i=d+1$, i.e., the needed bits $s_{\ell+(\frac{r}{2}-t)(\frac{r}{2})^{f-1},d+1}$ with indices in~\eqref{eq:p22}.
Given an $\ell$ that is in \eqref{eq:p22}, where $0\leq \ell \leq (p-1)(\frac{r}{2})^{d-1}-1$, we have that the index of the corresponding needed bit $s_{\ell+(\frac{r}{2}-t)(\frac{r}{2})^{f-1},d+1}$ is $\ell'=(\ell+(\frac{r}{2}-t)(\frac{r}{2})^{f-1})\bmod p(\frac{r}{2})^{d-1}$.
If $\ell\bmod (\frac{r}{2})^f=t(\frac{r}{2})^{f-1}$, then $\ell=m(\frac{r}{2})^f+t(\frac{r}{2})^{f-1}$, where $m$ is an integer. Thus, we have
\begin{align*}
&\ell'\bmod (\frac{r}{2})^f\\
=&((\ell+(\frac{r}{2}-t)(\frac{r}{2})^{f-1})\bmod p(\frac{r}{2})^{d-1})\bmod (\frac{r}{2})^f\\
=&(m(\frac{r}{2})^f+t(\frac{r}{2})^{f-1}+(\frac{r}{2}-t)(\frac{r}{2})^{f-1})\bmod (\frac{r}{2})^f \\
=&0.
\end{align*}
Hence, we can obtain that 
\[
\ell'\bmod (\frac{r}{2})^f =0,1,\ldots,(\frac{r}{2})^{f-1}-1
\]
when $\ell \bmod (\frac{r}{2})^f$ runs from $t(\frac{r}{2})^{f-1}$ to $(t+1)(\frac{r}{2})^{f-1}-1$. Recall that $(p-1)(\frac{r}{2})^{d-2}$ bits $s_{\ell,d+1}$ with indices in \eqref{eq:p21} have already been downloaded in steps 3 and 4, 
we thus do not need to download bits $s_{\ell,d+1}$ in steps 5 to 7.

Therefore, the total number of bits downloaded from $d=k+\frac{r}{2}-1$ columns to repair column $f$ is 
\begin{align*}
&d(p-1)(\frac{r}{2})^{d-2}+((\frac{r}{2})-1)(p-1)\sum_{i=1}^{f-1}(\frac{r}{2})^{d-f+i-2}\\
=&d(p-1)(\frac{r}{2})^{d-2}+(p-1)((\frac{r}{2})^{d-2}-(\frac{r}{2})^{d-f-1}),
\end{align*}
which is \eqref{eq:repairband2}. 

When $f> \lceil n/2 \rceil$, we can replace $\ell \bmod (\frac{r}{2})^f$ with $\ell \bmod (\frac{r}{2})^{n+1-f}$ and obtain the repair bandwidth of column $f$ that is \eqref{eq:repairband3} with the same argument. This completes the proof.
\end{proof}

Note that the repair bandwidth of column $n+1-f$ is the same as the repair bandwidth of column $f$ for $f=1,2,\ldots,\lceil n/2\rceil$, according to \eqref{eq:repairband2} and \eqref{eq:repairband3}. When $f=1$, the repair bandwidth is $d(p-1)(\frac{r}{2})^{d-2}$, which is weak-optimal. Consider the worst case of $f=\lceil n/2\rceil$, the repair bandwidth is
\begin{align*}
&(p-1)((d+1)(\frac{r}{2})^{d-2}-(\frac{r}{2})^{d-\lceil n/2\rceil-1})\\
&< (p-1)(d+1)(\frac{r}{2})^{d-2},
\end{align*}
which is strictly less than $\frac{d+1}{d}$ times of the value in \eqref{optimal_repair}. The repair bandwidth of any one failure can achieve the weak-optimal repair in \eqref{optimal_repair} asymptotically when $d$ is large enough.

\section{Conclusion}

\label{sec:discussions}
We present an approach of designing binary MDS array code over a specific binary cyclic ring. Two constructions based on the coding approach by choosing the well-designed encoding matrix and parity matrix respectively are proposed. Two efficient repair methods for any one information failure of the first construction and for any one failure of the second construction are designed. We also show that the repair bandwidth of these two repair methods is asymptotically weak-optimal. 



Note that the repair bandwidth of the first construction is asymptotically weak-optimal only for a single information failure instead of for a parity failure. A generic transformation given in~\cite{tian2017generic} that can be applied on non-binary MDS codes to produce new MDS codes such that some nodes of the new MDS codes have weak-optimal repair bandwidth. A future work is to modify the transformation method in \cite{tian2017generic} and to employ it to enable weak-optimal repair bandwidth for parity column, while the asymptotically weak-optimal repair bandwidth of information column is still maintained.

In some practical applications, the system might want to repair more than one information columns simultaneously. Another interesting future work is the design of efficient repair algorithm when more than one information columns fail. 


\appendices

\bibliographystyle{IEEEtran}


\end{document}